\date{}
\newcommand{\onlyShort}[1]{\ifthenelse{\boolean{short}}{#1}{}}
\newcommand{\onlyLong}[1]{\ifthenelse{\boolean{short}}{}{#1}}
\newcommand{\shortLong}[2]{\ifthenelse{\boolean{short}}{#2}{#1}}
\newcommand{\longShort}[2]{\ifthenelse{\boolean{short}}{#2}{#1}} 
\newcommand{\para}[1]{\vspace{0.2em}\noindent\textbf{#1.}~}
\newcommand{\case}[1]{\vspace{0.2em}\noindent\textbf{#1:}~}
\newcommand{\guessing}{\mathsf{Guessing}\xspace}
\newcommand{\ID}{\mathsf{id}\xspace}
\newcommand{\correct}{\mathsf{Correct}\xspace}
\newcommand{\rtime}{\mathsf{Time}\xspace}
\newcommand{\bad}{\mathsf{Bad}\xspace}
\newcommand{\sLR}{\mathsf{LR}}
\newcommand{\vol}{\mathsf{Vol}\xspace}
\newcommand{\lo}{lo\xspace}
\newcommand{\hi}{hi\xspace}
\newcommand{\cA}{\mathcal{A}\xspace}
\newcommand{\random}{\mathsf{Random}\xspace}
\newcommand{\ra}{\rightarrow}
\newtheorem*{rep@theorem}{\rep@title}
\newcommand{\newreptheorem}[2]{%
	\newenvironment{rep#1}[1]{%
		\def\rep@title{#2 \ref{##1}}%
		\begin{rep@theorem}}%
		{\end{rep@theorem}}}
\renewcommand{\le}{\leqslant}
\renewcommand{\ge}{\geqslant}
\renewcommand{\geq}{\geqslant}
\renewcommand{\leq}{\leqslant}
\newcommand{\suchthat}{\bigm|}
\newtheorem{theorem}{Theorem}
\newtheorem{observation}[theorem]{Observation}
\newtheorem{lemma}[theorem]{Lemma}
\newtheorem{corollary}[theorem]{Corollary}
\newtheorem{claim}[theorem]{Claim}
\newtheorem{definition}[theorem]{Definition}
\let\originalleft\left
\let\originalright\right
\renewcommand{\left}{\mathopen{}\mathclose\bgroup\originalleft}
\renewcommand{\right}{\aftergroup\egroup\originalright}
\newcommand{\Prob}[1]{\mathrm{Pr}\left[#1\right]\xspace}
\newcommand{\expect}[1]{\mathbb{E}\left[#1\right]\xspace}
\def\BibTeX{{\rm B\kern-.05em{\sc i\kern-.025em b}\kern-.08em
    T\kern-.1667em\lower.7ex\hbox{E}\kern-.125emX}}
\begin{document}

\title{Slow links, fast links, and the cost of gossip }

\author{
Suman Sourav \\
{\em National University of Singapore}\\
\textsf{sourav@comp.nus.edu.sg}
\and
Peter Robinson\\
{\em Royal Holloway, University of London}\\
\textsf{peter.robinson@rhul.ac.uk}
\and
Seth Gilbert \\
{\em National University of Singapore}\\
\textsf{seth.gilbert@comp.nus.edu.sg}
}
\maketitle

\begin{abstract}
Consider the classical problem of information dissemination: one (or more) nodes in a network have some information that they want to distribute to the remainder of the network.  In this paper, we study the cost of information dissemination in networks where edges have latencies, i.e., sending a message from one node to another takes some amount of time.  We first generalize the idea of conductance to weighted graphs by defining $\phi_*$ to be the ``critical conductance'' and $\ell_*$ to be the ``critical latency''. %
One goal of this paper is to argue that $\phi_*$ %
characterizes the connectivity of a weighted graph with latencies in much the same way that conductance characterizes the connectivity of unweighted graphs. %

We give near tight lower and upper bounds on the problem of information dissemination, up to polylogarithmic factors.  Specifically, we show that in a graph with (weighted) diameter $D$ (with latencies as weights) and maximum degree $\Delta$, any information dissemination algorithm requires at least $\Omega(\min(D+\Delta, \ell_*/\phi_*))$ time %
in the worst case. We show several variants of the lower bound (e.g., for graphs with small diameter, graphs with small max-degree, etc.) by reduction to a simple combinatorial game.  

We then give nearly matching algorithms, showing that information dissemination can be solved in $O(\min((D+\Delta)\log^3{n}, (\ell_*/\phi_*)\log n)$ time. %
This is achieved by combining two cases. %
We show that the classical push-pull algorithm is (near) optimal when the diameter or the maximum degree is large.  For the case where the diameter and the maximum degree are small, we give an alternative strategy in which we first discover the latencies and then use an algorithm for known latencies based on a weighted spanner construction. (Our algorithms are within polylogarithmic factors of being tight both for known and unknown latencies.)

While it is easiest to express our bounds in terms of $\phi_*$ and $\ell_*$, in some cases they do not provide the most convenient definition of conductance in weighted graphs. Therefore we give a second (nearly) equivalent characterization, namely the average conductance $\phi_{avg}$. \hspace{-7mm}

\end{abstract}

\newpage
\section{Introduction}

Consider the problem of disseminating information in a large-scale distributed system: a source node in the network has some information that it wants to share/aggregate/reconcile with others. This fundamental problem has been widely studied under various names, e.g., information dissemination (e.g., \cite{Censor-Hillel:2012:GCP:2213977.2214064}), rumor spreading (e.g., \cite{Chierichetti:2010:RSG:1873601.1873736}), global broadcast (e.g., \cite{Haeupler:2013:SFD:2627817.2627868}), one-to-all multicast, information spreading (e.g., \cite{weak_conductance}), and {gossip} (e.g., \cite{Haeupler:2014:OGD:2611462.2611489}).  

Real world network communication often has a time delay, which we model here as edges with latencies. The \emph{latency} of an edge captures how long communication takes, i.e., how many rounds it takes for two neighbors to exchange information. Low latency on links imply faster message transmission whereas higher latency implies longer delays.

In the case of unweighted graphs, all edges are considered the same and are said to have unit latencies. However, this is not true in real life and link latencies can vary greatly.  
In fact, even if nodes are connected directly it might not be the fastest route for communication due to large latency of the link (which might arise due to poor connection quality, hardware or software restrictions etc.); often choosing a multi-hop lower latency path leads to faster distribution of information.

For unweighted graphs (without latencies), there exists a significant amount of literature, characterizing the connectivity of a graph (referred as the conductance of a graph) which exactly indicates how efficient information dissemination will be. We would like to do the same for graphs with latencies, however, due to the presence of latencies, not all edges can be regarded as the same; and therefore connectivity alone is no longer enough.
The usual notion of (unweighted) conductance no longer characterizes the efficiency (or bottleneck) of communication in a graph with latencies.\footnote{Notice you might model an edge with weight $w$ as a path of $w$ edges with weight $1$. If you calculate the conductance of the resulting graph, you do not get a good characterization of the connectivity of the original graph for a few different reasons. For instance, consider the ability of the imaginary nodes on the edge to pull data from the endpoints.}  Thus, we introduce a new notion of the critical weighted conductance $\phi_*$ that generalizes the notion of classical conductance. 
Using $\phi_*$, we give nearly tight lower and upper bounds for information dissemination. 

For some cases, $\phi_*$ might not be the most convenient definition of conductance in weighted graphs. Alternatively, we give a (nearly) equivalent characterization, namely the average weighted conductance $\phi_{avg}$.

\para{Model} We model the network as a connected, undirected graph $G = (V,E)$ with $n = |V|$ nodes.  Each node knows the identities of its neighbors and a polynomial upper bound on the size of the network. Nodes communicate bidirectionally over the graph edges, and communication proceeds in synchronous rounds. An
edge is said to be activated whenever a node sends any message over the edge.
Latencies occur in the communication channel and not on the nodes. For simplicity, we assume that each edge latency is an integer.  (If not, latencies can be scaled and rounded to the nearest integer.) Also, the edge latencies here are symmetric. Problems for non-symmetric arbitrarily large latencies are at least as hard as directed unweighted networks (for which many tasks are impossible to achieve efficiently). Let $D$ be the (weighted) diameter of the graph (with latencies as weights), and let $\ell_\text{max}$ be the maximum edge latency.  We consider both cases where nodes know the latencies of adjacent edges (Section \ref{sec:known}) and cases where nodes do not know the latencies of adjacent edges (the rest of the paper).  Nodes do not know $D$ or $\ell_\text{max}$.\footnote{In real world settings, nodes are often aware of their neighbors.  However, due to fluctuations in network quality (and hence latency), a node cannot necessarily predict the latency of a connection.} 

In each round, each node can choose one neighbor to exchange information with: it sends a message to that neighbor and (automatically) receives a response.\footnote{Notice that this model of communication is essentially equivalent to the traditional \emph{push-pull} where each node can either push data to a neighbor or pull data from a neighbor; here we assume a node always does both simultaneously.  Without the ability to pull data, it is easy to see that information exchange takes $\Omega(nD)$ time, e.g., in a star. Simple flooding matches this lower bound.} If the edge has latency $\ell$, then this round-trip exchange takes time $\ell$. {This model is, within constant factors, equivalent to a more standard model in which a round-trip involves first sending a message with latency $\ell$, receiving it at the other end, and then sending a response at a cost of latency $\ell$.} 
Notice that each node can initiate a new exchange in every round, even if previous messages have not yet been delivered, i.e., communication is non-blocking.

\para{Information dissemination}  %
Here, we mainly focus on \emph{one-to-all information dissemination}. 
A designated source node begins with a message (the \emph{rumor}) and, when the protocol completes, every node should have received the message.

Classic examples include distributed database replication, sensor network data aggregation, and P2P publish-subscribe systems. This fundamental problem has been widely studied under various names, e.g., information dissemination (e.g., \cite{Censor-Hillel:2012:GCP:2213977.2214064}), rumor spreading (e.g., \cite{Chierichetti:2010:RSG:1873601.1873736}), global broadcast (e.g., \cite{Haeupler:2013:SFD:2627817.2627868}), one-to-all multicast, and information spreading (e.g., \cite{weak_conductance}). 
 As a building block, we look at \emph{local broadcast}, i.e., the problem of every node distributing a message to all of its neighbors. 
 
\para{Conductance in weighted graphs}
Our goal in this paper is to determine how long it takes to disseminate information in a graph with latencies. Clearly the running time will depend on the (weighted) diameter $D$ of the graph. %
Typically, such algorithms also depend on how well connected the graph is, and this is normally captured by the conductance $\phi$.  Unfortunately, conductance is no longer a good indicator of connectivity in a graph with latencies, as slow edges (with large weights) are much worse than fast edges.

We begin by generalizing the idea of conductance to weighted graphs.  We give two (nearly) equivalent definitions of conductance in weighted graphs, which we refer to as the critical weighted conductance $\phi_*$ (Definition \ref{def:weightedcond}) and the average weighted conductance $\phi_{avg}$ (Definition \ref{def:avgweightedcond}).  While they give (approximately) the same value for every graph, there are times when one definition is more convenient than the other.  In fact, we show that the values of $\phi_*$ and $\phi_{avg}$ are closely related; as in $\frac{\phi_*}{2\ell_*} < \phi_{avg} < \frac{\phi_*}{\ell_*}\lceil\log (\ell_{max})\rceil $ (c.f. Theorem \ref{thm:conductance}).  We compare these definitions further in Section \ref{subsec:comparision}.  We use $\phi_*$ in determining the lower and upper bounds for information dissemination as it makes our analysis simpler and then use the above relation to determine the bounds for $\phi_{avg}$.

A core goal of this paper is to argue that the notion of $\phi_*$ (and $\phi_{avg}$) defined herein well captures the connectivity of weighted graphs, and may be useful for understanding the performance of other algorithms.

\para{Lower bounds}
These constitute some of the key technical contributions of this paper.
For a graph $G$, with diameter $D$, maximum degree $\Delta$, critical weighted conductance $\phi_*$, and critical latency $\ell_*$, we show that any information dissemination algorithm requires $\Omega(\min(D+\Delta, {\ell_*}/{\phi_*}))$ rounds. That is, in the worst case it may take time $D+\Delta$ to distribute information. However, if the graph is well connected, then we may do better and the time is characterized by the critical weighted conductance.  We show that this lower bound holds even in various special cases, e.g., for graphs with small diameter, or with small max-degree, etc. By the relation provided in Theorem \ref{thm:conductance}, we determine the lower bound in terms of average weighted conductance as $\Omega(\min(D+\Delta, {1}/{\phi_{avg}}))$. 

The main technique we use for showing our lower bounds is a reduction to a simpler combinatorial guessing game.  (See~\cite{DBLP:conf/wdag/Newport14} for a demonstration of how other variants of guessing games can be used to prove lower bounds for radio networks.)  We first show that the guessing game itself takes a large number of rounds. Thereafter we reduce the problem of solving the game to that of solving information dissemination via a simulation.

\para{Upper bounds}
We then show nearly matching upper bounds, i.e., algorithms for solving information dissemination. In this regard, we differentiate our model into two cases. For the case where nodes are not aware of the adjacent edge latencies, we show that the classical push-pull random phone call algorithm~\cite{Karp} in which each node initiates a connection with a randomly chosen neighbor in each round, completes in $O((\ell_* /\phi_*)\log n)$ rounds. By using the relationship between $\phi_*$ and $\phi_{avg}$, we give a $O((\log(\ell_{max}) /\phi_{avg}) \log n )$ upper bound in terms of $\phi_{avg}$.

For the case where nodes do know the latencies of the incident edges, we obtain nearly tight bounds that are independent of $\Delta$ and $\phi_*$: we give a $O(D \log^3{n})$-time algorithm (which is within polylogarithmic factors of the trivial $\Omega(D)$ lower bound). The key idea of the algorithm is to build a (weighted) spanner (based on that in \cite{baswana}). This spanner is then used to distribute information. {This algorithm, however, requires knowledge of a polynomial upper bound on $n$; hence for completeness we also provide an alternate algorithm in Section} \ref{app:altalgo} that
does not require the knowledge of $n$ but takes an additional $\log D$ factor (instead of $\log n$), making it unsuitable for graphs with large diameters.

Finally, we observe that we can always discover the latencies of the ``important'' adjacent edges in $\tilde O(D+\Delta)$ time\footnote{The notation $\tilde O$ hides polylogarithmic factors, which arise due to $D$ and $\Delta$ being unknown.}, after which we can use the algorithm that works when latencies are known. Hence, even if latencies are unknown, combining the various algorithms, we can always solve the information dissemination in $O(\min((D + \Delta)\log^3{n}, (\ell_*/\phi_*)\log n)$ time (or $O(\min((D + \Delta)\log^3{n}, ( \log(\ell_{max}) /\phi_{avg})\log n)$ time), matching the lower bounds up to polylogarithmic factors (with respect to the critical weighted conductance).

\noindent\textbf{Summary of our contributions.}
To the best of our knowledge, this work provides a first ever characterization of conductance in graphs with latencies. In this regard, we provide two different parameters namely $\phi_*$ and $\phi_{avg}$. Note that, we provide the summary here only in terms of $\phi_*$, however, for each case there exists an alternate version in terms of $\phi_{avg}$.

For lower bounds, we show that there exists graphs with
\begin{compactenum}
\item[(a)] $O(\log n)$ diameter with maximum degree $\Delta$ where local broadcast requires $\Omega(\Delta)$ rounds;
\item[(b)] $O(\ell_*)$ diameter with critical weighted conductance $\phi_*$ where local broadcast requires $\Omega(1/\phi_* +\ell_*)$ rounds;
\item[(c)] $\Theta(1/\phi_*)$ diameter where information dissemination requires $\Omega(\min(D+\Delta, {\ell_*}/{\phi_*}))$ rounds; showing the trade-off among the various parameters affecting information dissemination.
\end{compactenum}
For upper bounds on information dissemination, we show that
\begin{compactenum}
\item[(d)] the push-pull algorithm takes $O(\ell_* \log(n)/\phi_*)$ rounds;
\item[(e)] if nodes are aware of an upper bound on $n$, there exists a spanner-based algorithm that solves information dissemination in $O((D+\Delta)\log^3 n)$ rounds.%
\item[(f)] {if nodes do not know an upper bound on $n$, there exists a pattern-based algorithm that solves information dissemination in $O((D+\Delta)\log^2 n \log D)$.} %
\end{compactenum}

We view our results as a step towards a more accurate characterization of connectivity in networks with delays and we believe that the metrics $\phi_*$ and $\phi_{avg}$ can prove useful in solving other graph problems.

\para{Prior work} 
There is a long history studying the time and message complexity of disseminating information \emph{when all the links have the same latency}.  It is interesting to contrast what can be achieved in the weighted case with what can be achieved in the unweighted case. 

The classic model for studying information dissemination is the \emph{random phone call model}, introduced by~\cite{Demers:1987:EAR:41840.41841}: in each round, each node communicates with a single randomly selected neighbor; if it knows the rumor, then it ``pushes'' the information to its neighbor; if it does not know the rumor, then it ``pulls'' it from its neighbor (see, e.g.,\cite{firege},\cite{Kempe:2001:SGR:380752.380796}, \cite{stacs2011_conductance}). 

An important special case is when the graph is a clique: any pair of nodes can communicate directly. 
In a seminal paper, Karp et al.\cite{Karp} show that a rumor can be disseminated in a complete graph in $O(\log{n})$ rounds with $O(n\log\log{n})$ message complexity. Fraigniaud and Giakkoupis \cite{Fraigniaud:2010:BCC:1810479.1810505} show how to simultaneously achieve optimal communication complexity (except for extremely small rumor sizes). 

When the graph is not a clique, the performance of the classical push-pull protocol, wherein a node exchanges information with a random neighbor in each round, typically depends on the topology of the graph, specifically, how well connected the graph is. An exciting sequence of papers (see~\cite{Mosk-Aoyama:2006:CSF:1146381.1146401, Chierichetti:2010:RSG:1873601.1873736, Chierichetti:2010:ATB:1806689.1806745,stacs2011_conductance}  and references therein) eventually showed that rumor spreading in this manner takes time $O(\frac{\log n}{\phi})$, where $\phi$ is the conductance of the graph.

The question that remained open was whether a more careful choice of neighbors lead to faster information dissemination.  In a breakthrough result, Censor-Hillel et al.~\cite{Censor-Hillel:2012:GCP:2213977.2214064} gave a randomized algorithm for solving information dissemination in any (unweighted) graph in time $O(D + \polylog{n})$, where $D$ here is the non-weighted diameter of the graph. Of note, the protocol has no dependence on the conductance of the graph but only on the diameter (which is unavoidable). 

There were two key ingredients to their solution: first, they gave a ``local broadcast'' protocol where each node exchanges information with \emph{all} its neighbors in $O(\log^3{n})$ time; second,  as a by-product of this protocol they obtain a spanner which they use in conjunction with a simulator (defined therein) to achieve information dissemination in $O(D + \polylog{n})$ time. Haeupler \cite{Haeupler:2013:SFD:2627817.2627868} then showed how local broadcast could be achieved in $O(\log^2{n})$ time using a simple deterministic algorithm.

The conclusion, then, is that in an unweighted graph (with unit latency edges), information dissemination can be achieved in time $O(D + \polylog{n})$ or in time $O(\log ({n})/\phi)$.

\para{Other related works} 
The problem has been well researched in several other settings as well. For graphs modeling social networks Doerr et al. \cite{Doerr:2011:SNS:1993636.1993640, Doerr:2012:WRS:2184319.2184338} show a $\Theta(\log n)$ time bound for solving broadcast. For the case of direct addressing, %
Haeupler and Malkhi \cite{Haeupler:2014:OGD:2611462.2611489} show that  broadcast can be performed optimally in $O(\log \log{n})$ rounds. Information dissemination has been studied in random geometric graphs by Bradonji{\'c} et al. \cite{Bradonjic:2010:EBR:1873601.1873715}, in wireless sensor networks networks by Boyd et al.\cite{Boyd:2006:RGA:1148663.1148679} and Farach-Colton et al. \cite{FARACHCOLTON201360}, in mobile adhoc networks by Fernandez-Anta et al. \cite{FernandezAnta2012} and in dynamic graphs by Sarwate and Dimakis \cite{5062132}, Gandhi et al. \cite{4447471}, and Giakkoupis et al. \cite{Giakkoupis2014}.

\section{Weighted Conductance}

In this section, we consider two different approaches to characterize conductance in weighted graphs, namely, the critical weighted conductance and the average weighted conductance, and we study how these notions relate to each other. In the sections that follow, we focus on the critical weighted conductance for determining the bounds on information dissemination. We obtain corresponding bounds for the average weighted conductance by applying Theorem~\ref{thm:conductance}.

Conductance, in general, is a characterization of the ``bottleneck in communication'' of a graph. {In standard network models, communication or spreading of information can be done faster if the graph is well-connected.} For unweighted graphs, the only bottleneck in communication can be the connectivity of the graph, however, for weighted graphs the bottleneck can arise either due to the graph connectivity or due to the edge latency (even if the nodes are directly connected by a slow edge, there might exist a different multi-hop faster path). Our aim is to capture both aspects of this bottleneck in communication. 

Having good connectivity facilitates faster communication whereas large latencies result in slow-downs.{ Even if a graph is quite well-connected, if most of its edges are slow edges, communication will be slow.}  Ideally, we would want the best connectivity along with the least slowdown for faster communication. We obtain the definition of $\phi_*$ by directly optimizing these orthogonal parameters. 
The connectivity that maximizes this ratio is defined as the critical weighted conductance $\phi_*$ and the corresponding latency is defined as the critical latency $\ell_*$. {In other words, $\phi_*$ captures the critical threshold for which the graph has the best possible connectivity with the least possible slowdown.} %

The definition of the average weighted conductance $\phi_{avg}$ is inspired by the classical notion of conductance. Each cut edge's contribution towards the overall connectivity is normalized by dividing it with its latency (rounded to the upper bound of its latency class), so as to account for the slow-down. 
{Instead of considering the critical threshold, the slowdown here is characterized by the weighted average over the individual slow-down caused by each cut edge.}

\subsection{Critical Weighted Conductance}
\label{sec:unknown_latencies}

We now define the critical weighted conductance of a graph, generalizing the classical notion of conductance.  For a given graph $G = (V,E)$, and for a set of edges $S \subseteq E$, we define $E_\ell(S)$ to be the subset of edges of $S$ that have latency $\le \ell$.  For a set of nodes $U \subseteq V$ and cut $C=(U, V\setminus U)$, we define $E_\ell(C)$ to be the subset of edges across the cut $C$ with latency $\le \ell$, and we define the volume $\vol(U) = \sum_{v\in U}deg_v$, where $deg_v$ is the degree of the node $v$. 

We first define the critical weighted conductance of a cut for a given latency $\ell$, and then define the weight-$\ell$ conductance as the minimum critical weighted conductance across all cuts. 
\begin{definition}[Weight-$\ell$ Conductance]  \label{def:dweightedcond}
	Consider a graph $G = (V,E)$. For any cut $C$ in the set of all possible cuts ($\tilde{C}$) of the graph $G$ and an integer $\ell$, we define 
	\begin{align*}
		\phi_\ell(C) = \frac{|E_\ell(C)|}{\min\{\vol(U),\vol(V\setminus U)\}}\ .
	\end{align*}
	The \emph{weight-$\ell$ conductance} is given by $\phi_\ell(G) = \min \{ \phi_{\ell}(C) \mid {C \in \tilde{C}} \}$.
\end{definition}

\begin{definition}[Critical Weighted Conductance]
	\label{def:weightedcond}
	We define the \emph{critical weighted conductance $\phi_*(G)$} as 
	
\[ \phi_*(G) =  \phi_\ell(G) \mathrel{}\biggr\rvert\mathrel{} \frac{\phi_\ell(G)}{\ell} \text{ is maximum for any } \ell \in (1,\ell_{max}). 
\]	

\noindent We call $\ell_*$ the \emph{critical latency} for $G$ if $\ell_* = \ell$ and $\phi_*(G) = \phi_\ell(G)$. \end{definition}
We simply write $\phi_*$ (or $\phi_\ell$) instead of $\phi_*(G)$ (or $\phi_\ell(G)$) when graph $G$ is clear from the context.
If all edges have latency $1$, then $\phi_*$ is exactly equal to the classical graph conductance~\cite{Jerrum:1988:CRM:62212.62234}. 

\subsection{Average Weighted Conductance} 

For a given graph $G = (V,E)$, we first define $\lceil\log (\ell_{max})\rceil$ different latency classes, where the first class contains all the edges of latency $\leq 2$ and the subsequent $i^{th}$ latency class consists of all the edges in the latency range of $(2^{i-1},2^i]$. 
For a set of nodes $U \subseteq V$ and the cut $C=(U, V\setminus U)$, we define $k_i(C)$ to be the subset of edges across the cut $C$ belonging to latency class $i$ (i.e. all cut edges of latency $>2^{i-1}$ and $\leq 2^i$).

For a cut $C$, we first define the average cut conductance  as $\phi_{avg}(C)$, and then define the average weighted conductance as the minimum average cut conductance across all cuts. 
\begin{definition}[Average Cut Conductance]  \label{def:avgcutcond}
  Consider a graph $G = (V,E)$, a set of nodes $U \subseteq V$ and the cut $C = (U, V\setminus U)$. Let $S$ be the $\min\{\vol(U),\vol(V\setminus U)\}$. %
\begin{align*}
  \phi_{avg}(C) = \frac{1}{S} \sum\limits_{i=1}^{\lceil\log (\ell_{max})\rceil} \frac{|k_i(C)|}{2^i}
\end{align*}

\end{definition}

\begin{definition}[Average Weighted Conductance]  \label{def:avgweightedcond}
Let $\tilde{C}$ be the set of all possible cuts of the graph $G$.
We define the \emph{average weighted conductance} as  $\phi_{avg}(G) = \min \{ \phi_{avg}(C) \mid {C \in \tilde{C}} \}$.
\end{definition}
We simply write $\phi_{avg}$ instead of $\phi_{avg}(G)$ when graph $G$ is clear from the context.
If all edges have latency $1$, then $\phi_{avg}$ is exactly half of the value of the classical graph conductance.%

\subsection{Comparing $\phi_*$ and $\phi_{avg}$} \label{subsec:comparision}

Surprisingly, we see that $\phi_*$ and $\phi_{avg}$ are closely related and to show the relationship between them, we first define $\mathcal{L}$ as the number of non-empty latency classes in the given graph $G$. 
Latency class $i$ is said to be non-empty if there is at least one edge in the graph $G$ that has a latency $>2^{i-1}$ and $\leq 2^i$. The maximum value that $\mathcal{L}$ can take is $\lceil\log (\ell_{max})\rceil$ which is the total number of possible latency classes.

\begin{theorem} \label{thm:conductance}
\[ \frac{\phi_*}{2\ell_*} \leq \phi_{avg} \leq \mathcal{L}\frac{\phi_*}{\ell_*} \leq \lceil\log (\ell_{max})\rceil \frac{\phi_*}{\ell_*}. \]
\end{theorem}

\begin{proof}
Consider any weighted graph $G$ that has critical weighted conductance $\phi_*$ and critical latency as $\ell_*$. We first show the upper bound. Let $C$ be the cut from which $\phi_*$ was obtained and let $S$ be the minimum volume among either side of the cut. 
By the definition of weight-$\ell$ conductance, $\phi_{2^i}(C) = ({\sum_{j=1}^{i}|k_j(C)|})/{S} $
\[ \Rightarrow \frac{\phi_{2^i}(C)}{2^i} = \frac{\sum_{j=1}^{i}|k_j(C)|/2^i}{S} \geq \frac{|k_i(C)|/2^i}{S} \]

and from the definition of $\phi_*$, we know that $\frac{\phi_*}{\ell_*}$ is $\geq \frac{\phi_\ell}{\ell}$  for any $\ell$, which implies
\[ \forall i \in (1,\lceil\log (\ell_{max})\rceil) \hspace{.4cm} \frac{\phi_*}{\ell_*} \geq \frac{\phi_{2^i}(C)}{2^i} \geq \frac{|k_i(C)|/2^i}{S}    \]

Note that, in the definition of $\phi_{avg}$, the terms corresponding to the empty latency classes becomes zero.
We replace each remaining term in the definition of $\phi_{avg}(C)$ by $\frac{\phi_*}{\ell_*}$ and using the above inequality, we get $\phi_{avg}(C) \leq \frac{\phi_*}{\ell_*} \mathcal{L}$. Combining with the fact that $\phi_{avg}$ is the minimum average cut conductance, we obtain %
\begin{equation}\label{eq:cutC}
\phi_{avg} \leq \phi_{avg}(C) \leq \mathcal{L} \frac{\phi_*}{\ell_*}  \leq \lceil\log (\ell_{max})\rceil \frac{\phi_*}{\ell_*}
\end{equation}

Next we show the lower bound, for this we consider the cut $C'$ that determines $\phi_{avg}$ and let $S'$ be the minimum volume among either side of the cut. On this cut $C'$ consider the latency class of the critical latency $\ell_*$; say $\ell_*$ lies in the latency class $x$, which implies that $2^{x-1} < \ell_* \leq 2^x$. From the definition of weight-$\ell$ conductance, we get 
\[\frac{\phi_{\ell_*} (C')}{2\ell_*} \leq \frac{|k_1(C')|+|k_2(C')|+ \dots + |k_x(C')|}{2\ell_* S'} .\]
Rewriting $\phi_{avg}$ as (from definition)
\[ \phi_{avg} = \frac{|k_1(C')|}{2S'}+\frac{|k_2(C')|}{2^2S'}+ \dots + \frac{|k_{\lceil\log (\ell_{max})\rceil}(C')|}{2^{\lceil\log (\ell_{max})\rceil} S'}, \]
and comparing the first $x$ terms of $\phi_{avg}$ to that of  ${\phi_{\ell_*} (C')}/{2\ell_*}$, we observe that each term in the expression of $\phi_{avg}$ is at least as large as the corresponding term in the above upper bound on  ${\phi_{\ell_*} (C')}/{2\ell_*}$. Also there are some additional positive terms in $\phi_{avg}$. Combining this with the fact that ${\phi_{\ell_*}}/{2\ell_*} \leq {\phi_{\ell_*} (C')}/{2\ell_*}$ (as by definition $\phi_{\ell}$ is chosen as the minimum value among all possible cuts), we obtain  %
\begin{equation}\label{eq:cutC'}
\frac{\phi_{\ell_*}}{2\ell_*} \leq \frac{\phi_{\ell_*} (C')}{2\ell_*} \leq \phi_{avg} 
\end{equation}

This proves the lower bound and completes the proof.
\end{proof}

\section{Lower Bounds}

We proceed to lower bound the time for completing information dissemination.  The main goal of this section (as found in Theorems~\ref{thm:generalLBCAST}, \ref{thm:generalConductance}, and \ref{thm:PUSHPULLOptimal}) is to show that every gossip algorithm requires $\Omega\left(\min\left\{ \Delta + \D, {\ell_*}/{\phi_*} \right\}\right)$ on graphs with diameter $D$, max-degree $\Delta$, critical weighted conductance $\phi_*$, and critical latency $\ell_*$. %
Throughout this section, we assume that nodes do not know the latencies of their adjacent links ({when nodes do know the latencies, the trivial lower bound of $\Omega(D)$ is sufficient}).

We begin by defining a combinatorial guessing game (a similar approach as in~\cite{DBLP:conf/wdag/Newport14}) and show a lower bound for it.\footnote{The results of \cite{DBLP:conf/wdag/Newport14} do not apply directly to our setting, as their ``proposal set'' of the player must intersect the target set in exactly $1$ element. By contrast, the guessing game here requires us to discover sufficiently many target elements such that every element in the target set occurs at least once.}
We then construct several different worst-case graphs and reduce the guessing game to solving information dissemination on these graphs, thereby showing our lower bound.

\subsection{The Guessing Game} \label{sec:guessing}

We define a guessing game played by Alice against an oracle.  Conceptually, the game is played on a bipartite graph of $2m$ nodes. The oracle selects a subset of the edges as the target.  In each round, Alice guesses a set of at most $2m$ edges, and the oracle reveals any target edges that have been hit. At the same time, if any edge $(u,v)$ in the target set is guessed by Alice, then all adjacent edges $(x,v)$ in the target set are removed from the target set.

\noindent\textbf{Formal Definition:}
Fix an integer $m$.  Let $A$ and $B$ be two disjoint sets of $m$ integers each, i.e., the left and right group of nodes in the bipartite graph.  The winning condition of the game depends on a predicate $P$, which returns a subset of edges from $A \times B$.  For example, $P=\random_p$ returns a subset $T$ that contains elements of $A \times B$, where each element is chosen with probability $p$ or discarded with probability $1-p$.  

We now define the game $\guessing(2m,P)$, which begins when Alice receives two disjoint sets $A$ and $B$.
The oracle chooses a \emph{target set} $T_1 \subseteq A\times B$ returned by the predicate. 
{We use then notation $T^A$ to refer to the projection of $T$ onto set $A$ and define $T^B$ similarly.}
Throughout, we assume that Alice has access to a source of unbiased random bits. 
Alice's goal is to eliminate all the elements in the target set.  In each round $r\ge 1$, Alice submits a set $X_r \subseteq A\times B$ of size at most $2m$ as her \emph{round $r$ guesses} to the oracle.
The oracle replies by revealing the items she guessed correctly, i.e.,  $X_r \cap T_r$.  The oracle then computes the \emph{round $r+1$ target set} $T_{r+1}$ by removing the items that Alice hit, i.e., all the items in $T_r$ that have the same $B$-component as an item in $X_r \cap T_r$: 
\begin{align} \label{eq:update}
	T_{r+1} = T_r \setminus \left(T_r^A\times \left(T_r^B \cap X_r^B \right)\right). %
\end{align} 
This concludes round $r$ and the next round begins.

\noindent\textbf{Winning Condition:}
The game is solved in the first round $r'$, where Alice's guesses result in an empty target set; at this point, the oracle answers \texttt{halt}. 
In other words, the game ends in round $r'$ if, for every $b \in T_1^B$, there was some $a' \in A$ such that $(a',b) \in X_r \cap T_r$, in some $r \in [1,r']$.
Alice's aim is to minimize the number of rounds until the target set becomes empty.
We say that a \emph{protocol $\Pi$ solves $\guessing(2m,P)$ with probability $1 - \epsilon$ in $r$ rounds}, if $\Pi$ always terminates within $r$ rounds, and $T_{r+1}=\emptyset$ with probability $\ge 1 - \epsilon$, for any target set $T$. %
In this case, we call $\Pi$ an \emph{$\epsilon$-error protocol}.

\subsection{Guessing by Simulating Gossiping} \label{sec:gadget}
{We now describe how Alice can devise a guessing game protocol by simulating a distributed gossip algorithm.}
Our gossip lower bound results (see Section~\ref{sec:lbs}), use variants of an $n$-node distributed network that has a \emph{guessing game gadget} of $2m$ nodes embedded as a subgraph. In our gadget construction, we use predicate $P$, to specify a set of hidden low latency edges, which we call \emph{fast edges}. We show that the execution of a gossip algorithm on an $n$-node network can be simulated by Alice when playing the guessing game $\guessing(2m,P)$, where $n\ge 2m$. 
More specifically, this construction ensures that Alice can solve the guessing game in $T$ rounds if the distributed algorithm achieves broadcast in $T$ rounds (see Lemma~\ref{lem:simulation}).

We use the notation $\ID(v)$ to denote the ID of a vertex $v$, which, by construction is unique.
For a given instance of the guessing game, Alice creates a set of nodes $L=\{ v_1,\dots,v_{m} \}$ where $\ID(v_i) = a_i \in A$ for $i=1,\dots,m$ and, similarly, maps the integers in $B$ to the IDs of the vertex set $R=\{u_1,\dots,u_m\}$ in a one-to-one fashion. Next, Alice creates a complete bipartite graph on sets $L$ and $R$ by adding $m^2$ \emph{cross edges} and adds a clique on the vertices in $L$ where all clique edges are considered to have latency~$1$. 

For given integer parameters $\lo$ and $\hi$, we construct the network in a way that only some cross edges in the target set are useful to the algorithm by giving them a low latency $\lo$ whereas all other cross edges are assigned a large latency value $\hi$.   
Formally, the latencies of a cross edge $e=(v_i,u_j)$ is $\lo$ iff $(\ID(v_i),\ID(u_j)) \in P$; otherwise $e$ has latency $\hi$.
We denote this constructed gadget as $G(2m,\lo,\hi,P)$, where the parameters refer to the size of the gadget (i.e. $2m)$, the low latency value $\lo$, the high latency value $\hi$, and the predicate $P$ respectively. We also use a symmetric variant of the gadget for embedding multiple copies in the network (see Theorem~\ref{thm:PUSHPULLOptimal}), called $G^\text{sym}(2m,\lo,\hi,P)$, where Alice creates a clique on $R$ in addition to the one on $L$. See Fig. \ref{fig:guessing}. %

\begin{figure}
	\centering
	\includegraphics[width=\textwidth/2]{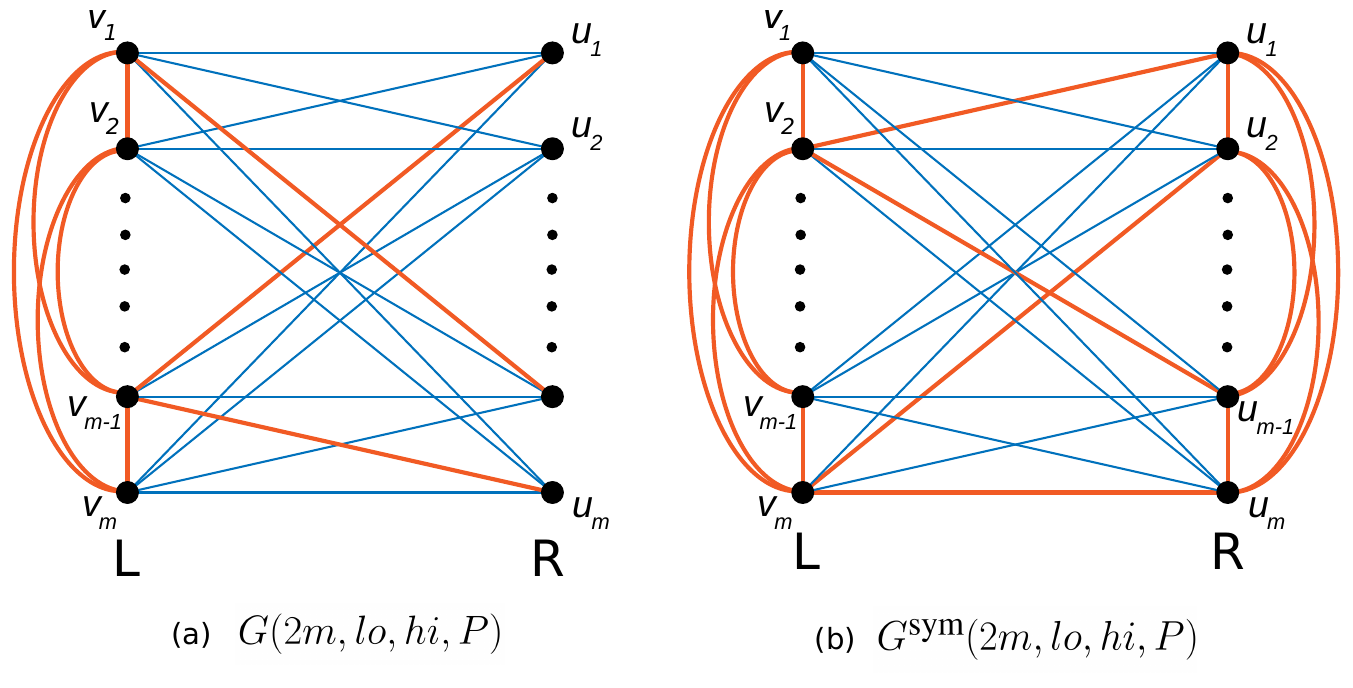}
	\caption{\footnotesize{Guessing Game Gadgets. Red edges correspond to ``fast'' links whereas the blue edges are ``slow'' links with high latency.}}
\label{fig:guessing}
\end{figure}

Since Alice does not know the target set $T$ in advance, she also does not know when a cross edge should have latency~$\lo$ or latency~$\hi$.
Nevertheless, implicitly these latency assignments are fixed a priori by the target set (unknown to Alice) which in turn depends on the predicate $P$.
Whenever a cross edge $e$ is activated in our simulation, Alice submits the ID pair of the vertices of $e$ as a guess to the oracle, whose answer reveals the target set membership and hence also the latency of $e$.

\newcommand{\simLemmaStmt}{Suppose that there is a $t$-round $\epsilon$-error algorithm $\cA$ that solves local broadcast on a given $n$-node network $H$ that contains $G(2m,1,h,P)$ or $G^\text{sym}(2m,1,h,P)$ such that the cross edges of the gadget form a cut of $H$, for $h \ge t$, $n\ge 2m$, and a predicate $P$.
	Then there is an $\epsilon$-error protocol $\Pi$ for $\guessing(2m,P)$ that terminates in $\le t$ rounds.}
\begin{lemma}[Gossip Protocol Simulation] \label{lem:simulation}
	\simLemmaStmt
\end{lemma}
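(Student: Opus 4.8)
The plan is to establish a tight two-way correspondence between an execution of the local broadcast algorithm $\cA$ on the network $H$ and a run of a protocol $\Pi$ for $\guessing(2m,P)$, so that Alice can \emph{simulate} $\cA$ internally while using oracle answers to resolve exactly the information she is missing, namely the latencies of the cross edges. First I would have Alice set up the gadget embedding: she knows $A$ and $B$, so she constructs the full vertex set of $H$ together with all edges \emph{except} the true latency labels of the $m^2$ cross edges, which depend on the hidden target set $T$. Crucially, the statement assumes the cross edges form a cut of $H$, so the only edges whose behavior Alice cannot predict are precisely the cross edges, and their latency is either $\lo=1$ (if the ID pair lies in $P$, i.e.\ in $T$) or $\hi=h$ otherwise. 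Alice runs $\cA$ round by round in her head; the only nondeterminism she cannot resolve locally is whether a given activated cross edge is fast or slow.

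The key step is the round-by-round simulation invariant. In each round of $\cA$, some set of cross edges gets activated; Alice collects the ID pairs $(\ID(v_i),\ID(u_j))$ of all activated cross edges and submits them as her guess set $X_r$ to the oracle. Here I need the size bound: the number of activated cross edges in one round is at most $2m$ (each of the $2m$ gadget nodes initiates at most one exchange per round), so $|X_r|\le 2m$ is satisfied, matching the guessing-game constraint. The oracle returns $X_r\cap T_r$, which tells Alice exactly which of the activated cross edges are fast. Since $h\ge t$ and the algorithm runs for only $t$ rounds, any message sent over a \emph{slow} edge (latency $h\ge t$) has \emph{not yet} been delivered by the end of the execution, so slow cross edges transmit no information within the time budget; Alice can therefore treat every unrevealed (slow) edge as inert. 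Meanwhile fast edges have latency $1$ and deliver immediately. This is what lets Alice advance the simulated state of $\cA$ by exactly one round using only the oracle's answer, establishing by induction that after $r$ simulated rounds her internal state of $\cA$ matches a genuine execution on $H$ with the latency labeling induced by $T$.

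It then remains to argue that the guessing game is \emph{solved} precisely when local broadcast succeeds on the gadget. The target-set update rule~\eqref{eq:update} removes, whenever Alice hits some $(a',b)\in T_r$, all target edges sharing the $B$-component $b$; this exactly mirrors the moment a node $u_j\in R$ first receives the broadcast message across \emph{some} fast cross edge, after which $u_j$ is ``covered'' and the remaining fast edges into $u_j$ are irrelevant to whether $u_j$ has heard the rumor. So $T_{r+1}=\emptyset$ (game solved) if and only if every right-node $u_j$ incident to at least one fast cross edge has received a message over such an edge, which is exactly the condition that local broadcast has informed every such neighbor. Consequently, if $\cA$ solves local broadcast within $t$ rounds with probability $\ge 1-\epsilon$, the induced protocol $\Pi$ empties the target set within $t$ rounds with the same probability, giving an $\epsilon$-error protocol terminating in $\le t$ rounds.

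I expect the main obstacle to be making the ``slow edges carry no information'' argument airtight within the non-blocking communication model. Because communication is non-blocking, a node may initiate many exchanges before earlier ones complete, so I must be careful that the only messages affecting the state of $\cA$ during the first $t$ rounds are those delivered over fast (latency-$1$) edges and over the latency-$1$ clique edges inside $L$ (and inside $R$ in the symmetric variant). The precise claim is that with $h\ge t$, no round-trip over a slow cross edge completes by round $t$, so Alice's simulated state never depends on the slow-edge labels she has not learned; this needs the latency semantics (a round-trip over an edge of latency $\ell$ takes time $\ell$) to be invoked cleanly. A secondary subtlety is confirming the $|X_r|\le 2m$ budget in the symmetric gadget, where clique edges on both sides exist but do not add to the count of \emph{activated cross edges}, since only cross-edge activations are submitted as guesses.
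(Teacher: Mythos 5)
Your proposal is correct and follows essentially the same route as the paper's proof: Alice simulates $\cA$ locally, submits the (at most $2m$) activated cross edges of each round as her guesses, uses the oracle's answer to fix the latencies, and concludes that local broadcast forces every $b\in T_1^B$ to be hit via a fast cross edge since those edges form a cut and the $\hi$-latency edges deliver nothing within $t\le h$ rounds. Your explicit treatment of the non-blocking semantics (slow edges being inert before round $t$) is a slightly more careful rendering of the same inductive state-equivalence argument the paper sketches.
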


\begin{proof}
We argue that Alice can simulate the execution of $\mathcal{A}$ on network $H$ and, in particular, on the subgraph $G(2m,1,h,P)$, until the gossip algorithm $\cA$ terminates or the oracle answers $\texttt{halt}$. (It is straightforward to extend the argument to  a subgraph $G^\text{sym}(2m,1,h,P)$.)
At the same time, Alice can use the behavior of $\cA$ on the subgraph $G(2m,1,h,P)$ to derive a protocol for $\guessing(2m,P)$.

For a given instance of the guessing game, Alice creates the network $H$ by first assigning all edges in the subgraph $H\setminus G(2m,1,h,P)$ a latency of $1$. Moreover, she creates the edges of the subgraph $G(2m,1,h,P)$ as described in Section \ref{sec:gadget}; we will see below that the latency of a cross edges is only set when it is first activated.

If a non-cross edge $(v_i,v_{j})$ (i.e. a clique edge on $L$ or an edge in $E(H\setminus G(2m,1,h,P))$) is activated by the algorithm, Alice locally simulates the bidirectional message exchange by updating the state of nodes $v_i$ and $v_{j}$ accordingly. 
In each round $r$ of the gossip algorithm, a set of at most $2m$ cross edges is activated by the vertices simulated by Alice.
For each activated cross edge $(v_i,u_j)$, Alice uses $(\ID(v_i),\ID(u_j))$ as one of her round $r$ guesses.
Consider some round $r\ge 1$, and suppose the oracle returns the empty set.
For each one of Alice's submitted round $r$ guess $(a_i,b_j)$ that was not contained in the oracle's answer, Alice sets the latency of $(a_i,b_j)$ to $h$ by updating the local state of $a_i$. Here $a_i = \ID(v_i)$ and $b_j = \ID(u_j)$ that are chosen in round $r$, for some $v_i \in L$ and $u_j \in R$. 
It follows by a simple inductive argument that the state of every vertex in the simulation is equivalent to executing the algorithm on the network.

We now argue that the above simulation of a $t$-round gossip algorithm for local broadcast solves the game $\guessing(2m,P)$ in at most $t$ rounds with probability $\ge 1 - \epsilon$, for any predicate $P$. 
Recall that the guessing game ends if $T$ becomes empty, which happens when Alice's correct guesses have included every $b \in T^B$ at least once.  
By the premise of the lemma, the cross edges of $G(2m,1,h,P)$ form a cut of $H$, which tells us that $\cA$ cannot solve local broadcast without using the cross edges between $L\times R$.
Since every such $b \in R$ is a neighbor of a node in $L$, the only way it can receive a local broadcast message is via a fast cross-edge in $T$.  Hence, if the local broadcast algorithm terminates, we know that $b$ was hit by one of Alice's guesses.
\end{proof}

\subsection{Guessing Game Lower Bounds}

The following lemma is instrumental for showing the $\Omega(\Delta)$ lower bound of Theorem \ref{thm:generalLBCAST}, which holds when there are no other assumptions on the critical weighted conductance of the graph. %

\newcommand{\lemSingleton}{
	Let $\guessing(2m,P(|T|=1))$ be the guessing game where the target set is a single pair chosen uniformly at random from $A \times B$.
	If protocol $\Pi$ is an $\epsilon$-error protocol for $\guessing(2m,P(|T|=1))$ where $\epsilon< 1$, then the number of rounds until $\Pi$ terminates is at least $\Omega(m)$.
}
\begin{lemma} \label{lem:singleton}
	\lemSingleton
\end{lemma}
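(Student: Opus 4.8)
The plan is to reduce the single-target game to an information-theoretic ``needle in a haystack'' search, exploiting the fact that a one-element target yields no useful feedback until Alice guesses it exactly. First I would record the key structural fact implied by the winning condition: since the original target set is $T_1=\{(a^*,b^*)\}$ for a single pair, the game ends only in a round $r$ with $(a^*,b^*)\in X_r$; moreover, for any guess set $X_r$ not containing $(a^*,b^*)$ we have $X_r\cap T_r=\emptyset$, so the oracle's reply is empty. Hence, until the exact target pair is guessed, every oracle response is identical (the empty set) regardless of which pair is the target. This ``no information until success'' property is exactly what the case $|T|=1$ buys us.

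Next I would move from the guarantee to an averaging argument. Because for the predicate $|T|=1$ the target $(a^*,b^*)$ is by definition drawn uniformly at random from $A\times B$, a set of $m^2$ pairs, the $\epsilon$-error guarantee says that $\Pi$ terminates with an empty target within $r$ rounds with probability $\ge 1-\epsilon$, where the probability is over both the random target and $\Pi$'s internal coins. I would then condition on a fixed setting $\omega$ of $\Pi$'s coins. By the structural fact above, as long as the target has not yet been hit the oracle's answers are empty, so the sequence of guess sets $X_1^\omega,\dots,X_r^\omega$ that $\Pi$ produces is completely determined by $\omega$ and does not depend on the target. Consequently, $\Pi$ (with coins $\omega$) eliminates a target $t$ within $r$ rounds if and only if $t\in\bigcup_{j=1}^{r}X_j^\omega$, and since $|X_j^\omega|\le 2m$ this union contains at most $2mr$ pairs.

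Finally I would put the pieces together by a counting argument. Conditioned on $\omega$, the probability over the uniform target $t$ that $\Pi$ terminates within $r$ rounds is at most $|\bigcup_{j\le r}X_j^\omega|/m^2 \le 2mr/m^2 = 2r/m$, and averaging over $\omega$ leaves this bound unchanged. For $\Pi$ to meet its $(1-\epsilon)$ success guarantee we therefore need $2r/m \ge 1-\epsilon$, i.e.\ $r \ge (1-\epsilon)m/2 = \Omega(m)$, since $\epsilon<1$ is a constant bounded away from $1$.

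I expect the main obstacle to be the treatment of adaptivity: formally justifying that, once the coins are fixed, an adaptive protocol behaves like a fixed (coin-dependent) enumeration of candidate pairs, so that only the union of its guess sets matters. This step hinges entirely on the single-target structure, as empty responses leak nothing, and the same reasoning would not transfer to multi-element targets without a genuinely information-theoretic accounting; isolating the $|T|=1$ case is precisely what keeps the analysis elementary.
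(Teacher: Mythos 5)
Your proof is correct, and its core is the same counting fact the paper uses: Alice makes at most $2m$ guesses per round against a target distributed uniformly over the $m^2$ pairs of $A\times B$, so in $r$ rounds she can cover at most a $2r/m$ fraction of the candidates. Where you differ is in the bookkeeping. The paper argues round by round: conditioning on the event that the protocol is correct, it bounds the probability of a hit in round $r$ by $2/(m-2r)$, multiplies the complementary probabilities over rounds $1,\dots,t$, and derives a contradiction with the assumption that the protocol terminates within $t<\tfrac{m}{2}-1$ rounds. You instead fix the protocol's coins $\omega$, observe that a singleton target makes every oracle reply empty until the moment of success, conclude that the guess sequence $X_1^\omega,\dots,X_r^\omega$ is oblivious to the target, and bound the success probability by $\bigl|\bigcup_{j\le r}X_j^\omega\bigr|/m^2\le 2r/m$ in one shot. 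Your version is arguably cleaner: it sidesteps the delicate conditioning on the event $\correct$ (and the direction of the product inequality) that the paper's write-up has to navigate, and it handles adaptivity explicitly via the obliviousness observation, which is exactly the right way to justify treating an adaptive protocol as a fixed enumeration. One small caveat: your final bound $r\ge(1-\epsilon)m/2$ yields $\Omega(m)$ only when $\epsilon$ is bounded away from $1$, which is how the lemma is invoked (constant success probability in Theorems \ref{thm:generalLBCAST} and \ref{thm:PUSHPULLOptimal}), so this does not affect any application; the paper's stated hypothesis ``$\epsilon<1$'' is subject to the same quantitative degradation in any case.
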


\begin{proof}
  For the sake of a contradiction, suppose that $\Pi$ solves $\guessing(2m,P(|T|=1))$ in $t < \tfrac{m}{2} -1$ rounds.
We define $\rtime$ to be the random variable giving the number of rounds until termination of $\Pi$. 
Note that $\Prob{ \rtime > t} = 0$ by assumption.

Consider a round $r \le t$ of the protocol and suppose that the game has not yet ended, i.e., Alice has not yet guessed all of $T$ correctly and has made at most $2m(r-1)$ (incorrect) guesses in the previous rounds.
Let $X_r$ denote the (at most $2m$) pairs from $A \times B$ chosen by Alice in round $r$.
Since from Alice's point of view, the adversary has chosen the single element of $T$ uniformly at random from the $m^2$ elements in $A \times B$, the probability that Alice guesses the element of $T$ in round $r$ is at most $\frac{2m}{m^2 - 2m (r - 1)} \le \frac{2}{m - 2r}$.
Let $\correct$ denote the event that the last round of protocol $\Pi$ resulted in an empty target set, i.e, $\Pi$ correctly solves the game. It follows %
\begin{align} \label{eq:guess0}
\scalebox{0.93}[0.93]{$  \Prob{\rtime = r \suchthat\correct } = \Prob{T_r \subseteq X_r   \suchthat \correct} \le \tfrac{2}{m - 2r}.$}
\end{align}

In the remainder of the proof, we will lower bound the probability of event $\{\rtime > t\}$.
If $\rtime>t$, then none of Alice's guesses in rounds $1,\dots,t$ were successful, i.e.,
\begin{align}  
\scalebox{0.9}[0.9]{$\Prob{ \rtime > t \suchthat \correct } \ge  \prod_{i=1}^t\left(1-\Prob{\rtime = i \suchthat\correct}\right)$}. \label{eq:probbnd}
\end{align}
Observe that
\begin{align}
\Prob{\rtime > t } &\ge \Prob{ \rtime > t \suchthat \correct }\ \Prob{ \correct }.
\notag\\
& \ge  \Prob{ \rtime > t \suchthat \correct }\ (1 - \epsilon) \label{eq:probbnd2}
\end{align}
Applying \eqref{eq:guess0} to each round $i\le t$ in \eqref{eq:probbnd} and plugging this into \eqref{eq:probbnd2}, we get 
\begin{align}
	\Prob{\rtime > t } 
		&\ge \prod_{i=1}^t \left( 1 - \frac{2}{m - 2i} \right)\left(1 - \epsilon\right) \notag\\
		& \ge \left( 1 - \frac{1}{\tfrac{m}{2}-t} \right)^{t}\left(1 - \epsilon\right).  \notag
\end{align}

Since the running time of $\Pi$ was assumed to never exceed $t$ rounds, i.e., $\Prob{ \rtime > t} = 0$ and $\epsilon<1$, we get a contradiction to $t < \tfrac{m}{2} - 1$.
\end{proof}
The next lemma bounds the number of guesses required when the target set is less restricted and its edges form a random subset of the cross edges between $A\times B$. This allows us to derive a lower bound on the local broadcast time complexity in terms of the critical weighted conductance in Theorem \ref{thm:generalConductance}.

\begin{lemma}\label{lem:randomGuessing}
	For the guessing game input sets $A$ and $B$, let $\random_p$ be the predicate that defines the target set $T$ by adding each element of $A\times B$ to $T$ with probability $p$, for some $p \ge \Omega(\tfrac{1}{m})$.
	Then, the following hold:
	\begin{enumerate}
		\item[(a)] \emph{Any} protocol that solves  $\guessing(2m,\random_p)$ requires 
	$\Omega\left({1}/{p}\right)$ rounds in expectation.
		\item[(b)] If Alice uses the (suboptimal) protocol where she submits her $2m$ guesses in each round by choosing, for each $a \in A$, an element $b' \in B$ uniformly at random, and, for each $b \in B$, an $a' \in A$ uniformly at random, then $\Omega\left(\tfrac{\log m}{p}\right)$ rounds are required in expectation.
	\end{enumerate}
\end{lemma}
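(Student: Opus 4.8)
The plan is to recast both bounds in terms of the $B$-components, which I will call \emph{columns}. For $b \in B$ write $\Gamma_b = \{a \in A : (a,b) \in T_1\}$ and $d_b = |\Gamma_b|$. By the update rule \eqref{eq:update}, column $b$ is deleted from the target precisely when Alice first submits a guess $(a,b)$ with $a \in \Gamma_b$, and the game ends exactly when every non-empty column has been deleted. Since each $d_b$ is $\mathrm{Bin}(m,p)$ and $p \ge \Omega(1/m)$, the number of non-empty columns satisfies $\expect{|T_1^B|} = m\left(1-(1-p)^m\right) \ge m\left(1-e^{-mp}\right) = \Omega(m)$, so $\Theta(m)$ columns must be hit before Alice can win.

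For the $\Omega(1/p)$ bound against an \emph{arbitrary} (adaptive, randomized) protocol, the crux is a \emph{freshness} claim: conditioned on the entire history of guesses and oracle answers, any pair $(a,b)$ that Alice has not guessed before and whose column is still alive lies in $T_1$ with probability exactly $p$. This holds because the events $\{(a,b)\in T_1\}$ are mutually independent, while all of Alice's information about column $b$ consists of misses $\{(a'',b)\notin T_1\}$ on \emph{other} pairs, which are independent of $(a,b)$. Hence $\expect{\text{hits in round } r \mid \text{history}} \le 2mp$, and since each hit removes at most one column, $\expect{\text{columns deleted in the first } t \text{ rounds}} \le 2mpt$. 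Choosing $t = \Theta(1/p)$ small enough that $2mpt < \tfrac12\expect{|T_1^B|}$ leaves an expected $\Omega(m)$ columns alive; as the number of live columns never exceeds $m$, a reverse-Markov inequality gives $\Prob{\rtime > t} = \Omega(1)$, and therefore $\expect{\rtime} = \Omega(1/p)$.

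For the specific uniform protocol I would exploit that it is \emph{non-adaptive} and independent across rounds, so every column evolves as an independent survival process. In a single round column $b$ is hit iff the $B$-side draw for $b$ lands in $\Gamma_b$ (probability $d_b/m$) or some $a\in\Gamma_b$ draws $b$ on the $A$-side (probability $1-(1-1/m)^{d_b}$); a union bound gives a per-round elimination probability $q_b \le 2d_b/m$. Because $p \ge \Omega(1/m)$, a constant fraction of the non-empty columns have $d_b = O(\max(1,mp))$ and hence $q_b = O(p)$. Let $S_b$ be the event that column $b$ survives all $t = \tfrac{c\log m}{p}$ rounds and set $Z = \sum_b \mathbf{1}[S_b]$ over these low-degree columns. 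Each such column survives with probability $(1-q_b)^t \ge m^{-O(c)}$, so $\expect{Z} \ge m^{1-O(c)} \to \infty$ for a sufficiently small constant $c$.

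The main obstacle is converting $\expect{Z}\to\infty$ into $\Prob{Z\ge 1} = \Omega(1)$, since the events $S_b$ are coupled through the shared $A$-side draws: a single $a$ can hit several columns at once. I would resolve this by proving pairwise negative correlation, $\Prob{S_b \cap S_{b'}} \le \Prob{S_b}\Prob{S_{b'}}$. Per round this reduces to the overlap $\Gamma_b \cap \Gamma_{b'}$: an $a$ in the overlap must avoid both $b$ and $b'$ (probability $1-2/m$) in the joint event but contributes $(1-1/m)^2$ in the product, and $1-2/m \le (1-1/m)^2$, so raising to the $t$-th power over independent rounds preserves the inequality. A second-moment (Paley--Zygmund) argument then gives $\Prob{Z\ge 1} \ge \expect{Z}^2/\expect{Z^2} \ge \expect{Z}/(1+\expect{Z}) \to 1$, whence $\Prob{\rtime > t} \to 1$ and $\expect{\rtime} = \Omega(\tfrac{\log m}{p})$. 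A secondary technical point is certifying the $\Omega(m)$ low-degree columns uniformly in $p$: the regimes $mp=\Theta(1)$ and $mp \gg 1$ behave slightly differently and are handled by a Chernoff/Markov bound on the $d_b$ (in the sparse regime one simply uses the $\Theta(m)$ columns with $d_b=1$).
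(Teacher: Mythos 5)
Your proof is correct, and it departs from the paper's argument in a substantive way, especially for the $\Omega(\log m/p)$ bound. The paper proves both bounds by decomposing the total number of guesses $Y$ into waiting times $Z_j$ between consecutive newly discovered elements of $T^B$: for general protocols each $Z_j$ has expectation $1/p$ by deferred decisions, giving $\expect{Y}=\Omega(m/p)$; for the uniform protocol $Z_j$ is geometric with success probability $F_j/m^2$, and bounding $\expect{F_i}\le (m-i)mp$ turns $\sum_i m^2/\expect{F_i}$ into a harmonic sum that supplies the $\log m$ factor (via Jensen and a Wald-type exchange of expectation and random summation). You instead work per column: for the general bound you cap the expected number of columns destroyed per round at $2mp$ via the same freshness/deferred-decision principle and finish with reverse Markov, which is essentially the same idea with cleaner accounting and makes the conditioning on an adaptive history fully explicit; for the uniform protocol you replace the coupon-collector sum by a second-moment survival argument, showing some low-degree column outlives $t=c\log m/p$ rounds by combining $\expect{Z}\to\infty$ with pairwise negative correlation of the survival events and Paley--Zygmund. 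Your route costs you the extra negative-correlation lemma (which you verify correctly: $1-2/m\le(1-1/m)^2$ handles the shared $A$-side draws) and some regime-splitting on $d_b$, but it buys a strictly stronger conclusion: the paper only obtains the $\Omega(\log m/p)$ bound in expectation, whereas you get $\Prob{\rtime>c\log m/p}\to 1$, which is the form actually wanted when these bounds are invoked for ``constant probability'' statements downstream. The only point to tidy is the constant-$p$ regime, where the union bound $q_b\le 2d_b/m$ can exceed $1$; there one should bound the per-round survival probability directly by $(1-d_b/m)(1-1/m)^{d_b}\ge\delta>0$ for the low-degree columns, after which the same calculation goes through.
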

Our motivation for considering the suboptimal protocol in the second part of Lemma~\ref{lem:randomGuessing} is its close relation to the push-pull gossip protocol, which we formalize in Theorem~\ref{thm:generalConductance}.
\begin{proof}[Proof of Lemma~\ref{lem:randomGuessing}]
Recall that the game ends when the guesses of Alice have hit each element in $T^B \subseteq B$ at least once, whereas $T^B$ is itself a random variable. 
For the sake of our analysis, we will consider Alice's guesses as occurring sequentially and hence we can assume that elements of $T^B$ are discovered one by one.
For each $j\ge 1$, we define $Z_j$ to denote the number of guesses required to guess the 
$j$-th element of $T^B$, after having already guessed $j-1$ elements.

We will first consider general protocols. 
Considering that each edge is in the target set with probability $p$, we can assume that the target membership of an edge $e$ is determined only at the point when Alice submits $e$ as a guess. Recalling that Alice has full knowledge of the remaining elements in $T^B$ that she still needs to guess (cf.\ \eqref{eq:update}), we can assume that her guess is successful with probability $p$ (as she will only guess edges that potentially discover a new element in $T^B$). For this guessing strategy, this remains true independently of the current target set and the set of previously discovered elements (which we denote by $D_j$). Formally, 
$\Prob{ Z_j \mid D_j,T } = \Prob{ Z_j }$ and hence $\expect{Z_j \mid D_j,T } = \expect{ Z_j } = 1/p $.
Note that any $b \in B$ will be part of some target edge in $T$, i.e., $b\in T^B$, with probability $\ge 1 - (1-p)^m=\Omega(1)$, since $p=\Omega(1/m)$, and therefore
$\expect{|T|} =\Omega(m)$. 
{Let $Y$ be the maximum number of guesses required by Alice's protocol $\Pi$, i.e., $Y = \sum_{i=1}^{|T|} Z_i$.}
It follows that 
 \[
   \expect{ Y } 
      = \expect{\expect{ Y \mid D_j, T }} 
       = \expect{\textstyle{\sum_{i=1}^{|T|}} 
               \expect{ Z_i \mid D_j,T }} \]
\[       = \expect{\textstyle{\sum_{i=1}^{|T|}} 
               \expect{ Z_i }}
       = \Omega(\tfrac{m}{p}).
 \]

Considering that Alice can guess up to $2m$ elements per round, it follows that the time is $\Omega(\tfrac{1}{p})$, which completes the proof for general algorithms.

Now consider the case where Alice uses the protocol where she submits her $2m$ guesses in each round by choosing, for each $a \in A$, an element $b' \in B$ uniformly at random, and, for each $b \in B$, an $a' \in A$ uniformly at random.
Note that this process of selecting her guesses is done obliviously of her (correct and incorrect) guesses so far.

Observe that $Z_j$ depends on a random variable $F_j$, which is the size of $T$ after the $(j-1)$-th successful guess.  
Since $Z_j$ is the number of times that the protocol needs to guess until a \emph{new} element in $T^B$ is discovered, the distribution of $Z_j$ corresponds to a geometric distribution. 
According to Alice's protocol, the probability of guessing a new element is given by $\tfrac{F_j}{m^2}$ and hence 
$  \expect{ Z_j \mid F_j } \geq \tfrac{m^2}{F_j}.$
Let $U = |T_1^B|$; i.e., $U$ is the number of all elements in $B$ that are part of an edge in $T$ initially.
We have
\begin{align*}
  \expect{ Y \mid U\ge \tfrac{m}{2} } 
    &= \expect{\expect{ Y \mid F_i}\mid U\ge \tfrac{m}{2}}  \\
    &= \expect{\textstyle{\sum_{i=1}^{U}} \expect{ Z_i \mid F_i }\mid U\ge \tfrac{m}{2}} \\
    &\ge \sum_{i=1}^{\lfloor m/2 \rfloor} \expect{\expect{ Z_i \mid F_i }\mid U\ge \tfrac{m}{2}} \\
    &\ge \sum_{i=1}^{\lfloor m/2 \rfloor} \frac{m^2}{\expect{F_i \mid U\ge \tfrac{m}{2}}},
\end{align*}
where the last inequality follows from $\expect{1/X} \ge 1/\expect{X}$, for any positive random variable $X$, due to Jensen's Inequality.
Since Alice has already correctly guessed $i-1$ elements from $T^B$, we discard all elements that ``intersect'' with successful guesses when updating the target set at the end of each round, according to \eqref{eq:update}. 
It can happen that the protocol discovers multiple elements of $T^B$ using the round $r$ guesses (which we have assumed to happen sequentially in this analysis).
In that case, the target set is not updated in-between guesses. However, it is easy to see that this does not increase the probability of guessing a new element of $T^B$.
We get
\[
  \expect{F_i \mid U\ge \tfrac{m}{2}} \le (m - i)mp,
\]
and thus 
\[
  \expect{ Y \mid U\ge \tfrac{m}{2}} \ge \frac{m}{p} \sum_{i=1}^{\lfloor m/2 \rfloor} \frac{1}{m -i}.
\]
This sum is the harmonic number $H_{\lfloor m/2 \rfloor -1}$, which is $\Theta(\log m)$, for sufficiently large $m$, and hence
\[
  \expect{Y \mid U\ge \tfrac{m}{2}} \ge \Omega\left(\frac{m\log m}{p}\right).
\]
By the law of total expectation it follows that 
\begin{align}
	\expect{Y} \ge \expect{Y \mid U\ge \tfrac{m}{2}} \Prob{U \ge \tfrac{m}{2}}. 
	\label{eq:expbound}
\end{align}

By assumption, $p \ge \tfrac{c}{m}$ for a sufficiently large constant $c>0$.
Recalling that we have $|A\times B| = m^2$, it follows that $\expect{ U } = m^2 c / m  = c m$.
Since each edge becomes part of the target set independently with probability $p$, we can apply a standard Chernoff bound to show that $\Prob{ U \ge {m}/{2} } \ge 1 - 1/n^{\Omega(1)}$, and hence \eqref{eq:expbound} implies 
that the expected number of guesses is $\Omega(m)$.
The time bound follows since Alice can submit at most $2m$ guesses per round.
\end{proof}

\subsection{Lower Bounds for Information Dissemination} \label{sec:lbs}

In this section we show three different lower bounds.  Together, these show what properties cause poor performance in information dissemination protocols: in some graphs, high degree is the cause of poor performance (Theorem~\ref{thm:generalLBCAST}); in other graphs, poor connectivity is the cause of poor performance (Theorem~\ref{thm:generalConductance}).  And finally, we give a family of graphs where we can see the trade-off between $D$, $\Delta$, and $\phi_*$ (Theorem~\ref{thm:PUSHPULLOptimal}). We begin with a result showing that $\Omega(\Delta)$ is a lower bound: %

\begin{theorem} \label{thm:generalLBCAST}
	For any $\Delta \in (\Theta(1), \lceil{n}/{2}\rceil)$, there is an $n$-node network that has a weighted diameter of $O(\log n)$, and a maximum node degree $\Theta(\Delta)$, where any algorithm requires $\Omega(\Delta)$ rounds to solve local broadcast with constant probability.%
\end{theorem}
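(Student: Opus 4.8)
The plan is to realize the hard instance as a symmetric guessing-game gadget and then invoke the two machinery lemmas already in hand. I would set $m = \Delta$ (note $2m = 2\Delta < n$ over the whole range $\Delta \in (\Theta(1), \lceil n/2\rceil)$) and build the network $H$ around $G^\text{sym}(2m, 1, h, P)$, where $P$ is the predicate $|T|=1$ that plants a single, uniformly random fast cross edge, and $h = \Theta(m)$ is the slow latency. Every gadget vertex then has degree $2m - 1 = \Theta(\Delta)$---its $m$ cross edges together with its $m-1$ side-clique edges---which pins the maximum degree at $\Theta(\Delta)$.

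The first thing I would verify is the weighted diameter, and this is exactly why I use the symmetric gadget rather than the plain one. In $G^\text{sym}$, even though all but one cross edge has the large latency $h$, any $L$-node reaches any $R$-node along a length-$3$ path (an $L$-clique edge, the unique fast cross edge, then an $R$-clique edge), every hop of latency $1$; hence the gadget has weighted diameter $O(1)$ regardless of $h$. I would then attach the remaining $n - 2m$ vertices to the $L$-side as a balanced binary tree of latency-$1$ edges: depth $O(\log n)$ and degree $O(1)$, so the maximum degree stays $\Theta(\Delta)$, and---crucially---no new edge crosses the bipartition, so the cross edges of the gadget still form a cut of $H$. The total weighted diameter is then $O(\log n) + O(1) = O(\log n)$.

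With $H$ fixed, the reduction is immediate. Fix $h = cm$ for a suitably small constant $c$ and suppose, for contradiction, that some algorithm $\cA$ solves local broadcast on $H$ in $t < cm$ rounds with constant success probability $1 - \epsilon$, some $\epsilon < 1$. Then $h = cm > t$, so Lemma~\ref{lem:simulation} turns $\cA$ into an $\epsilon$-error protocol for $\guessing(2m, |T|=1)$ that halts within $t$ rounds, while Lemma~\ref{lem:singleton} asserts that every such protocol with $\epsilon < 1$ needs $\Omega(m)$ rounds---a contradiction once $c$ is small enough (e.g.\ $c \le \tfrac14$ against the $\tfrac{m}{2}-1$ threshold of that lemma). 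Hence $t \ge cm$, and if $t \ge cm$ to begin with the claimed bound already holds, so the case split on $t$ versus $h$ is free; in either case $t = \Omega(m) = \Omega(\Delta)$, and the $\epsilon < 1$ hypothesis is precisely the ``constant probability'' of the statement.

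The main obstacle is reconciling the $O(\log n)$ diameter with the large latencies that force the search: a non-symmetric gadget has weighted diameter $\Theta(h)$ because most $R$-nodes are reachable only over slow cross edges, and one cannot add a shortcut to them across the cut without destroying the cut hypothesis that Lemma~\ref{lem:simulation} requires. Pushing the shortcuts entirely onto the two side-cliques (the symmetric gadget), so that a single fast cross edge suffices to bridge the two sides, is the one genuinely load-bearing idea; the remainder is bookkeeping---checking $2m < n$ throughout the range, confirming that the padding preserves both the degree bound and the cut, and matching the error probability.
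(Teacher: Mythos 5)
Your proposal is correct and follows essentially the same route as the paper: the hard instance is the symmetric gadget $G^{\text{sym}}(2\Delta,1,h,P)$ with a singleton target, padded with $n-2\Delta$ latency-$1$ vertices of constant degree attached to the $L$-side, and the bound follows by combining Lemma~\ref{lem:simulation} with Lemma~\ref{lem:singleton}. The only (immaterial) difference is that the paper pads with a constant-degree expander rather than a balanced binary tree, and sets the slow latency to $\Delta$ rather than $c\Delta$; both choices yield the same $O(\log n)$ diameter, $\Theta(\Delta)$ degree, and cut structure.
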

\begin{proof}
	Consider the network $H$ of $n$ nodes that consists of the guessing game gadget $G^\text{sym}(2\Delta,1,\Delta,P)$, where predicate $P$ returns an arbitrary singleton target set, combined with a constant degree regular expander \cite{hoory06} of $n - 2\Delta$ vertices (if any) of which any one node is connected to all the vertices on the left side of the gadget; all the edges of, and connected to the expander have latency $1$ and the latencies of the edges in the gadget are assigned as in Lemma \ref{lem:simulation}. Clearly, the weighted diameter of $H$ is $O(\log n)$ (diameter of the expander \cite{hoory06}). 
	By Lemma\ref{lem:singleton}, we know that any guessing game protocol on $\guessing(2\Delta,P(|T|=1))$ requires $\Omega(\Delta)$ rounds for the predicate that returns exactly $1$ pair as the target set. %
	Lemma \ref{lem:simulation} tells us that any gossip algorithm that solves local broadcast in $H$, must require $\Omega(\Delta)$ rounds.
\end{proof}

We next show that every local broadcast algorithm requires time at least $\Omega(1/\phi_* + \ell_*)$. Note that, we get this $\Omega(1/\phi_*)$ lower bound just for local broadcast and not information dissemination, which is in contrast to the results in the unweighted case. The following result is given in terms of the weight-$\ell$ conductance, for any $\ell$, and thus also holds for $\phi_*$ and $\ell_*$.  In the proof, we construct a network that corresponds to the bipartite guessing game graph with a target set where each edge is fast with probability $\phi_*$. That way, we obtain a network with critical weighted conductance $\Theta(\phi_*)$, hop diameter $O(1)$, and a weighted diameter of $O(\ell_*)$. The guessing game lower bound of Lemma \ref{lem:randomGuessing} tells us that the cost of information dissemination still depends on $\phi_*$. 

\begin{theorem} \label{thm:generalConductance}
	For any $\ell \in [1,n]$ and $\phi_\ell$ where $\Omega(\log(n)/n) \le \phi_\ell \le 1/2$, 
	there is a network of $2n$ nodes that has a weighted diameter $O(\ell)$ (w.h.p.), and critical weighted conductance $\Theta(\phi_\ell)$ (w.h.p.), such that any gossip algorithm requires $\Omega\left(({1}/{\phi_\ell})+ \ell\right)$ rounds for solving local broadcast in expectation. Also, solving local broadcast using push-pull requires $\Omega\left(({\log n}/{\phi_\ell}) + \ell\right)$ rounds in expectation.
\end{theorem}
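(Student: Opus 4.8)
The plan is to instantiate the guessing-game machinery of Lemmas~\ref{lem:simulation} and~\ref{lem:randomGuessing} on one concrete random network and then read off both its structural parameters and the claimed time bound. Concretely, I would take $H = G(2n, \ell, h, \random_{\phi_\ell})$: the (non-symmetric) gadget on $2n$ nodes in which each of the $n^2$ cross edges is independently made \emph{fast}, with latency $\ell$, with probability $\phi_\ell$, and \emph{slow}, with latency $h$, otherwise, while the clique on $L$ keeps latency~$1$; here $h$ is a large polynomial in $n$ (times $\ell$) chosen so that $h > \ell/\phi_\ell$. I use the non-symmetric gadget on purpose: since $R$ is an independent set, every $u \in R$ can receive a cross-message only across one of its own fast edges, which is exactly what forces \emph{every} $b \in T^B$ to be hit in the reduction. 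Note this uses low latency $\ell$ in place of the value~$1$ appearing in Lemma~\ref{lem:simulation}, so I would rerun its simulation argument with this latency; the only change is that feedback from activating a fast cross edge now reaches the algorithm $\ell$ rounds later.

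First I would pin down the structural parameters, all w.h.p. Each node keeps its clique (for $L$) or its full cross neighborhood (for $R$), so the hop diameter is $O(1)$; and since every node has a fast cross edge with probability $\ge 1 - (1-\phi_\ell)^n \ge 1 - n^{-\Omega(1)}$ (using $\phi_\ell = \Omega(\log n / n)$), a union bound shows that w.h.p.\ every pair is joined by a path with $O(1)$ hops and total latency $O(\ell)$, so the weighted diameter is $O(\ell)$. For the conductance, the $(L,R)$ cut has minimum side-volume $\Theta(n^2)$ and, by a Chernoff bound, $\Theta(\phi_\ell n^2)$ fast cross edges, so $\phi_\ell((L,R)) = \Theta(\phi_\ell)$; no edge of latency $<\ell$ crosses it, while for $\ell' \ge h$ the all-edges graph has conductance $\Theta(1)$, so $\phi_{\ell'}(G)/\ell'$ is maximized at $\ell' = \ell$ precisely because $h > \ell/\phi_\ell$, giving critical latency $\ell_* = \ell$ and critical conductance $\phi_* = \Theta(\phi_\ell)$. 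To certify that $(L,R)$ (up to constants) minimizes the weight-$\ell$ conductance I would show every cut has weight-$\ell$ conductance $\Omega(\phi_\ell)$: a cut that splits the clique non-trivially already pays $\Omega(1)$ through clique edges, while for the remaining cuts the random fast bipartite graph contributes $\Omega(\phi_\ell|U|)$ crossing fast edges, by Chernoff plus a union bound over all cuts (this is exactly where $\phi_\ell = \Omega(\log n / n)$ is needed, so that $\phi_\ell n$ dominates $\log n$).

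For the lower bound I would first dispatch slow edges: either $t \ge h > \ell/\phi_\ell = \Omega(1/\phi_\ell + \ell)$ and we are done, or $t < h$, in which case no slow edge ever completes and every $u \in R$ can be reached across the cut only via a fast edge in the target set $T$. Feeding the $t$-round algorithm into the (latency-$\ell$) simulation of Lemma~\ref{lem:simulation} yields a protocol for $\guessing(2n, \random_{\phi_\ell})$; the guesses whose fast edges can complete by round $t$ are exactly those issued by round $t-\ell$, and by the argument of Lemma~\ref{lem:simulation} these must hit every $b \in T^B$, i.e.\ they solve the game within $t-\ell$ rounds. The general bound of Lemma~\ref{lem:randomGuessing} (applicable since $\phi_\ell = \Omega(1/n)$) then gives $t-\ell = \Omega(1/\phi_\ell)$ in expectation, hence $t = \Omega(1/\phi_\ell + \ell)$. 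For push-pull, every $u\in R$ chooses a uniform cross edge and every $v\in L$ chooses a cross edge with constant probability (then uniform), so push-pull coincides up to a constant factor with the oblivious uniform strategy of Lemma~\ref{lem:randomGuessing}, which requires $\Omega((\log n)/\phi_\ell)$ rounds; combined with the additive $\ell$ this gives $t = \Omega((\log n)/\phi_\ell + \ell)$.

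The step I expect to be the main obstacle is the conductance concentration: proving that $(L,R)$ is, up to constants and simultaneously over all $2^{\Theta(n)}$ cuts, the minimizer of the weight-$\ell$ conductance w.h.p., which needs a careful Chernoff-plus-union-bound and is exactly where the hypothesis $\phi_\ell = \Omega(\log n / n)$ is used. A secondary subtlety is making the latency rescaling of the simulation rigorous---checking that withholding the fast-edge feedback for $\ell$ rounds neither helps the algorithm nor violates the invariant that $u \in R$ is informed only once one of its fast edges has completed---since this is what cleanly produces the additive $\ell$ term.
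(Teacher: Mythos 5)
Your proposal follows essentially the same route as the paper: the same gadget $G(2n,\ell,n^2,\random_{\phi_\ell})$, the same reduction via the simulation lemma to $\guessing(2n,\random_{\phi_\ell})$ with the general and random-guessing bounds of Lemma~\ref{lem:randomGuessing} yielding the two time bounds plus the additive $\ell$, and the same Chernoff-plus-union-bound concentration argument (which the paper carries out via the events $\sLR$ and $\bad(S)$) to certify that the conductance is $\Theta(\phi_\ell)$ w.h.p. The one spot where your sketch is looser than the paper is the cut case analysis for the conductance lower bound, but you correctly flag that concentration step as the main technical obstacle and identify the right tools for it.
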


\begin{proof}
  Our goal is to reduce the game $\guessing(2n,\random_{\phi_\ell})$ to local broadcast, hence we consider the $2n$-node graph $G(2n,\ell,n^2,\random_{\phi_\ell})$ as our guessing game gadget defined in Section \ref{sec:gadget}. Since we want to show the time bound of $t = \Omega\left(\tfrac{\log n}{\phi_\ell} + \ell\right)$ rounds (for push-pull), for the high latency edges we can use the value $n^2 \geq  \tfrac{\log n}{\phi_\ell} + \ell$ (as $\Omega(\log(n)/n) \leq \phi_\ell$ and $\ell \leq n$).
  
  We assign each cross edge latency $\ell$ independently with probability $\phi_\ell$ and latency $n^2$ with probability $1-\phi_\ell$. The fast cross edges have the same distribution as the target set implied by the predicate $\random_{\phi_\ell}$, which we have used to show a lower bound of $\Omega(\tfrac{1}{\phi_\ell})$ for general protocols on $\guessing(2n,\random_{\phi_\ell})$ in Lemma \ref{lem:randomGuessing}, and also a stronger lower bound of $\Omega(\tfrac{\log n}{\phi_\ell})$ for ``random guessing'' protocols, which choose a random edge for each vertex as their guesses. It is straightforward to see that push-pull gossip corresponds exactly to this random guessing game strategy. %
Applying Lemma \ref{lem:simulation}, this means that local broadcast requires in expectation $\Omega(\tfrac{1}{\phi_\ell})$ time for general algorithms and $\Omega(\tfrac{\log n}{\phi_\ell})$ time for push-pull. The additional term of $\Omega(\ell)$ in the theorem statement is required to actually send the broadcast over the latency $\ell$ edge once it is discovered.

Since each edge of $L \times R$ is assigned latency $\ell$ with probability $\phi_\ell=\Omega(\log(n)/n)$, it follows that each $u \in R$ is connected by a latency $\ell$ edge to some node in $L$ with high probability. Hence, the weighted diameter of $G(2n,\ell,n^2,\random_{\phi_\ell})$ is $O(\ell)$ with high probability.

In the remainder of the proof, we show that $G(2n,\ell,n^2,\random_{\phi_\ell})$ has a conductance of $\Theta(\phi_\ell)$ with high probability.
We point out that several previous works prove bounds on the network expansion (e.g., \cite{vadhan} and \cite{DBLP:conf/focs/AugustineP0RU15}). However, as these results were shown for random graphs, we cannot employ these results directly and thus need to adapt these proof techniques to show a conductance of $\Theta(\phi_\ell)$ for our guessing game gadget.

We assume that there is an integer-valued function $f=f(n)$, such that $\tfrac{f}{n} = \phi_\ell$, noting that this assumption does not change the asymptotic behavior of our bounds. 
For readability, we only consider $\ell=1$ and note that the extension to the general case is straightforward.
By construction, $G(2n,1,n^2,\random_{f/n})$ consists of edges with latencies $1$ or $n^2$ and we have \[
  \frac{ \phi_{n^2}}{n^2} \leq \frac{1}{n^2} \leq \frac{\phi_1}{1} ,%
\]
where the last inequality follows from the assumption $\phi_1\ge \Omega\left(\tfrac{\log n}{n}\right)$.
Thus, we know that $\phi_* = \phi_1$ and hence we need to prove $\phi_1 = \Theta(f/n)$.

Consider a set $S \subseteq L \cup R$ of at most $n$ vertices and let $l = |S \cap L|$ and $r = |S \cap R|$. %
We first assume that $l \ge r$, since the number of latency $1$ cross edges is symmetric for vertices in $L$ and $R$; subsequently, we will remove this assumption by a union bound argument.

For vertex sets $A$ and $B$, let $E_1(A,B)$ be the set of the (randomly sampled) latency $1$ edges in the cut $(A,B)$ and define $e_1(A,B) = |E_1(A,B)|$.
Given the set $S$, our goal is to show that many latency $1$ edges originating in $S \cap L$ have their other endpoint in $R \setminus S$, assuming that there are sufficiently many latency $1$ cross edges to begin with. 
In other words, we need to bound from above the probability of the event $e_1(S\cap L,S\cap R) \ge \Omega(f l)$ conditioned that there are sufficiently many latency $1$ cross edges.

\begin{claim}[Sufficiently many latency $1$ cross edges] \label{cl:concentration} 
There exist constants $c,c'>0$, such that events
\begin{align}
  \label{eq:concentration}
  \sLR & = \{ \forall S, |S|\le n\colon (e_1(S\cap L,R)\!\ge\! c f l) \wedge  (e_1(S\cap R,L)\!\ge\! c f r) \},\\
  \label{eq:upperConcentration}
  \sLR^- & = \{ \forall S, |S|\le n\colon (e_1(S\cap L,R)\!\le\! c' f l) \wedge  (e_1(S\cap R,L)\!\le\! c' f r) \}
\end{align}
occur with high probability.
\end{claim}
\begin{proof}
According to the construction of $G(2n,1,n^2,\random_{f/n})$, the latency $1$ cross edges are chosen independently each with probability $f/n$. 
Note that each cross edges is assigned latency $1$ independently with probability $f/n = \Omega(\tfrac{\log n}{n})$. 
Thus, for each node $v$, the expected number of cross edges is $f = \Omega(\log n)$ and, by a standard Chernoff bound, we know that the number of latency $1$ cross edges to $v$ is in $[c_1 f, c_2 f]$ with high probability, for suitable constants $c_2\ge c_1>0$.
After taking a union bound over all nodes in $V(G)$, we can conclude that the claim holds for any set $S \subseteq V(G)$.
\end{proof}

Conditioning on $\sLR$ is equivalent with choosing a subset of (at least) $c f l$ edges among all possible edges in the cut $E_1(S\cap L,R)$ uniformly at random and assigning them latency $1$.
Consider an edge $(v,u) \in E_1(S\cap L,R)$.
It follows that $u \in S \cap R$ (and hence $(v,u) \in E_1(S\cap L, S \cap R)$), with probability $\frac{r}{n}$ and we need to exclude the event 
\[
  \bad(S) =  \{e_1(S \cap L,S\cap R) \ge \tfrac{4}{5} c f l\}, 
\]
  for all ${c f l \choose \frac{4}{5} c f l}$ subsets of latency $1$ edges incident to vertices in $S \cap L$.
In addition, we need to bound the probability that $\bad(S)$ happens, for $S$ chosen in any of the ${n \choose l}$ ways of choosing $S$ that satisfy $|S\cap L|=l$.
\begin{claim} \label{cl:bad}
  $\Prob{\exists S\colon \bad(S) \suchthat \sLR }  \le n^{-\Omega(1)}.$
\end{claim}
\begin{proof}[Proof of claim.]
Combining the above observations, we get
\begin{align} %
  \Prob{\exists S\colon \bad(S) \suchthat \sLR } & \le {n \choose l} { c f l \choose \tfrac{4}{5} c f l } \left( \frac{r}{n} \right)^{ \frac{4}{5} c f l}. \label{eq:bound}
  \intertext{First, we assume that $r$ and $l$ are both large, i.e., $l \ge r \ge c' n$, for a sufficiently small positive constant $c'<\tfrac{4}{5e}$. Then, we can apply Stirling's approximation of the form ${m \choose k} \approx 2^{m \cdot\text{H}_2(\frac{k}{m})}$, where $\text{H}_2(x)=-x \log_2(x) - (1-x)\log(1-x)$ is the binary entropy function. Thus, for sufficiently large $n$, we get
}
  \Prob{\exists S\colon \bad(S) \suchthat \sLR } 
  &\le 2^{n\cdot\text{H}_2(\frac{l}{n}) + c f l  \text{H}_2\left(\frac{4}{5}\right)}\cdot\left(\frac{r}{n}\right)^{\tfrac{4}{5} c f l}\notag \\
  &\le 2^{n + c f l \cdot \text{H}_2\left(\frac{4}{5}\right) - \frac{4}{5} c f l}, \label{eq:bound2}
\intertext{
  where, to derive the second inequality, we have used the facts that $\text{H}_2(\tfrac{l}{n}) \le 1$ and $\tfrac{r}{n} \le \tfrac{1}{2}$, since $r + l \le n$ and $r \le l$. 
  By the premise of the theorem, $f = \Omega(\log n)$, which implies $c\ f\ l = \Omega(n \log n)$.
  Together with the fact that $\text{H}_2(\tfrac{4}{5}) < \tfrac{4}{5}$, this means that the term $(-\frac{4}{5} c\ f\ l)$ dominates in the exponent of \eqref{eq:bound2} and hence
}
  \Prob{\exists S\colon \bad(S) \suchthat \sLR } 
  &\le 2^{- \Theta(n \log n)}.\notag 
  \intertext{Next, we consider the case where $r \le l < c' n$. Applying the upper bound of the form ${m \choose k} \le \left(\tfrac{em}{k}\right)^k$ to \eqref{eq:bound}, tells us that
}
  \Prob{\exists S\colon \bad(S) \suchthat \sLR } 
  &\le \left(\frac{e n}{l}\right)^l \left(\frac{5e\ r}{4n}\right)^{\frac{4}{5} c f l}\notag \\
  &\le \left(\frac{e n}{l}\right)^l \left(\frac{5c' e}{4}\right)^{\frac{4}{5} c f l},\notag 
  \intertext{since $r < c' n$. We get}
  \Prob{\exists S\colon \bad(S) \suchthat \sLR } 
  &\le \exp\left(l \left(1+ \log n - \log l + \frac{4}{5} c f l \log\left(\frac{5}{4}c' e\right)\right)\right)\notag \\
  &\le \exp\left(l \left(1+ \log n + \frac{4}{5} c f l \log\left(\frac{5}{4}c' e\right)\right)\right).\notag 
\notag
\end{align}

  By assumption, $c'<\tfrac{4}{5e}$ and hence the term $ \frac{4}{5} c\ f\ l \log\left(\frac{5}{4}c' e\right)$ in the exponent is negative. Moreover, recall that $f = \Omega(\log n)$ and thus we can assume that $\frac{4}{5} c\ f\ l \log\left(\frac{5}{4}c' e\right) \le - c'' \log n$, for a sufficiently large constant $c''>0$. This term dominates the other terms in the exponent, thereby completing the proof of the claim.
\end{proof}
Considering that $l \ge \tfrac{|S|}{2}$, the above bound implies that at least $\lfloor\tfrac{c f |S|}{10}\rfloor$ latency $1$ edges incident to $S$ are connected to nodes outside in $S$, with probability at least $1 - n^{-\Omega(1)}$.
Taking a union bound over all possible choices for the values of $l$ and $r$ adhering to $r\le l$ and $r + l \le n = \frac{|V(G)|}{2}$, shows that
\begin{align*} 
  \Prob{\forall S, |S|\le {n}\colon e_1(S\cap L,R\setminus S) \ge \tfrac{c f|S|}{10} \suchthat \sLR } \ge 1 - n^{-\Omega(1)}.
\end{align*}

Observe that the latency $1$ cross edges are constructed symmetrically for the left and right side of the bipartite graph $G$ and thus we can apply the above argument in a similar manner for a set $S$ where $r> l$, conditioned on $e_1(S\cap R,L)\!\ge\! c f l$.
Thus, we can conclude that %
 \begin{align*} \label{eq:probBound}
   \Prob{\forall S, |S|\le n\colon e_1(S,V(G)\setminus\! S) \ge \tfrac{c f|S|}{10} \suchthat \sLR } \ge 1 - n^{-\Omega(1)}.
 \end{align*}

We can remove the conditioning in the above equation by virtue of Claim~\eqref{cl:concentration}, since
 \begin{align*}
   \Prob{\forall S, |S|\le n\colon e_1(S,V(G)\setminus\! S) \ge \tfrac{c f|S|}{10} } 
   &\ge \Prob{\forall S, |S|\le n\colon e_1(S,V(G)\setminus\! S) \ge \tfrac{c f|S|}{10} \suchthat \sLR }  \Prob{ \sLR }  \\
   &\ge 1 - n^{-\Omega(1)}.
 \end{align*}

To upper bound $\vol(S)$ for any set $S$, we take into account the $n$ cross edges  of each node in $S$. 
Also, if $v \in L$, then we need to account for the $n-1$ incident clique edges of $v$, yielding $\vol(S) \le 2|S|n$.
Considering the upper bound on the number of latency $1$ cross edges given by \eqref{eq:upperConcentration}, we have
 \[
   \phi_* = \min_S \phi_1(S) = \min_S \frac{e_1(S,V(G)\setminus S)}{\vol(S)} \]
\[   \ge \min_S\frac{c f |S|}{20|S|n} = \Omega(\tfrac{f}{n}),
 \]

where the inequality is true with high probability.
To see that this bound is tight, observe that $\phi_* \le \phi_1(L)$.  
By \eqref{eq:concentration} and \eqref{eq:upperConcentration}, we know that $e_1(L,R) = \Theta(fn)$ and hence $\phi_1(L) = \Theta\left(\tfrac{f}{n}\right)$ with high probability, as required. This completes the proof of Theorem \ref{thm:generalConductance}.
\end{proof}

Finally, we give a family of graphs that illustrate the trade-off among the parameters. Intuitively, when the edge latencies are larger, it makes sense to search for the best possible path and the lower bound is $\Omega(D + \Delta)$; when the edge latencies are smaller, then we can simply rely on connectivity and the lower bound is $\Omega(\ell/\phi_{\ell})$. 
Note that, we can individually obtain a lower bound of $\Omega((\ell/\phi)\log n)$, using the technique in \cite{Chierichetti:2010:ATB:1806689.1806745} where we show that there exists a graph with diameter $(\ell/\phi)\log n$. Unlike here, that lower bound is simply $D$.

\begin{theorem} \label{thm:PUSHPULLOptimal}
	For a given $\alpha \in [\Omega({1}/{n}),O(1)]$ and any integer $\ell \in [1,O( n^2\alpha^2)]$, there is a class of networks of $2n$ nodes, critical weighted conductance $\phi_* = \phi_\ell = \Theta(\alpha)$, maximum degree $\Delta = \Theta(\alpha n)$, and weighted diameter $D = \Theta({1}/{\phi_\ell})$, such that any gossip algorithm that solves broadcast with at least constant probability, requires 
	$\Omega\left(\min\left\{ \Delta + \D, {\ell}/{\phi_\ell} \right\}\right)$ 
	rounds.
\end{theorem}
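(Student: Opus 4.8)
The plan is to build a single family of $2n$-node graphs that simultaneously exhibits the high-degree obstruction of Theorem~\ref{thm:generalLBCAST} and the poor-connectivity obstruction of Theorem~\ref{thm:generalConductance}, and then to reduce broadcast to the guessing game through Lemma~\ref{lem:simulation}. Because the claimed bound is the \emph{minimum} of $\Delta+D$ and $\ell/\phi_\ell$, it suffices to show that no algorithm finishes before time $T:=c\min\{\Delta+D,\,\ell/\phi_\ell\}$, and I would argue this in whichever of two regimes is active: when $\ell/\phi_\ell\le\Delta+D$ the bottleneck is connectivity, and I invoke the random-target game (Lemma~\ref{lem:randomGuessing}); when $\Delta+D\le\ell/\phi_\ell$ the bottleneck is degree, and I invoke the singleton game (Lemma~\ref{lem:singleton}) together with the trivial $\Omega(D)$ diameter bound. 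Setting every slow cross edge to latency $\hi=n^2\ge T$ guarantees that no slow edge completes a round trip inside the time budget, so the lazy latency assignment Alice uses in Lemma~\ref{lem:simulation} stays consistent throughout the simulation.

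For the graph, I would take a chain of $\Theta(1/\alpha)$ clusters, each a latency-$1$ clique of size $\Theta(\alpha n)$ (this alone pins $\Delta=\Theta(\alpha n)$), and embed between consecutive clusters a guessing gadget of the form $G(\cdot,\ell,n^2,\random_{\Theta(\alpha)})$ whose fast cross edges have latency $\ell$ and appear independently with probability $\Theta(\alpha)$, so that the fast edges across any cut have exactly the $\random_{\Theta(\alpha)}$ distribution needed to apply the guessing-game lemmas. A very sparse latency-$1$ ``backbone,'' threaded through one designated node per cluster, is what keeps the weighted diameter at $\Theta(1/\alpha)$ while being far too thin to actually carry a broadcast. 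The design principle is that a short weighted path to every node must \emph{exist} (giving $D=\Theta(1/\phi_\ell)$), yet the edges realizing an efficient dissemination are hidden among slow edges, so that every route an algorithm can genuinely exploit forces it to clear $\Omega(1/\phi_\ell)$ latency-$\ell$ bottlenecks.

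Verification of the parameters I would split into three pieces. The conductance claim $\phi_\ell=\Theta(\alpha)$ follows from the same concentration machinery as Theorem~\ref{thm:generalConductance}: a Chernoff bound yields $\Theta(\alpha)\cdot\vol$ fast cross edges across every cut with high probability (the analogue of Claim~\ref{cl:concentration}), and a union bound over all cuts excludes any cut being too sparse (the analogue of Claim~\ref{cl:bad}). That the critical latency is exactly $\ell$ I would get by checking $\phi_{\ell}/\ell\ge\phi_{\ell'}/\ell'$ at the other scales, which is precisely where I use that the latency-$1$ backbone has conductance only $O(\alpha/\ell)$ and that all slow edges have latency $n^2$. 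For the lower bound itself, in the connectivity regime the fast cross edges match the $\random_{\Theta(\alpha)}$ predicate, so Lemma~\ref{lem:randomGuessing} forces $\Omega(1/\phi_\ell)$ guessing rounds, and each successive crossing is gated by the $\Omega(\ell)$ latency of a fast edge, yielding $\Omega(\ell/\phi_\ell)$; in the degree regime I isolate a single cluster boundary, apply Lemma~\ref{lem:singleton} for $\Omega(\Delta)$, and add the trivial $\Omega(D)$, giving $\Omega(\Delta+D)$.

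I expect the main obstacle to be exactly the feature that makes the theorem interesting: reconciling the small diameter $D=\Theta(1/\alpha)$ with the strictly larger broadcast cost $\Omega(\ell/\phi_\ell)$. The substance of the argument is to show that the $\Omega(1/\phi_\ell)$ latency-$\ell$ crossings are genuinely \emph{sequential}, so that the lower bound is the \emph{product} $\ell\cdot(1/\phi_\ell)$ rather than the sum $\ell+1/\phi_\ell$ that a single cut (as in Theorem~\ref{thm:generalConductance}) produces; in particular I must rule out that an algorithm collapses these crossings by exploiting the hidden short backbone path or by massively parallel speculative probing. This amounts to proving that the hidden short path is no easier to discover than any other crossing sequence, i.e.\ that the guessing-game reduction still binds cut-by-cut along the chain and that discoveries at later cuts cannot be pipelined ahead of the rumor's arrival. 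A secondary difficulty is carrying the conductance concentration and the critical-latency check through the chained construction rather than the single gadget of Theorem~\ref{thm:generalConductance}, which requires the union bounds to hold uniformly over cuts that may slice through several clusters at once.
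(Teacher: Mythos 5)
Your high-level plan (a chain of latency-$1$ cliques of size $\Theta(\alpha n)$ joined by guessing gadgets, a hidden fast route that certifies the small diameter, and a two-strategy analysis giving the minimum of $\Delta+D$ and $\ell/\phi_\ell$) is the same skeleton as the paper's ring construction. However, there is a concrete quantitative gap in your gadget choice. You place $G(\cdot,\ell,n^2,\random_{\Theta(\alpha)})$ between consecutive clusters, so the only cut edges of latency $\le\ell$ at a cluster boundary are the $\Theta(\alpha s^2)$ sampled fast edges, where $s=\Theta(\alpha n)$. The volume of half the chain is $\Theta(ns)$, so $\phi_\ell=\Theta(\alpha s^2)/\Theta(ns)=\Theta(\alpha s/n)=\Theta(\alpha^2)$, not the claimed $\Theta(\alpha)$. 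The paper avoids this by making \emph{every} cross edge have latency at most $\ell$: all $s^2-1$ non-target cross edges per boundary get latency exactly $\ell$ and a single uniformly random one gets latency $1$, so all $s^2$ edges count toward $E_\ell(C)$ and the constant $c$ in $s=cn\alpha$ is tuned so that $\phi_\ell(C)=\alpha$ exactly (Lemma~\ref{lem:alphaphi}); the matching $\Omega(\alpha)$ bound over all cuts (Lemma~\ref{lem:correctphi}) is then a deterministic good/bad-node argument around the ring, with no Chernoff or union bound needed. With your parameters, the theorem you would prove has $\phi_\ell=\Theta(\alpha^2)$, which changes both the conductance claim and the $\ell/\phi_\ell$ term.

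The second gap is the latency-$1$ ``backbone through one designated node per cluster.'' If the backbone is identifiable from the topology (and node IDs and neighbor lists are known, so a structurally distinguished path is identifiable), the algorithm simply routes along it and floods each clique, finishing in $O(1/\alpha+\log n)$ time and destroying the lower bound. If instead the backbone edge at each boundary is hidden uniformly among the cross edges, then it \emph{is} the singleton target of Lemma~\ref{lem:singleton}, the extra $\random_{\Theta(\alpha)}$ layer becomes superfluous (and harmful per the previous paragraph), and your construction collapses to the paper's. You correctly identify ruling out exploitation of the backbone as ``the main obstacle,'' but you leave it unresolved; the paper's resolution is precisely that the hidden latency-$1$ edge per boundary is a singleton target, so discovering it costs $\Omega(\Delta)$ guesses (Lemma~\ref{lem:singleton} via Lemma~\ref{lem:simulation}) followed by $\Omega(D)$ to traverse, while the only alternative is to pay latency $\ell$ at each of the $\Theta(1/\alpha)$ boundaries, giving $\Omega(\ell/\phi_\ell)$. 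Note also that Lemma~\ref{lem:randomGuessing} lower-bounds the time to hit \emph{every} element of $T^B$ (local broadcast), not the time to cross a cut once, so it is not the right tool for chaining crossings in one-to-all dissemination; the paper does not use it in this theorem at all.
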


\begin{proof}
	We create a network $G$ consisting of a series of $k$ node layers $V_1,\dots,V_{k}$ that are wired together as a ring, using the guessing game gadgets introduced above. 
	We define $k=\tfrac{2}{c\alpha}$ where $c=\left(\tfrac{3}{4} + \tfrac{1}{4} \sqrt{9-\tfrac{8}{n\alpha}}\right)$. This implies that $1 \le c < 3/2$ as $\alpha \in [\Omega({1}/{n}),O(1)]$.  Each layer consists of $s = cn\alpha$ nodes.
	As it does not change our asymptotic bounds, we simplify the notation by assuming that 
	$2/c\alpha$ and $c n \alpha$ are integers.

\begin{figure}[h]
	\centering
	\includegraphics[width=0.48\textwidth , height=0.40\textwidth]{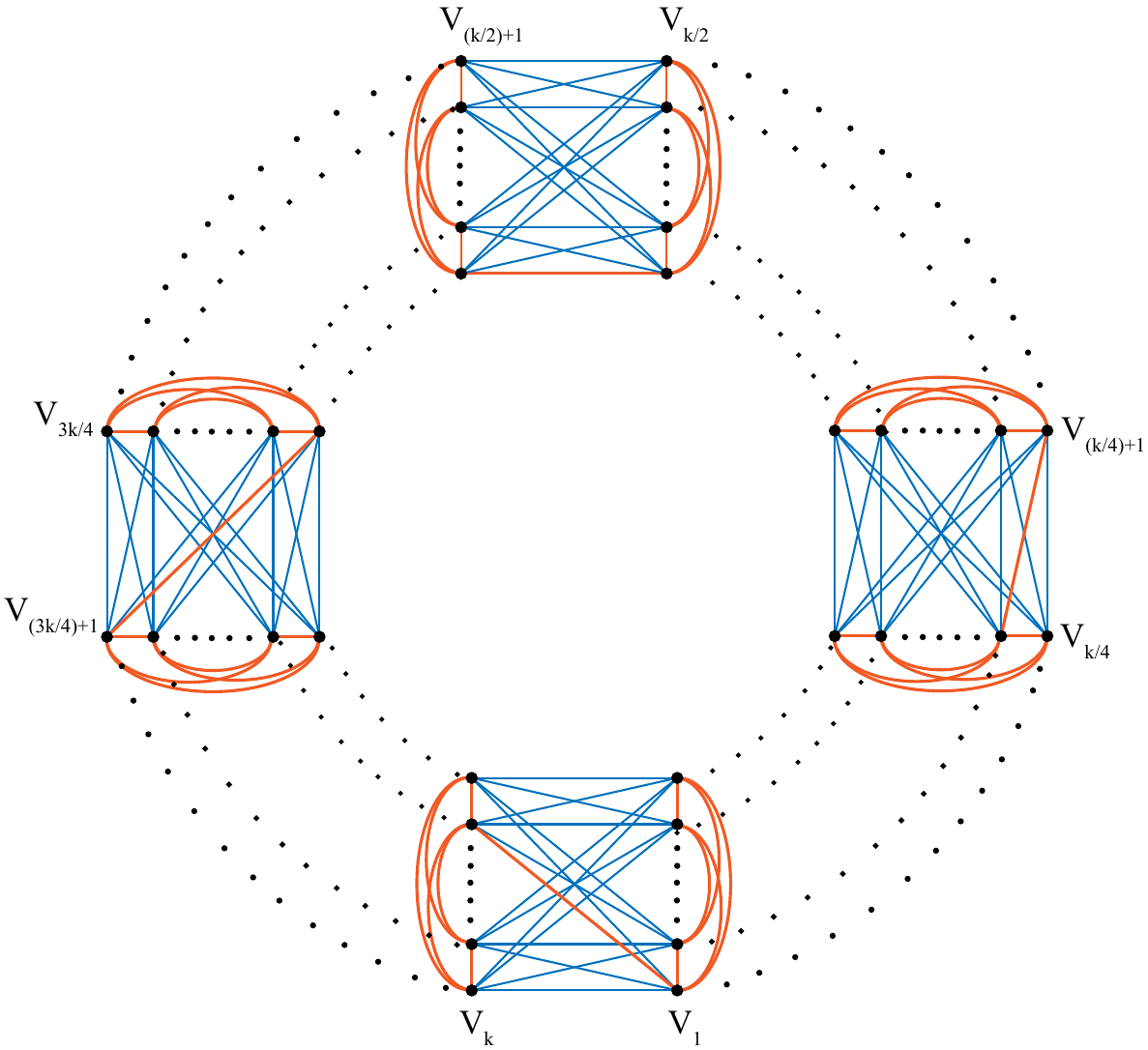}
	\caption{Guessing Game Gadgets wired together as a ring.}
	\label{fig:ring}
\end{figure}
	
	\noindent For each pair $V_i$ and $V_{(i+1)\text{ mod }k}$ ($0 \le i \le k-1$), we construct the symmetric guessing game gadget $G^\text{sym}(2cn\alpha,1,\ell,P)$ (in Section \ref{sec:gadget}), for simulating a gossip algorithm to solve the game  $\guessing(2cn\alpha,P(|T|=1))$. 
	That is, we create a complete bipartite graph on $V_i$ and $V_{(i+1)\text{ mod }k}$ and form cliques on $V_i$ and $V_{(i+1)\text{ mod }k}$ (see Figure \ref{fig:ring}). %
	We assign latency $\ell$ to every cross edge between $V_i$ and $V_{(i+1)\text{ mod }k}$, except for a uniformly at random chosen edge that forms the singleton target set, which we assign latency $1$. %
	Observe that the weight-$j$ conductance $\phi_j$ %
	cannot be maximal for any $j$ other than $1$ or $\ell$. %
 
\begin{observation} \label{ob:degree}
Let $s = cn\alpha$.
Graph $G$ is $(3s-1)$-regular.
\end{observation}
\begin{proof}
For a layer $V_i$, we call $V_{(i-1)\text{ mod }k}$ the \emph{predecessor layer} and $V_{(i+1)\text{ mod }k}$ the \emph{successor layer}.
The size of a layer is $s = cn\alpha$.
Each node has $2s$ edges to its neighbors in the predecessor resp.\ successor layer and $s-1$ edges to nodes in its own layer.
This means that $G$ is a $(3s-1)$-regular graph.
\end{proof}
	
	We define a cut $C$ that divides the ring into two equal halves such that none of the internal clique edges are cut edges. By a slight abuse of notation, we also use $C$ to denote the set of vertices present in the smaller side of the partition created by the cut $C$ (ties broken arbitrarily).

	\begin{lemma} \label{lem:alphaphi}
		$\phi_\ell(C) = \alpha$.
	\end{lemma}
	
	\begin{proof}
Since $C$ partitions $G$ into two sets of identical size, the volume can be determined by considering either partition of size $n$, thus we focus on the node set $C$. Also, by Observation \ref{ob:degree} we know that $G$ is $(3s-1)$-regular. %
The volume of $C$ can be calculated to be $n(3cn\alpha-1)$. The number of cut edges of latency $\le \ell$ is $2(cn\alpha)^2$ (by the construction of $C$). According to Definition \ref{def:dweightedcond}, the $\ell$-weight conductance is given by
$
  \phi_{\ell}(C) = \frac{2(cn\alpha)^2}{n(3cn\alpha-1)}.
$
By plugging in the value of $c$, we can verify that $\phi_\ell(C)$ is exactly equal to $\alpha$.
\end{proof}
	
	Using the conductance bound of Lemma \ref{lem:alphaphi} for cut $C$, we know that $\phi_\ell \le \alpha$. In the proof of the next lemma, we show that $\phi_\ell = \Omega(\alpha)$.

	\begin{lemma} \label{lem:correctphi}
		The weight-$\ell$ conductance of the constructed ring network is $\phi_\ell = \Theta(\alpha)$.
	\end{lemma}

\begin{proof}
By Lemma \ref{lem:alphaphi}, we know that $\phi_\ell \le \alpha$ as the actual graph conductance is always $\le$ to any cut conductance. We will now show $\phi_\ell = \Omega(\alpha)$ as well.

By Observation \ref{ob:degree} we know that $G$ is $(3s-1)$-regular and therefore for a set of nodes $U$ the volume $\vol(U)$ is exactly equal to $(3s-1)|U|$. This clearly implies that for any two sets $U$ and $V$, $\vol(U) \le \vol(V)$ if and only if $ |U|\le |V|$. 

Now, consider an arbitrary cut $(U,V(G)\setminus U)$ of $G$ and suppose that $U$ contains at most half of the nodes of $G$, i.e., $|U| \le n$, since $G$ has $2n$ nodes.
If there are at least $\Theta(s^2)$ cut edges, then, using the fact that $|U| \le n$, we get 
\[
\phi_\ell(U) \ge \Theta(s^2/s|U|) = \Theta(s/|U|) \ge \Theta(s/n) \ge \Theta(\alpha),
\]
and we are done.
In the remainder of the proof, we will show that there are $\Theta(s^2)$ cut edges. 
We distinguish two cases:\\
{\boldmath 1. $|U|\ge 3s/4$:}\\
We classify each node in $U$ either as \emph{good} if it has at least $s/4$ adjacent edges across the cut $(U,V\setminus U)$ and as \emph{bad} otherwise.  
Thus, our goal is to identify $\Theta(s)$ good nodes, which in turn implies $\Theta(s^2)$ cut edges.

Let $S$ be an arbitrary subset of $3s/4$ nodes in $U$. If all nodes in $S$ are good, we are done.  Otherwise, let $x \in S$ be a bad node. It is important to note that the following properties are true for every bad node:
\begin{compactitem}
\item[(a)] Node $x$ is in a layer in $G$ which contains at least $3s/4$ nodes inside $U$.  
\item[(b)] The successor layer from $x$ has at least $3s/4$ nodes inside $U$.  
\end{compactitem}
To see why (a) holds, assume that it was not true. Then, $x$ would have at least $s/4$ neighbors in its own layer across the cut, contradicting the assumption that $x$ is bad.
Similarly, if (b) was false, $x$ would be connected to at least $s/4$ nodes in the successor layer outside $U$. (This is true of the predecessor layer too.)

Let $A$ be the successor layer to the layer containing $x$. We now run the following procedure:
\begin{compactenum}
\item[(1)] \textbf{Invariant:} $A$ contains at least $3s/4$ nodes in $U$.
\item[(2)] If at least half of the nodes in $A$ are good, we are done. Terminate and claim $\Theta(s^2)$ cut edges.
\item[(3)] Otherwise, let $y$ be a bad node in $A$.
\item[(4)] Let $A'$ be the successor layer of the layer $A$. Then, start again at Step (1) with $A=A'$ and $y=x$.
  From the assertion (b),  $A'$ contains at least $3s/4$ nodes in $S$.
\end{compactenum}

If this procedure ever terminates in Step (2), we are done.  
Otherwise, it continues around until every layer has been explored.  In that case, the invariant implies that every layer contains at least $3s/4$ nodes in $U$.  This implies that $> 1/2$ of the nodes of $G$ are in $U$, which contradicts the choice of $U$.  
Thus, the procedure does terminate, which means there must be at least $\Theta(s^2)$ cut edges, implying $\phi_\ell \ge \alpha$.  \\
{\boldmath 2. $|U| < 3s/4$:}\\
Let $m$ be the number of nodes in $U$. Since $G$ is $(3s-1)$-regular, the volume of $U$ is $m(3s-1)$.  Each node in $U$ now contains at least $s/4$ neighbors outside of $U$ (since it has $\ge s$ neighbors and there are only $< 3s/4$ other nodes in $U$), so the cut size is at least $sm/4$.  Thus, the conductance of this graph $\phi_\ell \ge \frac{(sm/4)}{m(3s-1)} = \Omega(1) \ge \Theta(\alpha$).

Since, $\phi_\ell \le \alpha$ and $\phi_\ell \ge \Theta(\alpha)$, it is clearly the case that $\phi_\ell = \Theta(\alpha)$, which is what we wanted to prove.%
\end{proof}

	Combining Lemmas~\ref{lem:alphaphi} and \ref{lem:correctphi} (and again using cut $C$), we argue that the critical latency is $\ell$.
	
	\begin{lemma} \label{lem:phibound}
		For any $\ell \le O(cn\alpha)^2$, $\phi_* =\phi_\ell=\Theta(\alpha)$.
	\end{lemma}

\begin{proof}
To prove that $\phi_*$ is in fact $ \phi_\ell$, which by Lemma \ref{lem:correctphi} is $\Theta(\alpha)$, we need to show that $(\phi_\ell/ \ell) \ge (\phi_1/1) = \phi_1$. 
To this end, let us consider the cut $C$ defined above. We will show that  
$\frac{\phi_\ell}{\ell } \ge \phi_1(C) \ge \phi_1$, and since weight-$j$ conductance $\phi_j$ (cf. Definition \ref{def:dweightedcond}), cannot be maximal for any $j$ other than $1$ or $\ell$, we get $\phi_* = \phi_\ell$.

There are two latency $1$ cross edges in the cut $C$ and the volume of $C$ can be calculated as in the proof of Lemma \ref{lem:alphaphi} to be $n(3cn\alpha-1)$.
Thus, we need to show that
\[
  \frac{\phi_\ell}{\ell} = \frac{\Theta(\alpha)}{\ell} \ge \frac{2}{(3cn\alpha - 1)n}.
\]
As $c$ is constant, the 
inequality is true as long as $\ell = O(\alpha^2 n^2)$, which is ensured by the premise of the theorem. 
\end{proof}

\noindent The weighted diameter of the network $D = \Theta(k/2)$, since each pair of adjacent node layers is connected by a latency $1$ edge and, internally, each layer forms a latency $1$ clique. Using the fact that $c \in [1, \frac{3}{2})$, it can be shown that $ (2/3\alpha) < D \le (1/\alpha)$, implying that $D = \Theta(1/\phi_\ell)$ (by lemma \ref{lem:correctphi}).
	
Now, consider a source node in layer $V_1$ that initiates the broadcast of a rumor. Each node can either spend time in finding the required fast edge (which we assume can be done in parallel) or, instead, it can instantly use an edge of latency $\ell$ to forward the rumor.
	Lemma \ref{lem:singleton} tells us that finding the single latency $1$ cross edge  with constant probability, for the guessing game gadget corresponding to any pair of node layers, requires $\Omega(\Delta)$ rounds, and then forwarding the rumor takes $\Omega(D)$ additional rounds.
	Alternatively, the algorithm can forward the rumor along latency $\ell$ edges across node layers and spread the rumor using the latency $1$ edges within each clique. 
	It follows that the required time for broadcast is 
	$
	\Omega\left(\min\left\{ \Delta + D, {\ell}/{\phi_\ell} \right\}\right).
	$
\end{proof}

We obtain the following corollary that gives a lower bound on information dissemination in terms of $\phi_{avg}$, either by a similar analysis as above, or by the application of Theorem \ref{thm:conductance}.

\begin{corollary} \label{coro:avglower}
For a given $\alpha \in [\Omega({1}/{n}),O(1)]$ and any integer $\ell \in [1,O( n^2\alpha^2)]$, there is a class of networks of $2n$ nodes, average weighted conductance $\phi_{avg} = \Theta(\alpha/\ell)$, maximum degree $\Delta = \Theta(\alpha n)$, and weighted diameter $D = \Theta(1/{\ell}{\phi_{avg}})$, such that any gossip algorithm that solves broadcast with at least constant probability, requires 
  $\Omega\left(\min\left\{ \Delta + \D, 1/{\phi_{avg}} \right\}\right)$ 
rounds.
\end{corollary}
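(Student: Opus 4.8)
The plan is to avoid reproving anything combinatorial and instead reuse the exact network family already constructed for Theorem \ref{thm:PUSHPULLOptimal}, translating its guarantees into the language of $\phi_{avg}$ by invoking Theorem \ref{thm:conductance}. Recall that the ring-of-gadgets network of Theorem \ref{thm:PUSHPULLOptimal} uses only two distinct edge latencies, namely $1$ and $\ell$, and that its proof already establishes $\phi_* = \phi_\ell = \Theta(\alpha)$, critical latency $\ell_* = \ell$, maximum degree $\Delta = \Theta(\alpha n)$, weighted diameter $D = \Theta(1/\phi_\ell) = \Theta(1/\alpha)$, and the broadcast lower bound $\Omega(\min\{\Delta + D, \ell/\phi_\ell\})$. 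Since only two latency values occur, the number of non-empty latency classes satisfies $\mathcal{L} \le 2 = O(1)$.

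First I would compute $\phi_{avg}$ for this family. Plugging $\mathcal{L} = O(1)$, $\phi_* = \Theta(\alpha)$, and $\ell_* = \ell$ into the two-sided relation $\tfrac{\phi_*}{2\ell_*} < \phi_{avg} < \mathcal{L}\tfrac{\phi_*}{\ell_*}$ of Theorem \ref{thm:conductance} pins down $\phi_{avg} = \Theta(\alpha/\ell)$, exactly as claimed. The remaining structural parameters then follow by direct substitution: the degree $\Delta = \Theta(\alpha n)$ is unchanged, while the diameter rewrites as $D = \Theta(1/\alpha) = \Theta(1/(\ell \cdot (\alpha/\ell))) = \Theta(1/(\ell\,\phi_{avg}))$, matching the corollary's stated value.

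It remains to re-express the lower bound. Using $\phi_{avg} = \Theta(\alpha/\ell)$ gives $1/\phi_{avg} = \Theta(\ell/\alpha) = \Theta(\ell/\phi_\ell)$, so the quantity $\ell/\phi_\ell$ appearing in the conclusion of Theorem \ref{thm:PUSHPULLOptimal} is $\Theta(1/\phi_{avg})$; hence $\Omega(\min\{\Delta + D, \ell/\phi_\ell\}) = \Omega(\min\{\Delta + D, 1/\phi_{avg}\})$, which is precisely the bound asserted here. The one point I would treat as the crux is verifying that $\mathcal{L}$ really is a constant for this construction, since otherwise the upper side of the Theorem \ref{thm:conductance} relation would inflate $\phi_{avg}$ by a factor of $\mathcal{L}$ and weaken the $\Theta$-estimates; because the ring gadgets use only the latencies $1$ and $\ell$, this holds immediately, and all of the $\Theta$ and $\Omega$ estimates above are preserved exactly. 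The alternative route flagged in the statement, namely redoing the guessing-game reduction directly on the same graph with the bound phrased in terms of $\phi_{avg}$, reaches an identical conclusion but is strictly more work, so I would prefer the Theorem \ref{thm:conductance} shortcut.
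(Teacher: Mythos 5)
Your proposal is correct and follows essentially the same route as the paper's own proof: both observe that the ring-of-gadgets network of Theorem \ref{thm:PUSHPULLOptimal} uses only the latencies $1$ and $\ell$, so $\mathcal{L}=2$, and then apply Theorem \ref{thm:conductance} to conclude $\phi_{avg}=\Theta(\phi_*/\ell_*)=\Theta(\alpha/\ell)$ before substituting back into the parameters and the lower bound. Your write-up is somewhat more explicit about the diameter and lower-bound translations, but the underlying argument is identical.
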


\begin{proof}
Observe that in the given graph, there exists edges with latency either $1$ or $\ell$, and as such the number of non-empty latency classes here is $2$. Now,  Theorem \ref{thm:conductance} reduces to ${\phi_*}/{2\ell_*} < \phi_{avg} < 2{\phi_*}/{\ell_*}$. This implies that for this case $\phi_{avg}= \Theta({\phi_*}/{\ell_*})$. Alternatively, $\phi_* = \ell \phi_{avg}$ (as in this case $\ell=\ell*$). Replacing this value of $\phi_*$ in  Theorem \ref{thm:PUSHPULLOptimal} gives us the above required corollary.
\end{proof}
\section{Algorithms for Known Latencies}
\label{sec:known}
In this section, we discuss the case where each node knows the latencies of all its adjacent edges. Later, in Section \ref{sec:upper_bounds}, we provide upper bounds for the case where nodes are not aware of the edge latencies.
For this section only, we focus on the problem of all-to-all information dissemination (instead of one-to-all information dissemination), as it will simplify certain issues to solve the seemingly harder problem.  

Here, we provide two different solutions to the problem of all-to-all information dissemination. In Section \ref{spanner_algo} we provide a spanner based randomized algorithm that solves all-to-all information dissemination in $O(D\log^3n)$ rounds w.h.p., whereas in Section \ref{app:altalgo} we provide a pattern based deterministic solution for all-to-all broadcast taking $O(D\log^2n\log D)$ rounds. The additional $\log D$ factor (instead of $\log n$) makes the pattern based algorithm unsuitable for graphs with large diameters. Note that, for either algorithm, we assume that messages can be of polynomial size (in $n$). 

\para{All-to-all information dissemination} {%
Initially, each node begins with a source message and, when the protocol terminates, every node must have received all other source messages.}

(Of course, all-to-all information dissemination also solves one-to-all information dissemination. Furthermore, most one-to-all information dissemination algorithms can be used to solve all-to-all information dissemination by using them to collect and disseminate data.)

\subsection{Spanner Broadcast Algorithm}
\label{spanner_algo}

In Section \ref{spanner_algo}, we use the fact that nodes know a polynomial upper bound on the network size (and this is the only place in this paper where we rely on that assumption). 
When edge latencies are known, the spanner algorithm (described below) solves all-to-all information dissemination in $O(D\log^3n)$ which differs from the trivial lower bound of $\Omega(D)$ by only polylog factors.  %

\subsubsection{Preliminaries}

We initially assume that the weighted diameter ($D$) is known to all nodes; later (in Section \ref{sec:diameter}), we do away with the assumption via a guess-and-double technique. It is assumed w.l.o.g. that every edge has latency $\leq D$: clearly we do not want to use any edges with latency $> D$.

\para{Local broadcast}
\label{back_algo}
An important building block of our algorithms is \emph{local broadcast}. For unweighted graphs, the (randomized) {Superstep} algorithm by Censor-Hillel et al.~\cite{Censor-Hillel:2012:GCP:2213977.2214064} and the Deterministic Tree Gossip (DTG) algorithm by Haeupler~\cite{Haeupler:2013:SFD:2627817.2627868} solve this problem.  We make use of the DTG algorithm, which runs in $O(\log^2{n})$ rounds on unweighted graphs. See \cite{Haeupler:2013:SFD:2627817.2627868} %
for details. %
Observe that for the unweighted case, if any algorithm solves local broadcast in $O(t)$ rounds, it obtains a $t$-spanner as a direct consequence, which thereafter can be used for propagating information. However, for graphs with latencies, just solving local broadcast might take $O(D)$ time, resulting in a $O(D)$-spanner (and leading to an $O(D^2)$ solution for information dissemination). Recall that a subgraph $S = (V, E')$ of a graph $G = (V, E)$ is called an $\alpha$-spanner if any two nodes $u, v$ with distance $\ell$ in $G$ have distance at most $\alpha \ell$ in $S$.

For weighted graphs, we are mainly interested in the \emph{$\ell$-local broadcast problem} in which each node disseminates some information to all its neighbors that are connected to it by  edges of latency $\leq \ell$.  While DTG assumes edges to be unweighted (uniform weight), 
we can execute the same protocol in a graph with non-uniform latencies simply by ignoring all edges with a latency larger than $\ell$ and simulating $1$ round of the DTG protocol as $\ell$ rounds in our network. We refer to this protocol as the $\ell$-DTG protocol.  
It follows immediately that within $O(\ell \log^2{n})$ time, the $\ell$-DTG protocol ensures that each node has disseminated the information to all its neighbors connected to it with edges of latency $\leq \ell$. 
Note that we can trivially solve the all-to-all information dissemination problem in $O(D^2 \log^2 n)$ time using $\ell$-DTG protocol (if $D$ were known) by simply repeating it $D$ times with $\ell = D$.  

The challenge now, given the restriction that finding neighbors by a direct edge might be costly, is to somehow find sufficiently short paths to all of them. %
We show here that with sufficient exploration of the local neighborhood up to $O(\log n)$ steps and using only favorable weights, we are able to obtain a global spanner.
An intermediate goal of our algorithm is to construct an $O(\log{n})$-spanner and to obtain an orientation of the edges such that each node has a small, i.e., $O(\log n)$, out-degree.\footnote{It is clearly impossible to guarantee small degree in an undirected sense, for example, if the original graph is a star.} Once we have such a structure, we achieve all-to-all information dissemination by using a flooding algorithm that repeatedly activates the out-edges in round-robin order.

\subsubsection{Spanner Construction Procedure}   
\label{sec:spanner}

In a seminal work, Baswana and Sen \cite{baswana} provide a spanner construction algorithm for weighted graphs (where weights did not correspond to latency) in the $\mathcal{LOCAL}$ model of communication. As our goal here is to find a low stretch, low out-degree spanner,  we modify the algorithm of \cite{baswana} by carefully associating a direction with every edge that is added to a spanner such that each node has w.h.p. $O(\log{n})$ out-degree. To deal with latencies, we choose to locally simulate the algorithm on individual nodes after obtaining the $\log n$-hop neighborhood information by using the $\ell$-DTG protocol. We show that this $\log n$-hop neighborhood information is sufficient for obtaining the required spanner. %
The algorithm in \cite{baswana} also assumes distinct edge weights. We can ensure this by using the unique node IDs to break ties.

Each node $v$ executes a set of rules for adding edges (explained below) and each time one of these rules is triggered, $v$ adds some of its incident edges to the spanner while assigning them as \emph{outgoing} direction. 
This way, we obtain a low stretch spanner (undirected stretch) where nodes also have a low out-degree, which we leverage in the subsequent phases of our algorithm.

For a given parameter $k$, the algorithm  computes a $(2k-1)$-spanner by performing $k$ iterations.
At the beginning of the $i$-th iteration, for $1 \le i \le k-1$, every node that was a cluster center in the previous iteration, chooses to become an active cluster with probability $\hat{n}^{-1/k}$, for some $n \le \hat{n} \le \poly(n)$; note that for $i=1$, every node counts as a previously active center.
Then, every active center $c$ broadcasts this information to all cluster members.
As a cluster grows by at most $1$ hop in each round, this message needs to be disseminated throughout the $i$-neighborhood of $c$.\footnote{By slight abuse of notation, we use $c$ to denote cluster centres and the cluster itself when the distinction is clear from the context.}
Then, every cluster member broadcasts its membership information to all its neighbors to ensure that every node is aware of its adjacent active clusters.
For adding edges to the spanner, nodes also remember its set of incident clusters $C_{i-1}$ that were active in iteration $i-1$.
With this information in hand, every node $u$ adds some of its incident edges to its set of spanner edges $H_u$, and also (permanently) discards some edges, as follows:\\
\emph{(Rule 1)} If none of $u$'s adjacent clusters in $C_{i-1}$ were sampled in iteration $i$, then $u$ adds its least weight edge to cluster $c$ as an outgoing edge to $H_u$ and discards all other edges to nodes in $c$, for every $c \in C_{i-1}$ .\\
\emph{(Rule 2)} If $u$ has active adjacent clusters, then $u$ will add the edge $e_v$ to some cluster $c$ with the minimum weight among all these clusters and, for each adjacent cluster $c' \in C_{i-1}$ that has a weight less than $e_v$, node $u$ also adds one outgoing edge to the respective node in $c'$. All other edges from $v$ to nodes in clusters $c$ and $c'$ are discarded.

In the $k$-th iteration, every vertex $v$ adds the least weight edge to each adjacent cluster in $C_{k-1}$ to $H_v$.

We first show that the size of the obtained spanner does not increase significantly when running the algorithm of \cite{baswana} with an estimate of $n$ (namely $\hat{n}$).

\begin{lemma}\label{lem:spannerClassic}
	Consider a synchronous network of $n$ nodes where nodes know only $\hat{n}$,  where $n \le \hat{n} \le n^c$, for some constant $c\ge 1$.
	For any $k\ge c$, there's a distributed algorithm (based on \cite{baswana}) that computes a spanner and terminates in $O(k)$ rounds in the $\mathcal{LOCAL}$ model %
	and each node's out-degree is $O(n^{{c}/{k}}\log{n})$ w.h.p. 
\end{lemma}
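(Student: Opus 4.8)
The plan is to instantiate the Baswana--Sen template of Section~\ref{span_construct} and to check that replacing the true count $n$ by the overestimate $\hat n$ in the sampling probability $p := \hat n^{-1/k}$ perturbs only the out-degree guarantee, leaving both the stretch and the round complexity intact. For the stretch, I would simply inherit the $(2k-1)$-bound of \cite{baswana}: the argument that every discarded edge $(u,c')$ is replaced by a path of length at most $2k-1$ through the surviving cluster hierarchy relies only on the fact that there are $k$ phases and that a node always retains its lightest representative edge to each cluster it abandons (Rules~1 and~2). Nothing in that argument refers to the value of the sampling probability, so lowering $p$ from $n^{-1/k}$ to $\hat n^{-1/k}$ (which merely causes fewer clusters to survive, hence a possibly \emph{larger} spanner) cannot increase the stretch.

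For the $O(k)$ round bound I would exploit the $\mathcal{LOCAL}$ model directly rather than pipelining the phases: in $k$ communication rounds every node learns the entire topology \emph{and} the random coin flips of all nodes in its $k$-hop ball. Since every cluster produced during the $k$ phases has radius strictly less than $k$, the set of edges a node $u$ adds to $H_u$ is a deterministic function of the data in $u$'s $k$-ball. Hence, after this single gathering stage, each node simulates all $k$ phases locally with no further communication, and the outputs agree with a synchronous execution. This is exactly the local-simulation idea used later in the paper, and it yields termination in $O(k)$ rounds.

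The heart of the lemma is the out-degree bound. I would fix a node $u$ and phase $i$, list the clusters of $C_{i-1}$ adjacent to $u$ in increasing order of $u$'s lightest edge weight to them, and recall that each such cluster is (re)sampled independently with probability $p=\hat n^{-1/k}$. By Rule~2 the out-edges $u$ contributes in phase $i$ are precisely the edges to clusters appearing before the first sampled cluster in this order, plus the one edge to that cluster; by Rule~1, if none is sampled then $u$ emits one edge per adjacent cluster and is finalized. In every case the count is stochastically dominated by a geometric variable with success probability $p$, of expectation $1/p=\hat n^{1/k}\le n^{c/k}$. Summing over the at most $k$ phases in which $u$ is active and applying a tail bound for the resulting sum of independent geometrics (using $(1-p)^{x}\le e^{-px}$) gives an out-degree of $O\!\big((k+\log n)\,\hat n^{1/k}\big)$ with probability $1-n^{-\Omega(1)}$ per node; a union bound over all $n$ nodes preserves this w.h.p.

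The step I expect to be the main obstacle is collapsing this estimate to the clean $O(n^{c/k}\log n)$ stated in the lemma without an extra spurious factor. The per-phase geometric contributions multiply by the number of active phases, so I must argue that in the regime actually used here, namely $k=O(\log n)$ (which is forced anyway by the $O(\log n)$-stretch, $O(\log n)$-degree target), the factor $(k+\log n)$ is $O(\log n)$ and the bound becomes $O(\hat n^{1/k}\log n)=O(n^{c/k}\log n)$. Two subtleties need care within this: justifying the stochastic domination despite the fact that cluster-discarding changes $u$'s adjacent-cluster set across phases (handled by the fresh, independent coins drawn each phase), and ensuring the tail/union-bound calculation does not silently reintroduce a second logarithmic factor. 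Everything else is routine bookkeeping once the geometric-domination claim is pinned down.
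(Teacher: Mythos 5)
Your round-complexity argument (gather the $k$-hop ball in $O(k)$ \textsc{local} rounds, then simulate all phases offline) and your stretch argument (the sampling probability never enters the stretch analysis, so replacing $n^{-1/k}$ by $\hat n^{-1/k}$ is harmless) are exactly the paper's. For the intermediate iterations your treatment is in fact cleaner than the paper's: the paper bounds the number of edges added \emph{per iteration} by $O(n^{c/k}\log n)$ via $(1-\hat n^{-1/k})^{l}$ and a union bound over iterations and nodes, and never multiplies back by the number of iterations, whereas your sum-of-independent-geometrics tail bound gives $O\bigl((k+\log n)\,\hat n^{1/k}\bigr)$ in one shot and honestly exposes the $k$-dependence (which, as you note, disappears only because the lemma is invoked with $k=\Theta(\log n)$).

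The genuine gap is the final ($k$-th) iteration. There, every vertex adds its least-weight edge to \emph{every} adjacent cluster in $C_{k-1}$, unconditionally --- there is no ``stop at the first sampled cluster,'' and no event of the form ``none of the $l$ lightest clusters was sampled'' to supply the $(1-p)^{l}$ factor. So your claim that ``in every case the count is stochastically dominated by a geometric variable with success probability $p$'' fails for this iteration; taken literally it would let a vertex of degree $n-1$ in a shallow hierarchy accumulate far more than $\hat n^{1/k}$ out-edges at the end. The count in the last iteration is small for a different reason: each candidate center must have survived $k-1$ independent rounds of sampling, so the number of adjacent clusters in $C_{k-1}$ is stochastically dominated by a Binomial with mean at most $n\cdot\hat n^{-(k-1)/k}\le n^{c/k}$, and a Chernoff bound plus a union bound over vertices gives $O(n^{c/k}\log n)$ w.h.p. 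This is precisely the paper's separate ``Phase~2'' analysis, and your proof needs it (or an equivalent) as an additional case; once it is added, the rest of your argument goes through.
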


\begin{proof}
Note that the running time of the algorithm is $O(k^2)$ rounds if used with a restricted message size of $O(\log n)$. 
Inspecting the algorithm reveals that the computation at each node only depends on its $k$-hop neighborhood in the graph. Also, because the decision to remove an edge $(u,v)$ can be taken by either node $u$ or $v$, each node needs to simulate the running of the algorithm at all its neighbors (to know when to remove the edge $(u,v)$ from consideration) and hence we can simulate the execution of the algorithm locally by first collecting this information regarding $(k+1)$-hop neighborhood in $k+1$ rounds in the $\mathcal{LOCAL}$ model. 

We now analyze the difference when running the algorithm with $\hat{n}$ instead of $n$. 
First, we observe that sampling clusters with probability $\hat{n}^{(-1/k)}$ does not affect the stretch guarantee.
For the sake of our analysis we assume that the spanner is directed: we count every incident edge of $v$ that it adds to its set of spanner edges $H_v$ %
as an \emph{outgoing edge} of $v$.
The degree bound will follow by showing an upper bound on the number of outgoing edges of each node.

Consider any iteration $i$ in Phase~1 of the algorithm, i.e., $1\le i < k$.
We call a cluster \emph{sampled in iteration $i$} if it is among the sampled clusters in all iterations $1,\dots,i$.
Every cluster that was sampled in the previous iteration is sampled again with probability $\hat{n}^{-1/k}$. (In the very first iteration, every node counts as a previously sampled cluster.) 
To bound the number of edges that contribute to the out-degree of a node $v$, we consider the clusters adjacent to $v$ that were sampled in iteration $i-1$ and order them as $c_1,\dots,c_{q}$ in increasing order of the weight of their least weight edge incident to $v$.

Let $A_{i}$ be the event that $v$ adds at least $l$ edges to its outdegree in iteration $i$.
Note that $A_{i}$ occurs if and only if (1) none of the clusters $c_1,\dots c_l$ is sampled in iteration $i$ and (2) there are at least $l$ active clusters in iteration $i-1$.
By the description of Phase~1 (first $k-1$ iterations) of the algorithm, we only add an edge from $v$ to a node in cluster $c_j$ in iteration $i$ if $A_{i}$ does not happen.
We have $\Prob{ A_{i} } \le (1 - n^{-c/k})^l$ and taking a union bound over the first $k-1$ iterations and over all $n$ nodes, it follows that the probability of any node adding more than $l$ edges to the spanner in any of the first $k-1$ iterations is at most
  $\exp(-n^{-c/k}l + \log k + \log n)$.
  By choosing $l \ge \Omega(n^{1/k}(\log n + \log k))$, this probability is $\le n^{-\Omega(1)}$ as required.

In Phase~2 (final iteration), every vertex $u$ adds a least weight (outgoing) edge to every cluster that was sampled in iteration $k-1$. 
Let $X_v$ be the indicator random variable that vertex $v$ is the center of a cluster sampled in iteration $k-1$ that is incident to $u$.
We have 
\[
  \Prob{X_v} \le n^{-\frac{c(k-1)}{k}} = n^{-c + \frac{c}{k}}.
\]
Setting $X = \sum_{v: (u,v) \in G}X_v$, it follows that 
\[
  \expect{X} \le n^{1-c + \frac{c}{k}}\le n^{\frac{c}{k}}, 
\]
  since $c\ge 1$.
Since each cluster is sampled independently all $X_v$ are independent, we can apply a standard Chernoff bound to show that, for some sufficiently large constant $c_1$ depending on $c$, it holds that 
\[
  \Prob{X \ge c_1 n^{\frac{c}{k}}\log n} \le e^{-\Theta(n^{c/k}\log n)} \le n^{-\Omega(1)}. 
\]
By taking a union bound over all vertices, we can see
the number of edges that each vertex adds to the spanner in Phase~2 is at most $O(n^{\frac{c}{k}}\log n)$ with high probability. 
Combining this with the bound that we have derived for Phase~1 completes the proof.
\end{proof}

\begin{theorem}
	There is an $O(D\log^3 n)$ time algorithm $\mathcal{A}$ in the gossip model that yields an $O(\log n)$-spanner that has $O(n\log{n})$ edges (w.h.p.).
	Moreover, $\mathcal{A}$ also computes an edge orientation that guarantees that each node has an out-degree of $O(\log{n})$ (w.h.p.).
\end{theorem}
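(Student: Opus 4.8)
The plan is to run the distributed spanner construction of Lemma~\ref{lem:spannerClassic} with the stretch parameter $k = \Theta(\log n)$, and then to simulate its $O(k)$-round $\mathcal{LOCAL}$ execution in the gossip model with latencies. First I would set $k = c\lceil\log_2 n\rceil$, where $c\ge 1$ is the constant satisfying $\hat n \le n^c$. With this choice the stretch $2k-1$ guaranteed by the construction is $O(\log n)$, and the per-node out-degree bound $O(n^{c/k}\log n)$ of Lemma~\ref{lem:spannerClassic} collapses to $O(\log n)$, because $n^{c/k} \le n^{1/\log_2 n} = 2$. Since each node adds only its $O(\log n)$ outgoing spanner edges, summing over the $n$ nodes gives at most $O(n\log n)$ spanner edges, and the orientation demanded by the theorem is exactly the outgoing orientation that the rules of the construction already assign to every added edge.

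The main work is to implement the $\mathcal{LOCAL}$ algorithm in our setting, where activating an edge of latency $\le D$ costs up to $D$ rounds. As observed in the proof of Lemma~\ref{lem:spannerClassic}, because the decision to discard an edge $(u,v)$ may be taken at either endpoint, it suffices for each node to collect its $(k+1)$-hop neighborhood (topology together with the incident latencies) and then simulate the entire algorithm locally. I would gather this neighborhood in $k+1$ successive phases: in each phase, every node disseminates everything it currently knows to all of its neighbors by invoking the $\ell$-DTG protocol with $\ell = D$, which is legitimate since we assume w.l.o.g.\ that every edge has latency $\le D$. One phase extends each node's known ball by one hop and, by the guarantee of $\ell$-DTG, completes in $O(D\log^2 n)$ time. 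After $k+1 = O(\log n)$ phases every node knows its full $(k+1)$-hop neighborhood, so the total running time is $O((k+1)\, D\log^2 n) = O(D\log^3 n)$, matching the claim.

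Finally I would argue that the local simulation faithfully reproduces the $\mathcal{LOCAL}$ execution: each node now holds all topology, latencies, and cluster-sampling coin flips within its $(k+1)$-hop ball (the coins being disseminated together with the neighborhood information), so it can recompute the cluster memberships and edge-selection rules exactly, breaking ties by node ID as in Lemma~\ref{lem:spannerClassic}. The resulting spanner and orientation are then identical to those of the $\mathcal{LOCAL}$ algorithm, so its stretch, edge-count, and out-degree guarantees carry over verbatim. The step I expect to be the main obstacle is the latency-to-hop bookkeeping: one has to verify that a purely topological $(k+1)$-hop view certifies the \emph{weighted} stretch $2k-1$, i.e.\ that the spanner path witnessing the stretch of any edge remains inside the $O(k)$-hop cluster structure each node has observed, so that no distance information from outside the collected ball is ever needed.
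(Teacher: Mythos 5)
Your proposal matches the paper's proof: the paper likewise sets $k=\Theta(\log n)$ in Lemma~\ref{lem:spannerClassic} (so the out-degree bound $O(n^{c/k}\log n)$ becomes $O(\log n)$ and the orientation is the outgoing one assigned by the rules), collects the $O(\log n)$-hop neighborhood via $O(\log n)$ invocations of $D$-DTG at $O(D\log^2 n)$ each, and then simulates the Baswana--Sen construction locally, for $O(D\log^3 n)$ total. Your added care about simulating the coin flips and edge discards at both endpoints, and the latency-versus-hop bookkeeping, only elaborates on details the paper leaves implicit.
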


\begin{proof}
	
	To convert the classic synchronous algorithm for the local model assumed in Lemma \ref{lem:spannerClassic} to an algorithm that works in the gossip model with latencies, we use %
	the $\ell$-DTG protocol %
	and simulate each of the $k = \log n$ iterations of the spanner algorithm by first discovering the $\log n$-hop neighborhood. The neighborhood discovery takes $O(D\log^3{n})$ rounds in our model and then all computations are done locally.
\end{proof}

\subsubsection{Broadcasting on the Directed Spanner}
\label{sec:roundrobin}
To broadcast on this directed spanner we use the RR broadcast algorithm, %
which is a deterministic round-robin-style exchange of information among nodes.
Each node sends all the rumors known to it to all its $1$-hop neighbors one by one in a round robin fashion. 
The algorithm with a parameter $k$ is run on the directed spanner of the graph $G_k$ ($G$ without edges of latency $\geq$ $k$).  

\begin{algorithm}
\hspace{0.2cm} RR Broadcast ($k$)
\begin{algorithmic}[1]

\For{each vertex $v$ in parallel}
\For{iteration $i$ equals 1 to $(k\Delta_{out} + k)$}
\State propagate rumor set $R_v$ along the out-edges of length $\le k$ one-by-one in a round robin fashion %
\State add all received rumors to $R_v$
\EndFor
\EndFor
\end{algorithmic}
\caption{RR Broadcast}\label{alg:RRB}
\end{algorithm}

\begin{figure}[htb!]
\begin{center}
\includegraphics[scale=0.7]{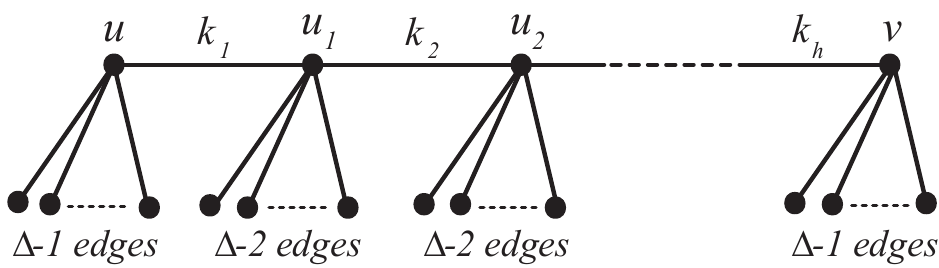}
\end{center}
\caption{Example of message propagation from node $u$ to $v$.\label{rel}}
\end{figure}

\begin{lemma}
	\label{lem:sb}
	After the execution of RR Broadcast algorithm with a parameter $k$ on the directed spanner of graph $G_k$, any two nodes $u$ and $v$ at a distance $\le k$ in $G$ have exchanged rumors with one another in $O(k\Delta_{out} + k)$ rounds, where $\Delta_{out}$ is the maximum out-degree of any node in $G_k$.
\end{lemma}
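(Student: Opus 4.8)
The plan is to trace how a rumor held by $u$ advances toward $v$ one hop per round-robin cycle, and then to bound both the number of cycles required and the cost of a single cycle. First I would fix $u,v$ with weighted distance $\le k$ in $G$ and take a shortest path $P = (u = w_0, w_1, \dots, w_h = v)$. Since every latency is at least $1$ and the total weight of $P$ is at most $k$, the path has at most $k$ hops and each of its edges has latency $\le k$; hence $P$ lies entirely in $G_k$, the graph whose directed spanner RR Broadcast runs on. This reduces the statement to showing that a rumor crosses a path of at most $k$ hops inside $G_k$ within the claimed time, where the crossing is realized through the spanner rather than through $P$ itself.

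The core of the argument is a hop-by-hop invariant for the round-robin schedule. Because each activation in RR Broadcast is a \emph{bidirectional} exchange (a node both pushes $R_v$ and merges in what it receives), one complete round-robin cycle, in which every node activates each of its at most $\Delta_{out}$ out-edges once, advances the set of nodes holding the rumor by at least one hop along every spanner edge, regardless of that edge's orientation. I would make this precise by induction on the cycle count $c$: after $c$ cycles, every node within spanner-hop-distance $c$ of $u$ holds $u$'s rumor, and symmetrically $u$ holds theirs since the exchanges are symmetric. Using that the broadcast graph is an $O(\log n)$-spanner of $G_k$, I would then bound the spanner-hop-distance between $u$ and $v$ in terms of the $\le k$ hops of $P$, which fixes the number of cycles that must elapse; since $\lfloor (k\Delta_{out}+k)/\Delta_{out}\rfloor \ge k$, the loop of $k\Delta_{out}+k$ activations completes at least $k$ full cycles.

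It then remains to convert cycles into rounds: a single cycle is at most $\Delta_{out}$ activations, each over a spanner edge of latency $\le k$, and, using the non-blocking (pipelined) nature of the model, the per-hop latencies overlap rather than multiply, so driving the rumor across $P$ costs $O(k\,\Delta_{out})$ plus a single additive $O(k)$ term for the latency of the final edge before delivery, giving the stated $O(k\Delta_{out}+k)$. I expect the main obstacle to be exactly this accounting: one must argue that the number of spanner hops the rumor actually traverses, together with the out-degree $\Delta_{out}$ and the latency bound $k$, collapse into the additive-plus-product form rather than a larger product. The delicate point is that a single edge of $P$ need not appear in the sparse spanner and may be realized only by a short detour; I would handle this by charging each detour to the corresponding round-robin cycles and by exploiting the orientation produced by the construction (out-degree $O(\log n)$ w.h.p.) to guarantee that, within one cycle, both endpoints of every edge the rumor must cross initiate an exchange toward one another.
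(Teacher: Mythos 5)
Your high-level strategy --- trace the rumor hop by hop along a short path and charge each hop a round-robin delay of $\Delta_{out}$ plus an edge latency --- is the same as the paper's, but two steps in your accounting do not go through. First, the spanner-detour issue you flag at the end is not actually resolved by your argument. You claim the spanner-hop-distance between $u$ and $v$ can be bounded ``in terms of the $\le k$ hops of $P$'' so that $k$ round-robin cycles suffice. But the spanner guarantee is a multiplicative bound on \emph{weighted} distance (stretch $O(\log n)$), not on hop count: an edge of $P$ with latency $k_i$ may be realized in the spanner only by a detour of up to $O(k_i\log n)$ hops, each costing a full cycle of $\Delta_{out}$ activations in your scheme. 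Followed through, your argument yields $O(k\Delta_{out}\log n + k\log n)$, not $O(k\Delta_{out}+k)$. The paper sidesteps this entirely: its proof takes the path to lie in the directed structure on which RR Broadcast actually runs (note its reference to the orientation of the edge $uu_1$), and the $O(\log n)$ stretch is paid only once, at the point of application, by invoking the lemma with the inflated parameter $k=O(D\log n)$ (cf.\ the corollary that follows). Your reading of ``distance $\le k$ in $G$'' is the literal one, but it forces you to prove a stronger statement than the paper uses, and the claimed bound then fails.

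Second, your conversion of cycles into rounds is justified incorrectly. You assert that, by non-blocking pipelining, ``the per-hop latencies overlap rather than multiply'' and that only the final edge contributes an additive $O(k)$. Along a path the latencies are inherently sequential: the rumor cannot begin crossing $(w_i,w_{i+1})$ before it has arrived at $w_i$, so every $k_i$ contributes additively, and your invariant (``after $c$ cycles, every node within spanner-hop-distance $c$ holds the rumor'') is false unless each cycle is padded by the latency of the edge being crossed. The reason the bound nevertheless takes the form $k\Delta_{out}+k$ is different and simpler: the per-hop cost is $\Delta_{out}+k_i$, and summing over the $h\le k$ hops gives $h\Delta_{out}+\sum_i k_i \le k\Delta_{out}+k$ because the \emph{total weight} of the path is at most $k$. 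That is the paper's calculation, and it is the fact your write-up needs to state in place of the overlap claim.
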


\begin{proof} Consider a path from a node $u$ to another node $v$ at a distance $k$ or less from it. Clearly, all edges in this path would have a weight of $\le k$. Therefore, we can work on $G_k$ ($G$ without edges of latency $\geq k$) as well without affecting the correctness of the algorithm. Also, let us assume that the number of hops between $u$ and $v$ to be $h$ which again would be $\le k$, since there are no fractional weights.  Let the latency between each hop be denoted by $k_i$ as shown in Figure \ref{rel}. Messages reach the next node when either of the nodes initiate a bidirectional exchange. For example,  $u$'s rumor could reach node $u_1$ either by a request initiated by node $u$ or by  $u_1$, depending upon the direction of the edge $uu_1$. In the worst case nodes have to try all other $\Delta_{out} -1$ links before initiating a connection along the required edge where $\Delta_{out}$ is the maximum out-degree of any node. After a connection is initialized it takes $k_1$ time to exchange rumors. By generalization, we observe that in the non-blocking model, the delay that can be incurred before rumor exchange among any two adjacent nodes $u_i$ and $u_{i-1}$  can be $\Delta_{out} + k_i$ in the worst case. In this way $u$'s rumor proceeds towards $v$ in individual steps, each step incurring a maximum cost of $\Delta_{out} + k_i$. A node might receive multiple rumors to propagate in the next round, which its adds to its rumor set and forwards to its neighbors in a round robin fashion. %
As such, the total worst case delay in rumor exchange among node $u$ and $v$ would be represented by 
\[
  \sum_{i=1}^{h} (\Delta_{out} + k_i) = h \Delta_{out} +  \sum_{i=1}^{h} k_i. 
\]
  But we know that both $h$ and $\sum_{i=1}^{h} k_i$ can have a maximum value equal to $k$ . Therefore, we conclude that for any two nodes $v$ and $u$ in $G_k$, $v$'s rumor would have reached $u$ and $u$'s rumor would have reached $v$ if all nodes forward rumors in a round robin fashion for $(k\Delta_{out} + k)$ rounds.
\end{proof}

Here, on the created spanner with stretch of $O(\log n)$, the maximum distance between any two nodes can be $O(D\log n)$. %
Since the maximum out-degree ($\Delta_{out}$) is $O(\log n)$ w.h.p., we get the following corollary.

\begin{corollary}
	The RR broadcast algorithm on the constructed spanner takes $O(D \log^2 n)$ time and solves all-to-all information dissemination w.h.p.
\end{corollary}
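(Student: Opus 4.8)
The plan is to obtain the corollary as a direct composition of three ingredients: the stretch guarantee of the spanner, its out-degree bound, and the per-pair delay bound of Lemma~\ref{lem:sb}. First I would fix the spanner $S$ produced by algorithm $\mathcal{A}$ and condition on the high-probability events (from the preceding theorem) that $S$ has stretch $O(\log n)$ and maximum out-degree $\Delta_{out} = O(\log n)$; a union bound keeps both simultaneously w.h.p. Since we have assumed w.l.o.g. that every edge of $G$ has latency $\le D$, the weighted diameter is $D$, so any two nodes $u,v$ lie at weighted distance at most $D$ in $G$.

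Next I would push this distance bound into the spanner. By the defining property of an $O(\log n)$-spanner, a pair $u,v$ at weighted distance $\le D$ in $G$ has weighted distance at most $c D\log n$ in $S$, for some constant $c$. I would therefore run RR Broadcast with parameter $k = c D\log n$. Because $k \ge D$ we have $G_k = G$, so the directed spanner of $G_k$ that the algorithm operates on is exactly $S$, and every spanner out-edge (all of latency $\le D \le k$) is activated.

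Then I would invoke Lemma~\ref{lem:sb} with this $k$: every pair of nodes within weighted distance $\le k$ in the spanner --- which by the previous paragraph is \emph{every} pair --- exchanges rumors, and this happens within $O(k\Delta_{out}+k)$ rounds. Substituting $k = \Theta(D\log n)$ and $\Delta_{out}=O(\log n)$ yields $O(D\log n\cdot\log n + D\log n)=O(D\log^2 n)$ rounds. Since every ordered pair exchanges rumors, each node ends up holding every rumor, which is precisely all-to-all information dissemination, and the whole guarantee is w.h.p., inherited from the spanner construction.

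The delicate point --- and the main thing to get right --- is the mismatch between how Lemma~\ref{lem:sb} is phrased (distance $\le k$ measured in $G$, on the spanner of $G_k$) and what the all-to-all argument actually needs (distance measured inside $S$). Its proof traces a single $u$--$v$ path and controls the telescoping bound $\sum_{i=1}^h(\Delta_{out}+k_i)=h\Delta_{out}+\sum_i k_i$, so I would apply it to a shortest $u$--$v$ path \emph{inside} $S$ (of weighted length $\le cD\log n$, hence also at most that many hops), rather than to a path in $G$. With $k$ chosen as above, both the hop count $h$ and the total latency $\sum_i k_i$ along that path are bounded by $k$, so the bound goes through; the crux is confirming that the stretch blow-up is absorbed into the single parameter $k$ so that exactly the two $\log n$ factors (one from stretch, one from out-degree) appear, without any double-counting.
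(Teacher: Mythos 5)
Your argument is correct and follows essentially the same route as the paper: the paper likewise combines the $O(\log n)$ stretch (giving spanner distance $O(D\log n)$ between any pair) with the $O(\log n)$ out-degree bound and Lemma~\ref{lem:sb} to obtain $O(k\Delta_{out}+k)=O(D\log^2 n)$ with $k=\Theta(D\log n)$. Your explicit care about measuring distances inside the spanner rather than in $G$ is a reasonable tightening of a point the paper leaves implicit, but it does not change the argument.
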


\noindent We combine all the previously defined techniques to a single algorithm called Spanner Broadcast.%

\begin{algorithm}
\hspace{0.2cm} Spanner Broadcast ($D$)
\begin{algorithmic}[1]
\For{each vertex $v$ in parallel}
\For{iteration $i$ = 1 to $O(\log n)$}   
\State Perform $D$-DTG   
\Statex  /* to gain neighborhood information */
\EndFor
\State call Spanner Construction procedure   
\Statex  /* executed locally */
\State call algorithm RR Broadcast ($O(D\log n)$)
\EndFor
\end{algorithmic}
\caption{Spanner Broadcast: for known diameter $D$}\label{alg:DSpanner Broadcast}
\end{algorithm}

\begin{lemma}
	\label{lem_Spanner Broadcast}
	For a graph $G$ with diameter $D$, Spanner Broadcast algorithm takes $O(D \log^3n)$ time for solving all-to-all information dissemination w.h.p. when $D$ is known to all the nodes.  %
\end{lemma}

\subsubsection{Unknown Diameter}
\label{sec:diameter}
For unknown diameter, we apply the standard guess-and-double strategy: begin with an initial guess of $1$ for $D$. Try the algorithm and see if it succeeds.  If so, we terminate.  Otherwise,  double the estimate and repeat. 
The challenge here is to correctly determine the termination condition i.e. how does a particular node determine whether information dissemination has been achieved for all other nodes. Early termination might lead to partial dissemination whereas late termination might cause the time complexity to increase. 

The critical observation is as follows: if two nodes $u$ and $v$  cannot communicate in one execution of all-to-all information dissemination (protocol RR Broadcast) for a given estimate of the diameter, then there must be some edge $(w,z)$ on the path from $u$ to $v$ where, in one execution: $u$ is able to communicate with $w$ but not with $z$.  There are two cases: If $w$ is not able to communicate with $z$, then it is aware that it has an unreachable neighbor and can flag the issue; the next time that $u$ and $w$ communicate, node $u$ learns of the problem.  Otherwise, if $w$ can communicate with $z$, then the next time that $u$ and $w$ communicate, node $u$ learns that there was a node it did not hear from previously.  In either case, $u$ knows that the estimate of $D$ was not correct and should continue.
Each node also checks whether it has heard from all of its neighbors, and raises an error flag if not.  We then repeat all-to-all broadcast so that nodes can check if everyone has the same ``rumor set" and that no one has raised an error flag.  In total, checking termination has asymptotic complexity of $O(D \log^2 n)$.

The Termination$\_$Check algorithm checks for every node that $v$ contacts or is contacted by (either directly or indirectly) whether that node has (i) exactly the same rumor set as $v$ and (ii) the value $0$ as its flag bit. The flag bit of a node is set to 1 if a neighbor of that node is not present in its rumor set or if the node has not yet exchanged all the rumors known to it presently with all of its neighbors in $G$ that are at a distance $\le$ to the current estimate of $D$ (say $k$): this condition is easily checked by either doing an additional $k$-DTG (which does not affect the complexity) or can be checked in parallel with the execution of RR Broadcast. If both of the above conditions are not met, then node $v$ sets its status to \emph{``failed''} and $v$ uses a broadcast algorithm for propagating the \emph{``failed''} message. Any broadcast algorithm that, given a parameter $k$, is able to broadcast and collect back information from all nodes at a distance $\le k$ from $v$, can be used. It is easily seen that RR Broadcast satisfies this criteria and can be used in this case. Note that broadcast is achieved here (for Spanner Broadcast algorithm) by execution of RR Broadcast, however when Pattern Broadcast algorithm (described later) invokes Termination$\_$Check, broadcast is achieved by execution of the sequence $T(k)$ (also described later). Here, the rumor set known to a particular vertex $v$ is denoted by $R_v$, $\Gamma(v)$ represents all its neighbors in $G$ whereas $k$-neighbors refers to only those nodes that are connected with $v$ with an edge of latency $k$ or less. Also, initially node$\_$status of all nodes is set to \emph{``default''}.

\begin{algorithm}
\hspace{0.2cm} Termination$\_$Check ($k$)
\begin{algorithmic}[1]
\If {(node $w \in \Gamma(v)$ and $w \notin R_v$) or (node $v$ has not exchanged rumors with all $k$-neighbors)}
\State set flag bit, $v_{flag} = 1$ 
\Else { set flag bit, $v_{flag} = 0$ }
\EndIf 
\State broadcast and gather all responses from any node $u$ in $v$'s $k$-distance neighborhood %
\If{$\exists$ any $u$ such that ($R_v \neq R_{u}$) or ($u_{flag}=1$)}
\State set node$\_$status = \emph{``failed''}
\EndIf
\State broadcast  \emph{``failed''} message to the $k$-distance neighborhood %
\If {received message = \emph{``failed''}}
\State set node$\_$status = \emph{``failed''}
\EndIf 
\end{algorithmic}
\caption{Termination$\_$Check}\label{alg:TC}
\end{algorithm}

\noindent We prove the following regarding the termination detection:
\begin{lemma}
	\label{termination}
	No node terminates until it has exchanged rumors with all other nodes.  Moreover, all nodes terminate in the exact same round.
\end{lemma}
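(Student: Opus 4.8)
The plan is to prove the two assertions separately, both hinging on a single structural quantity. Fix a phase in which the current diameter estimate is $k$, and for a node $u$ let $E_u$ denote the set of nodes whose rumor $u$ has received by the end of RR Broadcast in that phase. By Lemma~\ref{lem:sb}, $E_u$ is exactly the set of nodes within (weighted) distance $\le k$ of $u$ in $G$, and this is precisely the neighborhood that Termination$\_$Check gathers from. I would first establish the safety claim (no premature termination) by showing that $u$ finishes a Termination$\_$Check with status not \emph{failed} only if $E_u = V$. Note that the relation ``within distance $\le k$'' is symmetric, so $w \in E_u$ implies $u \in E_w$, a fact I will use repeatedly.

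For the safety claim, suppose toward a contradiction that $E_u \neq V$ but $u$ does not fail. Since $G$ is connected, there is a cut edge $(w,z)$ with $w \in E_u$ and $z \notin E_u$. As $w$ lies within distance $k$ of $u$, it is contained in $u$'s gather neighborhood, so $u$ sees $w$'s reported rumor set $R_w$ and flag. I would then split into the two cases suggested in the text. If the latency of $(w,z)$ exceeds $k$, then $w$ never heard from its neighbor $z$, so $z \in \Gamma(w)\setminus R_w$ forces $w_{\mathrm{flag}} = 1$, which $u$ observes and thereby sets its status to \emph{failed}. If the latency of $(w,z)$ is at most $k$, then by Lemma~\ref{lem:sb} $w$ did receive $z$'s rumor, so $z$'s rumor lies in $R_w \setminus R_u$ and hence $R_w \neq R_u$, which again makes $u$ fail. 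Either way we contradict the assumption, proving that a terminating node has $E_u = V$, i.e.\ it has exchanged rumors with every node; this is the first assertion.

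For the second assertion (all nodes halt in the same round) I would combine lockstep execution with unanimity of the per-phase decision. Lockstep holds because every node runs the identical guess-and-double schedule and, conditioned on the high-probability out-degree bound of the spanner, the length of each phase (RR Broadcast plus Termination$\_$Check) is a fixed function of $k$ and the commonly known polynomial bound on $n$; thus all nodes are on the same estimate $k$ during the same rounds. It then suffices to show that in every phase the nodes unanimously succeed or unanimously fail. If $k \ge D$, RR Broadcast completes global dissemination, so $R_v = V$ and $v_{\mathrm{flag}} = 0$ for every $v$, each gather covers all of $V$, and all nodes succeed simultaneously. If $k < D$, I claim every node fails: a node $u$ with $E_u \neq V$ detects a witness directly by the frontier argument above, while a node $u$ with $E_u = V$ has $R_u = V$ and gathers some node $a$ with $E_a \neq V$ (one exists because $D > k$ forces some node to have eccentricity exceeding $k$); since $R_a \subsetneq V = R_u$, node $u$ observes the mismatch and fails. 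Hence the decision is unanimous in each phase, and all nodes terminate together at the conclusion of the first phase with $k \ge D$.

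The main obstacle I anticipate is establishing this unanimity for $k < D$: it is not enough that \emph{some} node detects incompleteness, since the \emph{failed} message is propagated only to the $k$-neighborhood and is not re-broadcast to convergence, so I cannot appeal to a failure flooding the whole graph. The argument must instead show that \emph{every} node detects an inconsistency within its own gather radius, which is exactly what the case split above accomplishes (the key being that $E_u = V$ already forces $R_u = V$, so any globally incomplete node is a mismatch witness visible to $u$); the symmetry of the reachability relation is what guarantees each witness lies inside the relevant gather. A secondary technical point is discharging the randomness: all timing and degree arguments must be conditioned on the event that the spanner attains its $O(\log n)$ out-degree, so that phase lengths are genuinely common knowledge and the lockstep property, and hence the ``exact same round'' conclusion, holds exactly.
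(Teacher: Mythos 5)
Your first claim (no premature termination) follows essentially the same route as the paper: both arguments walk out to the frontier of the set of nodes a terminating node has heard from and split on whether the frontier node did or did not reach the next node on the path, concluding in the first case that the rumor sets differ and in the second that a flag bit is raised. Two small imprecisions there: Lemma~\ref{lem:sb} only guarantees that the set of nodes reached \emph{contains} the distance-$\le k$ ball, not that it equals it, so your ``exactly'' overstates what is known; and your case split on the latency of the cut edge $(w,z)$ is not quite the right dichotomy (if the edge is slow but $w$ still reached $z$ over a short multi-hop path, $w$'s flag need not be set) --- the paper splits instead on whether the frontier node exchanged rumors with the next node, which covers this; in your version the rumor-set mismatch still catches that subcase, so the conclusion survives.

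For the second claim the two proofs genuinely diverge. The paper argues pairwise: if $v$ passes the check, then by the first claim $v$ has exchanged with everyone, all rumor sets equal $V$ and no flags are set, so no other node $u$ can see a witness for failure --- unanimity falls out directly without any case analysis on $k$ versus $D$. You instead argue lockstep scheduling plus per-phase unanimity via the dichotomy ``$k\ge D$ all succeed / $k<D$ all fail.'' The explicit lockstep argument and your observation that the \emph{failed} message cannot be relied upon to flood the graph (so every node must detect failure locally) are genuine improvements in care over the paper. However, the dichotomy itself is the weak point: it is not true in general that every node fails when $k<D$, since RR Broadcast may complete global dissemination early (the spanner guarantee is a lower bound on reach, and $O(k\log n)$ rounds may already cover diameter-$D$ paths), in which case every rumor set equals $V$, no flags are set, and your claimed witness $a$ with $E_a\neq V$ does not exist. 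Your argument then proves a false intermediate statement, even though unanimity (all succeed) still holds in that phase. The fix is to recast the dichotomy on whether dissemination actually completed rather than on $k$ versus $D$: if some node is missing a rumor, every node with an incomplete set fails by the frontier argument, and every node with a complete set gathers from all of $V$ (by symmetry of the exchange) and therefore sees the incomplete node's smaller rumor set. With that restructuring your proof is correct and, in the lockstep and failed-message-propagation respects, more complete than the paper's.
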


\begin{proof} Suppose that a node $v$ terminates without having exchanged rumors with some other node $w$. Considering any path from node $v$ to node $w$, let $u$ be the farthest node (in hop distance) with which $v$ has exchanged rumors with and let $x$ be the next node in the path.\\
\emph{Case 1 }: $u$ has exchanged rumors with $x$.
It implies that $v$ has also exchanged rumors with $x$, from the condition that all nodes that exchange rumors with one another have the same rumor set. Thus, contradicting the fact that $u$ is the farthest node on the path that $v$ has exchanged rumors with.
\\
\emph{Case 2 }: $u$ has not exchanged rumors with $x$.
If $u$ had not exchanged rumors with $x$, then $u$ would have set its flag bit as 1, which would have been detected by $v$ during the broadcast and it would not have terminated. This also gives us a contradiction.
Thus, no such node $w$ exists and $v$ terminates only after it has exchanged rumors with all the other nodes.

For the second part of the proof, let consider $u$ and $v$ to be nodes such that
$v$ is set for termination and has not set its status to \emph{``failed''} in the Termination$\_$Check algorithm, whereas, in the same iteration, node $u$ has set its status to \emph{``failed''} and hence is set to continue. We show that there cannot be two such nodes in the same round. The node $v$ did not set its status to \emph{``failed''} implying all the nodes that it exchanged rumors with had exactly the same set of rumors, none of the nodes had set its flag bit as 1 and in addition it did not receive a \emph{``failed''} message from any other node. From the first part, we know that the set of nodes that $v$ exchanged rumors with is the entire vertex set of the graph $G$. That implies, $v$ has also exchanged rumors with $u$: node $u$ also has the exact set of rumors (which essentially is all the rumors from all the nodes) and does not have a set flag bit. So in the current iteration, if any other node broadcasted a \emph{``failed''} message both $v$ and $u$ would have received it resulting in both nodes to set their status as \emph{``failed''}. Again, since the rumor sets of both nodes are identical, both nodes would observe the same flag bits of all the nodes. Then node $u$ will also not satisfy the termination condition and will not set its status as \emph{``failed''}. This gives us a contradiction that completes the proof.
\end{proof}

\begin{algorithm}
\hspace{0.2cm} Spanner Broadcast ($k$)
\begin{algorithmic}[1]
\State $k$=1

\MRepeat 
\State call algorithm \emph{Spanner Broadcast ($k$)}

\State call algorithm \emph{Termination$\_$Check ($k$)}  
\If{node$\_$status = \emph{``failed''}}
\State $k$ = $2 k$
\State set node$\_$status to \emph{``default''}
\Else { terminate}
\EndIf
\EndMRepeat
\end{algorithmic}
\caption{Spanner Broadcast: for unknown diameter.}\label{alg:Spanner Broadcast}
\end{algorithm}

Combining the all-to-all dissemination protocol with the termination detection, we get the following:
\begin{theorem}
	\label{thm:main2}
	There exists a randomized gossip algorithm that solves the all-to-all information dissemination problem w.h.p. and terminates in $O(D \log^3 n)$ rounds.
\end{theorem}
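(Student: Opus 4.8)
The plan is to run the General$\_$EID algorithm (Algorithm~\ref{alg:GEID}) and to show that its outer guess-and-double loop inflates the known-diameter cost of Lemma~\ref{lem_EID} by only a constant factor. I would separate the argument into correctness and running time.

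For correctness, Lemma~\ref{termination} already supplies the two decisive properties of the termination detection: no node halts before it has exchanged rumors with every other node, and all nodes halt in the same round. Thus it remains only to show that the loop does eventually halt, and halts as soon as the guess is large enough. Let $k^\star$ denote the first guess with $k^\star\ge D$; since $k$ starts at $1$ and doubles, $D\le k^\star<2D$. For this guess every pair of nodes lies within weighted distance $D\le k^\star$, so EID($k^\star$) achieves all-to-all dissemination w.h.p.\ by Lemma~\ref{lem_EID}, the ``exchanged with all'' precondition of Lemma~\ref{termination} is met, and every node terminates in that iteration. For any earlier guess $k<D$, the pair realizing the diameter is at weighted distance exceeding $k$ and hence cannot have exchanged rumors over paths of latency $\le k$; by Lemma~\ref{termination} no node terminates, Termination$\_$Check reports \emph{failed}, and $k$ is correctly doubled.

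For the running time, I would charge iteration $k$ a cost of $O(k\log^3 n)$: EID($k$) costs $O(k\log^3 n)$ by Lemma~\ref{lem_EID} with $k$ playing the role of the (known) diameter, and Termination$\_$Check($k$) costs $O(k\log^2 n)$ as established in Section~\ref{sec:diameter}. The executed guesses are $1,2,4,\dots,k^\star$ with $k^\star<2D$, so the total telescopes into the geometric sum $\sum_{i=0}^{\lceil\log k^\star\rceil}O(2^i\log^3 n)=O(k^\star\log^3 n)=O(D\log^3 n)$, dominated by its last term.

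The step I would handle most carefully is the high-probability guarantee across the $O(\log D)$ loop iterations, each of which invokes the randomized spanner construction of Lemma~\ref{lem:spannerClassic}, whose $O(\log n)$ out-degree bound---and hence the $O(k\log^3 n)$ cost of iteration $k$---holds only w.h.p. Because the number of RR Broadcast rounds scales with the realized out-degree $\Delta_{out}$, an atypically dense spanner in even a single early iteration could in principle inflate the running time. I would therefore take a union bound over all $O(\log D)$ iterations to ensure $\Delta_{out}=O(\log n)$ throughout; since $\log D=\poly(n)$ for integer latencies, multiplying an $n^{-\Omega(1)}$ per-iteration failure probability by this factor still yields $n^{-\Omega(1)}$ after adjusting the constant in the exponent. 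Under this good event the running-time telescoping above is valid and EID($k^\star$) disseminates correctly, while Lemma~\ref{termination} rules out premature termination regardless; together these give the stated w.h.p.\ bound.
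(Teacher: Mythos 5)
Your proposal is correct and follows essentially the same route as the paper, which obtains Theorem~\ref{thm:main2} by combining Lemma~\ref{lem_EID} (the known-diameter $O(D\log^3 n)$ bound for EID) with the guess-and-double loop of General\_EID, the $O(D\log^2 n)$ cost of Termination\_Check, and Lemma~\ref{termination} to rule out premature or unsynchronized termination; the geometric sum over doubling guesses is exactly the intended accounting. Your explicit union bound over the $O(\log D)$ iterations for the w.h.p.\ out-degree guarantee is a detail the paper leaves implicit, but it is consistent with, not a departure from, the paper's argument.
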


\subsection{Pattern Broadcast Algorithm}
\label{app:altalgo}
We propose an alternate deterministic pattern based broadcast algorithm to solve all-to-all information dissemination without any global knowledge (i.e., knowledge of a polynomial upper bound on $n$ is not required) that takes $O(D\log^2n\log D)$ time. This algorithm %
works even when nodes cannot initiate a new exchange in every round, and wait till the acknowledgement of the previous message, i.e., communication is blocking.

The algorithm involves repeatedly invoking the $\ell$-DTG algorithm with different parameters determined by a particular pattern. The intuition behind the choice of the pattern is to make minimal use of the heavier latency edges by collecting as much information as possible near the heavier latencies before making use of that edge. The pattern for $k$ is derived according to a sequence $T(k)$ that is recursively defined as follows:  
\begin{align*}
T(1) &= 1\text{-DTG}\\ 
T(2) &= T(1) \cdot 2\text{-DTG} \cdot T(1)\\
T(4) &= T(2) \cdot 4\text{-DTG} \cdot T(2)\\
&\vdots\\ 
T(k) &= T(k/2) \cdot k\text{-DTG} \cdot T(k/2)
\end{align*}
We show that, when the above sequence is run for the particular pattern for length $k$, it guarantees that any node $u$ and $v$ in the graph $G$, at a distance of $\le k$, have exchanged their rumors with one another. 
Overall, the pattern of values of the parameter $\ell$ is 
\[
1,2,1,4,1,2,1,8,1,2,1,4,1,2,1,\dots,k,\dots,1,2,1,4,1,2,1,8,1,2,1,4,1,2,1,
\]

and, for each value $\ell$, we perform the $\ell$-DTG protocol. 
That is, $T(k)$ is a sequence of calls to $\ell$-DTG with varying parameters according to a known pattern. %

\begin{lemma}
After the execution of $T(k)$, any node in the weighted graph G (V,E) has exchanged rumors with all other nodes that are at distance $k$ or less from it.
\end{lemma}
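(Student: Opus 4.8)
The plan is to prove the statement by induction on $k$ (a power of $2$, as forced by the recursion), exploiting the recursive definition $T(k)=T(k/2)\cdot k\text{-DTG}\cdot T(k/2)$ together with the basic guarantee of the $\ell$-DTG protocol: after one invocation of $\ell$-DTG, every node has exchanged its current rumor set with all neighbors joined to it by an edge of latency $\le \ell$. Throughout, ``distance'' means the weighted (latency) distance, and I fix a shortest $u$--$v$ path $u=w_0,w_1,\dots,w_h=v$ whose total weight is $\le k$, writing $\ell_i$ for the latency of the edge $(w_{i-1},w_i)$. For the base case $k=1$: since all latencies are positive integers, two nodes at distance $\le 1$ are joined by a single edge of latency exactly $1$, so the lone call $T(1)=1\text{-DTG}$ makes them exchange rumors.

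For the inductive step, assume the claim holds for $k/2$, i.e.\ after $T(k/2)$ every pair at weighted distance $\le k/2$ has exchanged rumors. The heart of the argument is a decomposition of the shortest path. I would let $j$ be the \emph{largest} index with $\sum_{i\le j}\ell_i \le k/2$. If $j=h$, the whole path already has weight $\le k/2$, and the first $T(k/2)$ block alone makes $u$ and $v$ exchange by the inductive hypothesis. Otherwise the prefix $u\to w_j$ has weight $\le k/2$, and by maximality $\sum_{i\le j+1}\ell_i > k/2$, which forces the suffix $w_{j+1}\to v$ to have weight $< k/2$ and the single middle edge $(w_j,w_{j+1})$ to have latency $\ell_{j+1}\le k$.

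I would then trace one rumor through the three blocks of $T(k)$: the first $T(k/2)$ delivers $u$'s rumor to $w_j$ (prefix distance $\le k/2$, by induction); the central $k\text{-DTG}$ carries it across $(w_j,w_{j+1})$ (latency $\le k$); and the second $T(k/2)$ delivers it to $v$ (suffix distance $< k/2$, by induction). Since DTG is a bidirectional local exchange, the symmetric trace simultaneously carries $v$'s rumor back to $u$, so $u$ and $v$ have exchanged rumors, completing the induction.

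The main obstacle is the path decomposition itself: one must argue that any path of total weight $\le k$ always splits into a prefix of weight $\le k/2$, a single edge of latency $\le k$, and a suffix of weight $< k/2$, so that each piece is handled by exactly one of the three blocks. This works precisely because at most one edge on the path can have latency exceeding $k/2$ (two such edges would already overshoot the budget $k$), and the greedy choice of split point isolates that single potentially-heavy edge for the central $k\text{-DTG}$ call while keeping both surrounding segments within the $k/2$ budget covered by the inductive hypothesis. The remaining care is to handle the degenerate case $j=h$ cleanly and to note that, because each DTG step is a full bidirectional exchange, both directions of rumor flow propagate within the same three blocks rather than requiring separate passes.
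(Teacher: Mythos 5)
Your proof is correct and follows essentially the same strategy as the paper: induct on the recursive structure $T(k)=T(k/2)\cdot k\text{-DTG}\cdot T(k/2)$ and split a shortest path so that the first block delivers the rumor along a prefix of weight $\le k/2$, the middle $k$-DTG crosses a single bridging edge, and the last block covers the remaining suffix. The only real difference is presentational: where the paper enumerates cases (an equidistant middle node exists or not, a heavy edge sits at an endpoint or in the interior), your greedy choice of the maximal prefix of weight $\le k/2$ subsumes all of those cases in one uniform argument.
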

\begin{proof}
We proceed by induction over the path length $k$.
For the base case, recall from \cite{Haeupler:2013:SFD:2627817.2627868}
that, after running $T(1)$  on $G_1$ (subgraph of $G$ induced by edges with latency $\le 1$),  any node $v$ has exchanged rumors with all its distance 1 neighbors. 

For the inductive step, suppose that the claim is true for $T(k)$, i.e. after running the sequence, any node $v$ has exchanged rumors with all other nodes at a weighted distance $\le k$.
To prove the claim for $T(2k)$ (i.e. $T(k) \cdot  2k\text{-DTG} \cdot T(k)$), we consider various possibilities of forming a path of length $2k$.

\case{Case 1} The path consists only of edges with latencies $\le k$. Here we distinguish two sub-cases: \\
\case{Case 1a} There exists a node $m$ which is equidistant from both end points $u$ and $v$ (see Figure \ref{fig_sim1}).  %
By the induction hypothesis, both nodes $u$ and $v$ would have exchanged rumors with node $m$ in the initial $T(k)$. In the next $T(k)$, node $m$ propagates all rumors that it received from $u$ to $v$ and vice-versa. 
\begin{figure}[h]
\centering
\includegraphics[width=3in]{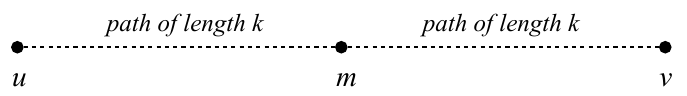}
\caption{Case 1a}
\label{fig_sim1}
\end{figure}

\case{Case 1b} No such node middle exists as depicted in Figure \ref{fig_sim2}. Then, after the initial $T(k)$, node $u$ must have exchanged rumors with $m_1$ and node $v$ with $m_2$, due to the induction hypothesis. In the invocation of the $2k$-DTG, node $m_1$  propagates all rumors gained from $u$ to $m_2$, and $m_2$ also propagates all rumors gained from $v$ to $m_1$. This information then travels from $m_1$ to $u$ and from $m_2$ to $v$ in the final $T(k)$.
\onlyLong{
\begin{figure}[h]
\centering
\includegraphics[width=3in]{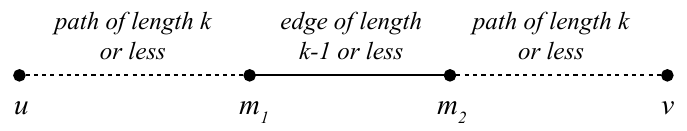}
\caption{Case 1b}
\label{fig_sim2}
\end{figure}
}

\case{Case 2} 
There exists at most one edge $e$ with latency value in between $[k+1,2k]$.
This situation can yield one of the following two sub-cases: \\
\case{Case 2a} Edge $e$ is located at one end of the path (see Figure \ref{fig_sim3}). By  the induction hypothesis, node $v$ would have exchanged rumors with $m$ in the initial $T(k)$. In the $2k$-DTG, $u$ gets to know this (and other) rumors from $m$ and $m$ also gets to know $u$'s rumors. In the next $T(k)$, node $m$ propagates all rumors gained from $u$ to $v$. 

\onlyLong{
\begin{figure}[h]
\centering
\includegraphics[width=3in]{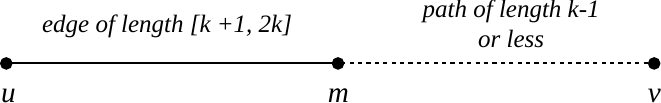}
\caption{Case 2a}
\label{fig_sim3}
\end{figure}
}

\case{Case 2b} The edge is located between two inner nodes on the path (see Figure \ref{fig_sim4}). In this case, by  the induction hypothesis, node $u$ has exchanged rumors with $m_1$, whereas node $v$ has exchanged rumors with node $m_2$ in the initial $T(k)$. In the $2k$-DTG, node $m_1$  propagates all rumors gained from $u$ to $m_2$. Moreover, $m_2$  propagates all rumors gained from $v$ to $m_1$. These rumors then propagate from $m_1$ to $u$ and from $m_2$ to $v$ in the final $T(k)$.
\onlyLong{
\begin{figure}[h]
\centering
\includegraphics[width=3in]{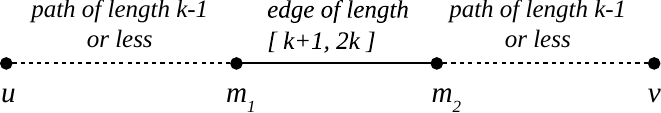}
\caption{Case 2b}
\label{fig_sim4}
\end{figure}
}
\end{proof}

\begin{lemma}
For known diameter, solving all-to-all information dissemination by executing the sequence $T(D)$, takes $O(D \log^2 n \log D)$ time.
\end{lemma}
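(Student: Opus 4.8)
The plan is to separate the argument into a correctness part and a running-time part, since correctness has already been established by the preceding lemma. For correctness I would simply instantiate the preceding lemma at $k = D$: after executing $T(D)$, every node has exchanged rumors with all nodes at weighted distance $\le D$ from it, and since $D$ is the (weighted) diameter, every pair of nodes is at distance $\le D$. Hence every node learns every rumor, which is exactly all-to-all information dissemination. A minor bookkeeping point is that $T(\cdot)$ is only defined on powers of two; if $D$ is not a power of two I would run $T(D')$ with $D' = 2^{\lceil \log D\rceil} \le 2D$, which changes every bound by at most a constant factor.

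For the running time, the only external fact I need is that a single invocation of the $\ell$-DTG protocol costs $O(\ell \log^2 n)$ rounds, as established earlier in this section. Writing $\tau(k)$ for the number of rounds consumed by $T(k)$, the recursive definition $T(k) = T(k/2)\cdot k\text{-DTG}\cdot T(k/2)$ immediately yields the recurrence
\[
\tau(k) = 2\,\tau(k/2) + O(k\log^2 n), \qquad \tau(1) = O(\log^2 n),
\]
where the additive term is the cost of the single top-level $k$-DTG call and the base case is the cost of a single $1$-DTG call.

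I would then solve this recurrence. The cleanest route is to charge costs by recursion level: at depth $j$ below the root there are $2^j$ subproblems, each topped by a $(k/2^j)$-DTG call costing $O((k/2^j)\log^2 n)$, so each of the $O(\log k)$ levels contributes $2^j \cdot O((k/2^j)\log^2 n) = O(k\log^2 n)$. Summing over the levels gives $\tau(k) = O(k\log k\,\log^2 n)$; alternatively one can unroll to $\tau(2^m) = 2^m\tau(1) + m\,C\,2^m$ with $C = O(\log^2 n)$ and substitute $m = \log k$. Setting $k = D$ then yields $\tau(D) = O(D\log D\,\log^2 n) = O(D\log^2 n\log D)$, as claimed.

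The analysis is entirely routine; the only places requiring a moment of care are confirming that the $k$-DTG term appears exactly once per node of the recursion tree so that the per-level cost telescopes to $O(k\log^2 n)$ rather than accumulating a larger factor, and verifying that the base-case cost $\tau(1)=O(\log^2 n)$ is dominated by the recursive term so it contributes nothing asymptotically. I do not anticipate any genuine obstacle beyond this accounting, since correctness is inherited directly from the preceding lemma and the recurrence is standard.
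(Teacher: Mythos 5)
Your proposal is correct and follows essentially the same route as the paper: the paper's proof likewise observes the recurrence $T(k) = 2T(k/2) + k\log^2 n$ from the recursive construction of the sequence and solves it by standard methods to obtain $O(D\log^2 n\log D)$. Your additional remarks on correctness via the preceding lemma and on rounding $D$ up to a power of two are sound but routine elaborations of the same argument.
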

\begin{proof}
From the way the sequence is constructed, we observe the recurrence relation 
$T(k) = 2T(k/2) + k\log^2 n$.
Using standard methods to solve the recurrence completes the proof.
\end{proof}

When the graph diameter is known to all nodes, nodes can just invoke $T(D)$ to solve all-to-all information dissemination.
For completeness, we also present an algorithm called Pattern Broadcast that uses the sequence of invocations of $\ell$-DTG to solve all-to-all information dissemination, when the graph diameter is unknown. This algorithm is similar in flavour to that of the Spanner Broadcast algorithm described in Section \ref{sec:diameter} and also makes use of the Termination$\_$Check algorithm, albeit with a different broadcasting technique (calling $T(k)$ rather than RR Broadcast).

\begin{algorithm}
\hspace{0.2cm} Pattern Broadcast ($k$)
\begin{algorithmic}[1]
\State $k$=1
\MRepeat 
\State execute sequence $T(k)$
\State call algorithm \emph{Termination$\_$Check ($k$)}  
\If{node$\_$status = \emph{``failed''}}
\State $k$ = $2 k$
\State set node$\_$status to \emph{``default''}
\Else { terminate}
\EndIf
\EndMRepeat
\end{algorithmic}
\caption{Pattern Broadcast; code for vertex $v$.}\label{alg:PD}
\end{algorithm}

\begin{lemma}
There exists a deterministic gossip algorithm that solves the all-to-all information dissemination problem and terminates in $O(D \log^2n \log D)$ rounds.
\end{lemma}
\noindent Applying techniques similar to Section \ref{sec:diameter}, similar results can be shown for the case with unknown diameter as well.

\section{Algorithms for Unknown Latencies} \label{sec:upper_bounds}
We divide the upper bounds on information dissemination into two sub-components and later combine them to obtain a unified result. First, we analyze 
classical push-pull, showing that it completes in time $O( \frac{\ell_*\log n}{\phi_*})$, which is optimal when $D+\Delta$ is large. Alternatively for graphs where $D+\Delta$ is small, we give an algorithm wherein each node first spends $\tilde O(D+\Delta)$ time discovering the neighboring latencies after which nodes use the local information to build a spanner, across which data can be distributed in $\tilde O(D)$ time.

\subsection{Push-Pull}

To show the time required for information dissemination in a weighted graph $G$ using push-pull, we define  $E_\ell$ as the \emph{set of all edges of latency $\le \ell$}, $E_u$ as the \emph{set of incident edges of vertex $u$} and $E_{u,\ell} = E_\ell \cap E_u$. 

\begin{theorem} \label{thm:PUSHPULLUpper}
	The push-pull protocol achieves information dissemination w.h.p. in $O( \frac{\ell_*\log n}{\phi_*})$ rounds in a network $G$, where $\phi_*$ is the critical weighted conductance of $G$ and $\ell_*$ is the corresponding critical latency. 
\end{theorem}

\begin{proof}
From the given weighted graph $G$, %
we construct a {strongly edge-induced graph} $G_{\ell}$, which is a generalization of the strongly (vertex) induced subgraph defined in \cite{Censor-Hillel:2012:GCP:2213977.2214064} and which has the same vertex set as $G$.  The edges of $G_{\ell}$ have a multiplicity\footnote{The ``multiplicity of an edge'' is called ``edge weight'' in \cite{Censor-Hillel:2012:GCP:2213977.2214064}. We use a different terminology here to avoid confusion with the latencies of edges and consider ``edge weight'' as a synonym to edge latency instead.} defined by the \emph{edge multiplicity function} $\mu$, given by 
\begin{equation}
    \mu(u,v)=
    \begin{cases}
       	1 & \text{if}\ (u,v) \in E_{\ell}; \\
      	|E_u| - |E_{u,\ell}| & \text{if}\ u=v; \\
      	0 & \text{otherwise.}\\
    \end{cases} 
  \end{equation}

The \emph{informed node set} refers to the set of vertices that are in possession of some message $m$ originating from a vertex $s$ when running push-pull. %
When executing the push-pull process on $G_\ell$, each message takes $1$ round to traverse an edge, and hence a message sent in $G_{\ell}$ can be simulated by at most $\ell$ rounds in $G$.
Let random variable $I_G(\ell \cdot r)$ refer to the informed node set in graph $G$ after running $\ell \cdot r$ rounds of push-pull on $G$; we can think of parameter $r$ as the number of push-pull rounds that we want to simulate on $G_{\ell}$.  Similarly, we define random variable $I_{G_\ell}(r)$ to be the informed node set in $G_\ell$ after $r$ rounds. 

Observe that each node $v$ selects an incident edge in $E_{\ell}$ from $G_{\ell}$ in the push-pull protocol with the same probability as in %
$G$. The probability of choosing an edge $\in E_u \setminus E_{\ell}$ (i.e., a self loop in case of $G_{\ell}$) is $\mu(u,u)/\sum_{v \in V}\mu(u,v)$ in both graphs. 
Clearly, choosing a self loop of a node $u$ cannot help in the propagation of the message in $G_{\ell}$, but choosing the corresponding edge in $G$ might. 

Now, consider the Markov chain process describing the informed node set, when running push-pull.
Formally, the state space of the Markov chain consists of all possible informed node sets. Only paths that correspond to monotonically growing informed node sets have nonzero probability. 

We will show by induction (over $r$) that the Markov process that describes the (monotonically growing) set of informed nodes on %
$G$ stochastically dominates the respective Markov process for the informed nodes in graph $G_{\ell}$.
Since we are simulating push-pull on graphs $G$ and $G_\ell$ having the same node set, we assume that exactly $1$ node has the initial message $m$, which means that both Markov chains start at the state representing the same (singleton) set of informed nodes.
Thus, for the induction base case ($r=0$), we have $\Prob{I_G(0\cdot \ell) = S_0} = \Prob{I_{G_\ell}(0)= S_0},$ for any node set $S_0$.

We now focus on the induction step. 
By the induction hypothesis, it holds that for any set $S_r$,
\begin{equation} 
	\Prob{ I_G(r\cdot \ell) = S_r} \geq \Prob{I_{G_\ell}(r) = S_r}.  \label{eq:ind}
\end{equation}
Consider any set $S_{r+1} \supseteq S_r$.
Let $\{S_r\ra_G S_{r+1}\}$ be the event that the Markov chain transits from $S_r$ to $S_{r+1}$ for graph $G$, and define $\{S_r\ra_{G_\ell} S_{r+1}\}$ similarly.
It follows that
\begin{flalign*}
&\Prob{ I_G((r+1)\cdot \ell) = S_{r+1} } \\
&=\sum_{\text{$S_r$}} \Prob{I_G(r\cdot \ell) = S_r} \cdot \Prob{S_r \ra_G S_{r+1}} \\
&\geq \sum_{\text{$S_r$}}\Prob{ I_{G_\ell}(r) = S_r} \cdot \Prob{S_r \ra_G S_{r+1}} \tag{by \eqref{eq:ind}} \\
&\geq \sum_{\text{$S_r$}}\Prob{I_{G_\ell}(r) = S_r} \cdot \Prob{S_r \ra_{G_\ell} S_{r+1}}\\ 
& \hspace{1.5cm}\hfill \text{(we obtained $G_\ell$ by making some edges self-loops)}\\
&= \Prob{ I_{G_\ell}(r+1) = S_{r+1}}
\end{flalign*}

Thus, it follows that %
the probability of reaching any informed node set $S$ by using the Markov chain in $G$ is at least as large as the probability of reaching the same set $S$ by using the Markov chain for $G_{\ell}$.  
To translate this result back to our actual network $G$ (with weighted edges), we charge each round of push-pull in $G_{\ell}$ to $\ell$ rounds in $G$.
It is easy to see that the (unweighted) conductance $\phi({G_{\ell}})$ corresponds to $\phi_{\ell}(G)$, as a self-loop at node $u$ is counted as $\mu(u,u)$ edges when computing the volume. %
From \cite{stacs2011_conductance} and \cite{Censor-Hillel:2012:GCP:2213977.2214064}  it is known that $O(\log(n) / \phi({G_{\ell}}))$ rounds suffice w.h.p. to solve broadcast in $G_{\ell}$. Hence, achieving broadcast in $G$ requires $O(\ell\log(n) / \phi_{\ell}({G}))$ rounds. 
Since the above analysis applies for any $\ell\ge 1$, and in particular for the critical latency $\ell_*$, the theorem follows. (When $\ell=\ell_*$, $\phi(G_{\ell_*}) = \phi_{\ell*}(G)= \phi_*$). 
\end{proof}

We combine Theorem \ref{thm:PUSHPULLUpper} with Theorem \ref{thm:conductance} to obtain the following corollary that gives the upper bound on information dissemination using push-pull in terms of $\phi_{avg}$. %

\begin{corollary} \label{coro:pushpull}
The push-pull protocol achieves broadcast w.h.p. in $O(\frac{\mathcal{L} \log n} {\phi_{avg}})$ rounds in a network $G$, where $\phi_{avg}$ is the average weighted conductance of $G$ and $\mathcal{L}$ is the number of non-empty latency classes in $G$. 

\end{corollary}

\subsection{Tweaked Spanner Broadcast Algorithm}

In Section \ref{spanner_algo} we provide an algorithm that solves all-to-all information dissemination when each node knows the latencies of all its adjacent edges and message size is unbounded. The same algorithm can be naturally extended for the case where nodes do not know the adjacent latencies by first discovering the edge latencies and then running the algorithm as such.

In the case where both $D$ and $\Delta$ are known, each node broadcasts a request to each neighbor (sequentially) for $\Delta$ rounds and then waits up to $D$ rounds for a response to determine the adjacent edge's latency.
In Section \ref{sec:diameter}, we show the guess and double strategy for the case where just the diameter $D$ is unknown. As we can efficiently detect when information dissemination has completed correctly, we can use a similar strategy to estimate $\Delta$ if only $\Delta$ is unknown or alternatively guess the value of $D+\Delta$ if both $D$ and $\Delta$ are unknown. %
By arguments similar to Section~\ref{sec:diameter}, we show that the guessing and doubling strategy does not increase the overall time complexity. Therefore, we obtain an algorithm that solves information dissemination in $O((D+\Delta)\log^3n)$ time.

Additionally, if edge latencies are unknown, we can obtain similar results for the Pattern Broadcast algorithm (see Alg.~\ref{alg:PD}) as well by using the guess and double strategy.

\section{Unified Upper Bounds}
\label{sec:unified}

Combining the results shown above, we can run both push-pull and the spanner algorithm in parallel to obtain unified upper bounds for both the known and the unknown latencies cases. However, we point out that, for information dissemination, push-pull works with small message sizes whereas the spanner algorithm does not (because of its reliance on DTG). Also, exchanging messages with the help of the spanner does not have good robustness properties whereas push-pull is inherently quite robust.  For graphs that have small diameters, we can use the alternative pattern based algorithm as compared to the spanner based one. However, here we give our unified upper bounds based on the spanner algorithm of Section \ref{sec:spanner}.
 
\begin{theorem}
	\label{thm:upper}
	There exists a randomized gossip algorithm that solves the all-to-all information dissemination problem in $O(\min((D + \Delta)\log^3{n}, (\ell_*/\phi_*)\log n)$  time when latencies are not known and in $O(\min(D\log^3{n}, (\ell_*/\phi_*)\log n ))$ time when latencies are known.
\end{theorem}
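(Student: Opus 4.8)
The plan is to obtain each of the two quantities inside the minimum from a separately established algorithm, and then to combine them by parallel composition so that the running time automatically tracks the faster of the two. The first ingredient is the push-pull protocol, which by Theorem~\ref{thm:PUSHPULLUpper} disseminates in $O((\ell_*/\phi_*)\log n)$ rounds w.h.p.\ and, crucially, requires no knowledge of the adjacent latencies. The second ingredient is the spanner-based algorithm: when latencies are known it runs in $O(D\log^3 n)$ rounds by Theorem~\ref{thm:main2}, and when they are unknown we first spend $\tilde O(D+\Delta)$ rounds discovering the relevant adjacent latencies (Section~\ref{sec:upper_bounds}) and then invoke the same machinery, giving $O((D+\Delta)\log^3 n)$ rounds. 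These two ingredients account for the two terms in each of the claimed bounds, and the only difference between the known- and unknown-latency statements is which spanner-branch bound we plug in.

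To combine them, I would time-share the two executions on a single shared rumor set $R_v$: each node advances its push-pull step on even rounds and its spanner-algorithm step on odd rounds, so that each individual algorithm is slowed by at most a constant factor while every exchange of either algorithm contributes to the same $R_v$. Since none of $D$, $\Delta$, $\phi_*$, $\ell_*$ is known in advance, I would wrap this interleaved execution in a guess-and-double loop over a global time budget $t$, exactly as in Section~\ref{sec:diameter}: run the interleaved algorithm for $t$ rounds with the spanner branch's diameter estimate set accordingly, then invoke Termination$\_$Check$(t)$; if it fails, double $t$ and repeat. Because the budgets double, the total cost is a geometric series dominated by the smallest successful budget $t^*$, and $t^* = \Theta(\min\{\text{(spanner term)},\ (\ell_*/\phi_*)\log n\})$, since the global dissemination completes as soon as the faster of the two interleaved algorithms finishes.

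Correctness of the stopping rule follows from Lemma~\ref{termination}: no node halts before its rumor set equals that of every other node, and all nodes halt in the same round, so the termination check never produces a false positive and detects completion within one budget phase of it actually occurring. The overhead of each Termination$\_$Check$(t)$ is $O(t\log^2 n)$, which is subsumed by the budget, and the overall w.h.p.\ guarantee is obtained by a union bound over the (polylogarithmically many) doubling phases together with the w.h.p.\ success of push-pull and of the spanner construction. The main obstacle is precisely this stopping issue: because the push-pull bound holds only w.h.p.\ and all four graph parameters are unknown, we cannot simply run for a prescribed number of rounds, so we must lean on the guess-and-double together with Lemma~\ref{termination} to detect completion at the right moment, while arguing carefully that the doubling and the round-interleaving inflate the bound only by constant and polylogarithmic factors rather than producing the sum $O(T_1 + T_2)$ of the two running times.
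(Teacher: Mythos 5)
Your proposal takes essentially the same route as the paper, which simply runs push-pull and the spanner-based algorithm in parallel and invokes Theorems~\ref{thm:PUSHPULLUpper} and~\ref{thm:main2} (together with the $\tilde O(D+\Delta)$ latency-discovery step) for the two terms of the minimum. The paper gives no further detail beyond this one-line combination, so your additional care about interleaving, guess-and-double budgets, and Termination\_Check is a faithful (and somewhat more explicit) rendering of the same argument.
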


\begin{corollary}
	\label{thm:upperavg}
	There exists a randomized gossip algorithm that solves the all-to-all information dissemination problem in $O(\min((D + \Delta)\log^3{n}, (\mathcal{L}/\phi_{avg})\log n)$  time when latencies are unknown and in $O(\min(D\log^3{n}, (\mathcal{L}/\phi_{avg})\log n ))$ time when latencies are known.%
\end{corollary}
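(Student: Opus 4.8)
The plan is to derive this corollary directly from Theorem~\ref{thm:upper} by substituting the conductance relationship established in Theorem~\ref{thm:conductance}. Theorem~\ref{thm:upper} already supplies the running times $O(\min((D+\Delta)\log^3 n,\,(\ell_*/\phi_*)\log n))$ and $O(\min(D\log^3 n,\,(\ell_*/\phi_*)\log n))$ for the unknown- and known-latency cases, respectively. Only the conductance-dependent term $(\ell_*/\phi_*)\log n$ mentions $\phi_*$ and $\ell_*$, so the whole task reduces to bounding $\ell_*/\phi_*$ from above in terms of $\mathcal{L}/\phi_{avg}$.

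For this I would invoke the upper half of Theorem~\ref{thm:conductance}, namely $\phi_{avg} < \mathcal{L}\phi_*/\ell_*$. Rearranging yields $\ell_*/\phi_* < \mathcal{L}/\phi_{avg}$, and hence $(\ell_*/\phi_*)\log n < (\mathcal{L}/\phi_{avg})\log n$. The key structural observation is that $\min$ is monotone in each of its arguments, so enlarging the second argument can only increase the value of the minimum. Therefore
\[
\min\!\left((D+\Delta)\log^3 n,\,\tfrac{\ell_*}{\phi_*}\log n\right) \le \min\!\left((D+\Delta)\log^3 n,\,\tfrac{\mathcal{L}}{\phi_{avg}}\log n\right),
\]
and analogously with $D\log^3 n$ in place of $(D+\Delta)\log^3 n$ for the known-latency case. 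Since the algorithms of Theorem~\ref{thm:upper} already run within the left-hand side, they run within the (weaker) right-hand side as well, which is exactly the claimed bound. Thus the same algorithms witness the existence statement, and no new construction is needed.

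The argument is essentially an algebraic rewriting, so I do not anticipate a genuine obstacle; the only point demanding care is the \emph{direction} of the inequality drawn from Theorem~\ref{thm:conductance}. For an upper bound on the running time we need an upper bound on $\ell_*/\phi_*$, which is precisely what the right-hand inequality $\phi_{avg} < \mathcal{L}\phi_*/\ell_*$ provides. One must avoid accidentally using the left-hand inequality $\phi_*/(2\ell_*) < \phi_{avg}$, which only gives the \emph{lower} bound $\ell_*/\phi_* > 1/(2\phi_{avg})$; that half of the relationship is the relevant one for the matching lower-bound result (Corollary~\ref{coro:avglower}), not here.
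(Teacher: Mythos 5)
Your proposal is correct and matches the paper's (implicit) argument: the corollary is obtained from Theorem~\ref{thm:upper} by applying the upper half of Theorem~\ref{thm:conductance}, i.e.\ $\phi_{avg} < \mathcal{L}\phi_*/\ell_*$, to replace $\ell_*/\phi_*$ with the larger quantity $\mathcal{L}/\phi_{avg}$ inside the $\min$, exactly as the paper does for the analogous push-pull corollary. Your care about the direction of the inequality is well placed and correctly resolved.
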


\section{Conclusion}

We have presented two different concepts, namely the critical and the average weighted conductance, that characterize the bottlenecks in communication for weighted graphs.  We believe that these parameters will be useful for a variety of applications that depend on connectivity.

A question that remains is whether the running time of $O(D \log^3n)$ for information dissemination can be improved, e.g., using better spanner constructions or more efficient local broadcast to save the polylogarithmic factors.  (Recall that in the unweighted case, there are information dissemination protocols that run in $O(D + \polylog{n})$ time.) Another interesting direction would be the development of reliable robust fault-tolerant algorithms in this regard.

Another issue is whether we can reduce the number of \emph{incoming} messages in a round; recently, Daum et al. \cite {DBLP:journals/corr/DaumKM15} have considered such a more restricted model, yielding interesting results. It would also be interesting to look at the bounds where each node is only allowed $O(1)$ connections per round, whether initiated by the node itself or by its neighbor.

\section*{Acknowledgments}
This research was supported by AcRF Tier $1$ grant T1 251RES1719 (Adaptive Data Structures: Concurrent, Cache-Efficient, Distributed) and the Natural Sciences and Engineering Research Council of Canada (NSERC).
The authors would like to thank George Giakkoupis for the helpful conversations and useful ideas.

\appendix
\section{Appendix}
\subsection{The DTG Local Broadcast Protocol}
\label{app:dtg}

In this section, we describe in more detail the DTG protocol that was originally developed in~\cite{Haeupler:2013:SFD:2627817.2627868} as well as the $\ell$-DTG algorithm.

It is clear that the algorithm solves local broadcast because it keeps on contacting new neighbors until it has exchanged rumors with all of its neighbors. The author \cite{Haeupler:2013:SFD:2627817.2627868} makes use of binomial trees to derive the time complexity and better explain the working of the algorithm.

The key idea used for deriving the time complexity is to show that when information is propagated in a pipelined manner along the binomial trees (created on-the-fly), then for any node that is still active in the $i^{th}$ iteration, it has a binomial tree of order $2^{i}$ ($i$-tree of depth $i$: see Figure \ref{fig_itree}) rooted at it. Furthermore, it is shown that for any two different nodes that are still active in iteration $i$, their $i$-trees are vertex disjoint. Since an $i$-tree is formed by joining two $(i-1)$-trees, the growth rate of an $i$-tree is exponential which limits the number of iterations to $O(\log n)$. Also, each node on an average needs to contact $O(\log n)$ nodes ($O(i)$ nodes in the $i^{th}$ round). Thus, the overall complexity of the algorithm becomes $O(\log^2 {n})$. In our case, for $\ell$-DTG, the additional waiting time of $\ell$ increases the time complexity to $O(\ell \log^2 {n})$.

\begin{figure}[h]
\centering
\includegraphics[scale=0.6]{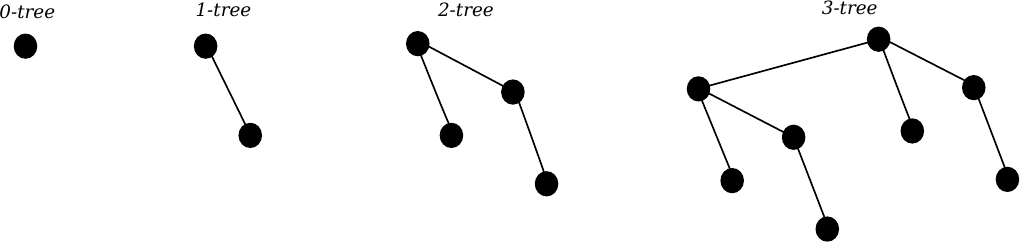}
\caption{$i$-trees for $i \in {0,1,2,3}$}
\label{fig_itree}
\end{figure}

The $i$-tree can be seen as witness structures that provides an explanation as to why a node was active in that particular iteration. 
The $i$-tree rooted at a particular node is built recursively as the rounds progress and essentially store the information about which other nodes communicated with one another in which particular round as viewed from the root node. For example, in Figure \ref{fig_5tree} , the labels on the edges denote the time in which the node of the higher level contacted the lower level node (as observed by the root node). The root contacts the nodes in first level in rounds according to their label, the nodes on the first level similarly contact the nodes in the second level in rounds according to their label and so on. This observation also helps in the realization of the key idea of a node being active in the $i^{th}$ round having an $i$-tree rooted at it. The nodes in the first level did not contact the root previously as they were busy contacting the nodes of the second level, the nodes of the second level did not contact nodes on the first level as they were busy contacting the nodes in the third level and so on.    %

\begin{figure}[h]
\centering
\includegraphics[scale=0.6]{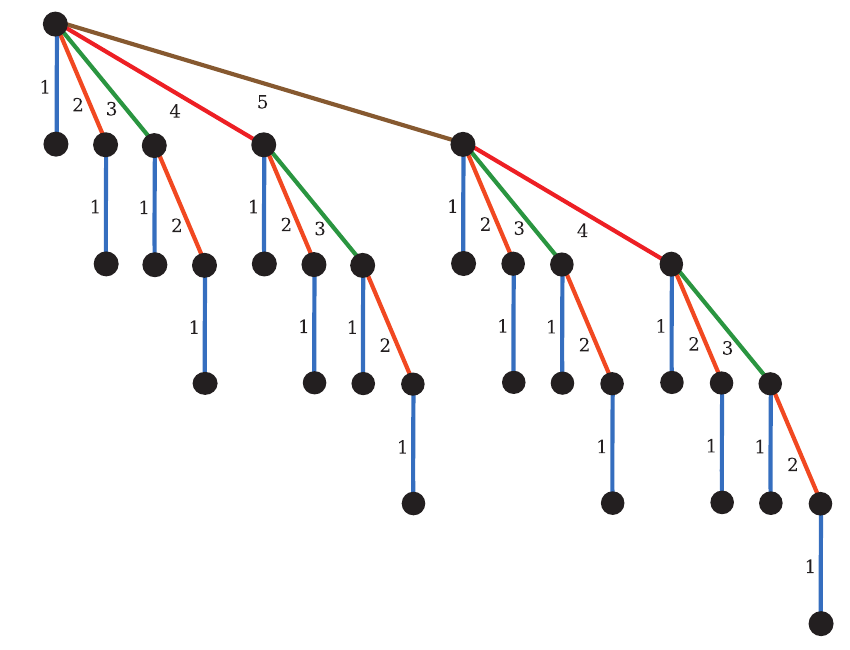}
\caption{$5$-tree with edge labels}
\label{fig_5tree}
\end{figure}

As shown in the pseudo code, in the initial PUSH sequence, the message is propagated in a decreasing order of connection round number (as observed by the root node: given by the labels on the edges of Figure \ref{fig_itree}), helping in pipe-lining the roots message to all other nodes of the $i$-tree. Similarly, during the initial PULL sequence the message from the nodes is pipelined up to the root. The subsequent PULL-PUSH sequence helps in maintaining the symmetry of the algorithm such that if node $u$ learns about node $v$, then node $v$ also learns about node $u$. Finally, the collection of rumors $R$ is updated to the union of rumors collected in the aforementioned sequences.

For $\ell$ being an integer $\ge$ 1, we run the modified DTG algorithm on a sub-graph of $G$, $G_{\ell}$, rather than on $G$, where $G_{\ell}$ contains only the edges of length up to $\ell$. Lets denote this algorithm as $\ell$-DTG. The algorithm is presented below and each node $v$ belonging to $G_{\ell}$ runs it in parallel. $\Gamma$($v$) can be considered  as the neighborhood of $v$ comprising of set of nodes that are node $v$'s $1$-hop neighbors.

\begin{algorithm}
\setlength{\columnsep}{10pt}
\hspace{0.2cm} $\ell$-DTG ($\ell$)
\begin{algorithmic}[1]
\State $R$ = $v$
\For{$i$ = 1 UNTIL $\Gamma$($v$)$\backslash R$ = $\phi$}
\State link to any new neighbor $u_i$ $\in$ $\Gamma$($v$)
\State $R'$ = $v$
\State PUSH : \For {$j$ = $i$ downto 1}
\State send rumors in $R'$ to $u_j$
\State wait for $\ell$ time to receive $u_j$'s rumors
\State add all received rumors to $R'$
\EndFor
\State PULL: \For {$j$ = 1 to $i$}
\State send rumors in $R'$ to $u_j$
\State wait for $\ell$ time to receive $u_j$'s rumors
\State add all received rumors to $R'$
\EndFor
\State $R''$ = $v$
\State perform PULL, PUSH with $R''$
\State $R$ = $R'$ $\cup$ $R''$ 
\EndFor

\end{algorithmic}
\caption{$\ell$-DTG}\label{alg:MDTG}
\end{algorithm}

\bibliographystyle{plain}
\bibliography{bibliography}

\end{document}